\pgfplotsset{compat = newest}
\newtheorem{theorem}{Theorem}
\theoremstyle{plain}
\newtheorem{definition}{Definition}
\newtheorem{lemma}{Lemma}
\newtheorem{proposition}{Proposition}
\numberwithin{equation}{section}
\newcommand{\naturals}{\mathbb{N}}
\newcommand{\reals}{\mathbb{R}}
\newcommand{\complexes}{\mathbb{C}}
\renewcommand{\Re}[1]{\operatorname{Re}{#1}}
\renewcommand{\Im}[1]{\operatorname{Im}{#1}}
\newcommand{\spec}[1]{\operatorname{spec}{#1}}
\newcommand{\proj}[1]{\mathsf{P}_{#1}}
\newcommand{\tr}[1]{\operatorname{tr}{#1}}
\newcommand{\id}[1]{\mathrm{id}_{#1}}
\newcommand{\comm}[2]{[#1,#2]}
\newcommand{\acomm}[2]{\{#1,#2\}}
\newcommand{\adj}{\prime}
\newcommand{\hadj}{*}
\newcommand{\iprod}[2]{\langle #1, #2 \rangle}
\newcommand{\hsiprod}[2]{\langle #1, #2 \rangle_{2}}
\renewcommand{\vec}[1]{\mathbf{#1}}
\newcommand{\hsnorm}[1]{\left\| #1 \right\|_{2}}
\newcommand{\matr}[1]{\mathbb{M}_{#1}(\complexes)}
\newcommand{\matrd}{\matr{d}}
\newcommand{\pos}[2]{\operatorname{\mathbf{P}}(#1,#2)}
\newcommand{\pose}[1]{\operatorname{\mathbf{P}}(#1)}
\newcommand{\cp}[2]{\operatorname{\mathbf{CP}}(#1,#2)}
\newcommand{\cpe}[1]{\operatorname{\mathbf{CP}}(#1)}
\newcommand{\cocpe}[1]{\operatorname{\mathbf{coCP}}(#1)}
\newcommand{\dece}[1]{\operatorname{\mathbf{D}}(#1)}
\newcommand{\transpose}{\mathrm{T}}
\begin{document}
\title[D-divisible quantum evolution families]{D-divisible quantum evolution families}
\author{Krzysztof Szczygielski}
\address[K. Szczygielski]
{Institute of Physics, Faculty of Physics, Astronomy and Informatics, Nicolaus Copernicus University, Grudzi\c{a}dzka 5/7, 87–100, Toruń, Poland}
\email[K. Szczygielski]{krzysztof.szczygielski@umk.pl}%

\begin{abstract}
We propose and explore a notion of \emph{decomposably divisible} (D-divisible) differentiable quantum evolution families on matrix algebras. This is achieved by replacing the complete positivity requirement, imposed on the propagator, by more general condition of decomposability. It is shown that such D-divisible dynamical maps satisfy a generalized version of Master Equation and are totally characterized by their time-local generators. Necessary and sufficient conditions for D-divisibility are found. Additionally, decomposable trace preserving semigroups are examined.
\end{abstract}

\maketitle

\section{Introduction}

The aim of this article is to define, construct and characterize a generalization of CP-divisible (i.e.~Markovian) evolution families, or quantum dynamical maps, on matrix algebras onto a certain subclass of much broader, however still mathematically manageable case of \emph{decomposable positive maps}. We restrict our attention to the case of \emph{decomposably divisible} families, i.e.~such maps $\Lambda_t$ on matrix algebra $\matrd$, which are divisible and which \emph{propagators} are trace preserving and decomposable on $\matrd$. Decomposability is a relatively simple, yet non-trivial generalization of \emph{complete positivity}, which in turn has been a well-characterized and motivated concept in quantum theory since 1970's (see \cite{Alicki2006a,Breuer2002,Rivas2012} and references within), traditionally used to model time evolution of quantum systems. In particular, CP-divisible families \cite{Lindblad1976,Gorini1976} has been granted a special attention, since CP-divisibility is commonly considered equivalent to Markovianity. We abandon this approach here in favor of D-divisibility, effectively obtaining a new subclass of non-Markovian evolution families (or \emph{weakly non-Markovian}, using terminology of \cite{Chruscinski2014}; see also \cite{Chruscinski2022}). We hope that such decomposable dynamical maps might be useful in future for description of physical systems outside a Markovian regime, for example influenced by more sophisticated quantum effects or to mirror the existence of higher-order correlations in the system.

The article is structured as follows. In section \ref{sec:Preliminaries} we provide some mathematical preliminaries, including notion of decomposable maps over algebra of complex matrices, as well as some basic description of dynamics of open quantum systems. The main part of the article is the section \ref{sec:DdivisibleQuantumDynamicalMaps}, devoted to \emph{D-divisible quantum evolution families}, where we formulate a necessary and sufficient conditions for D-divisibility expressed in terms of associated time-dependent generators. Construction of such is presented in Theorem \ref{thm:MainTheorem}, which is our main result. In section \ref{sec:Semigroups} we remark on a semigroup case and present some results related to their asymptotic behavior (Theorems \ref{thm:AsymptCPSemigroupCondition} and \ref{thm:AsymptCPSemigroupConditionSpectrum}). Finally, section \ref{sec:Examples} presents two simple examples in dimension 2 and 3.

\section{Preliminaries}
\label{sec:Preliminaries}

First, we provide some basic preliminaries including notions of decomposability of positive maps and divisibility (and Markovianity) of quantum dynamics. We will be working a lot with \emph{Hilbert-Schmidt} bases spanning space $\matrd$, i.e.~bases orthonormal with respect to the \emph{Hilbert-Schmidt inner product} (also called \emph{Frobenius inner product}) on $\matrd$, given via
\begin{equation}
	\hsiprod{a}{b} = \tr{a^\hadj b} = \sum_{i,j=1}^{d} \overline{a_{ij}}b_{ij}, \qquad a,b \in \matrd .
\end{equation}
Amongst all such bases, one consisting of strictly \emph{Hermitian matrices} will be granted a special attention. Namely, let $\{F_i\}_{i=1}^{d^2}$ be a Hilbert-Schmidt basis subject to constraints
\begin{equation}\label{eq:HSonb}
	F_{i} = F_{i}^{\hadj}, \quad \tr{F_i} = \delta_{i d^2}, \quad F_{d^2} = \frac{1}{\sqrt{d}} I.
\end{equation}
Such basis may be seen as a generalization of both Pauli and Gell-Mann matrices and may be constructed in similar way (see appendix \ref{subsect:HSbasis} for details and for some more properties). By construction, matrices $F_i$ can be either \emph{non-diagonal and symmetric}, \emph{antisymmetric} or \emph{diagonal} (where all $F_i$ s.t.~$i<d^2$ are traceless). For any $d$, there is exactly $d(d-1)/2$ of both symmetric and antisymmetric matrices and $d$ diagonal ones. We reserve symbol $F_i$ for such a basis exclusively throughout the whole article and introduce an accompanying enumeration, such that $F_i$ will be:
\begin{itemize}
	\item \emph{symmetric} for $1 \leqslant i \leqslant \frac{1}{2}d(d-1)$,
	\item \emph{antisymmetric} for $1 + \frac{1}{2}d(d-1) \leqslant i \leqslant d(d-1)$,
	\item \emph{diagonal} for $1 + d(d-1) \leqslant i \leqslant d^2$.
\end{itemize}
The following composition rule will be of importance: for every $F_i$, $F_j$ we have
\begin{equation}\label{eq:FiFjCompRule}
	F_i F_j = \sum_{k=1}^{d^2} \xi_{ijk} F_k,
\end{equation}
where coefficients $\xi_{ijk}$ may be computed as
\begin{equation}\label{eq:XiCoeff}
	\xi_{ijk} = \hsiprod{F_k}{F_i F_j} = \tr{F_i F_j F_k}
\end{equation}
and are expressible in terms of so-called \emph{structure constants}, which characterize $\matrd$ as a Lie algebra. It is then a simple exercise to check that the following identities hold:
\begin{equation}\label{eq:XiProp}
	\xi_{ijk} = \xi_{kij} = \xi_{jki}, \quad \overline{\xi_{ijk}} = \xi_{jik}.
\end{equation}

\subsection{Decomposable maps}

Let $\mathscr{A}$, $\mathscr{B}$ be ordered, unital *-algebras and let $\mathscr{A}^{+}$, $\mathscr{B}^{+}$ stand for convex cones of positive elements of $\mathscr{A}$ and $\mathscr{B}$ respectively. We say that a bounded linear map $\phi : \mathscr{A} \to \mathscr{B}$ is \emph{positive}, or $\phi\in\pos{\mathscr{A}}{\mathscr{B}}$, if $\phi(\mathscr{A}^{+}) \subseteq \mathscr{B}^{+}$, i.e.~it maps positive elements into positive. Moreover, if an extended map $\phi_n = \id{}\otimes \phi$, acting on $\mathbb{M}_n (\mathscr{A}) \simeq \matr{n} \otimes \mathscr{A}$ via prescription $\phi_n ([a_{ij}]) = [\phi(a_{ij})]$, $a_{ij}\in\mathscr{A}$, is also positive for some $n$, we say $\phi$ is \emph{$n$-positive}; if in addition it is $n$-positive for all $n\in\naturals$, map $\phi$ is called \emph{completely positive} (CP), or $\phi\in\cp{\mathscr{A}}{\mathscr{B}}$. Both sets $\pos{\mathscr{A}}{\mathscr{B}}$, $\cp{\mathscr{A}}{\mathscr{B}}$ are then convex cones in space of all linear maps from $\mathscr{A}$ to $\mathscr{B}$.

Structure of CP maps is characterized by means of the famous \emph{Stinespring dilation theorem} stating that for every $\phi \in \cp{\mathscr{A}}{B(H)}$ for $\mathscr{A}$ a unital C*-algebra and $H$ a Hilbert space, exists some auxiliary Hilbert space $K$ such that $\phi$ admits a (nonunique) representation as a composition
\begin{equation}
	\phi(a) = V^* \pi (a) V, \qquad a\in\mathscr{A},
\end{equation}
for some bounded operator $V : H\to K$ and *-homomorphism $\pi : \mathscr{A}\to B(K)$. If both $\mathscr{A}$ and $H$ in question are finite-dimensional, i.e.~$\phi$ acts between algebras of matrices, $\phi : \matr{n} \to \matr{m}$, one defines the so-called \emph{Choi matrix} of $\phi$, 
\begin{equation}\label{eq:ChoiMatrix}
	C_\phi = \sum_{i,j = 1}^{n^2} E_{ij} \otimes \phi(E_{ij}),
\end{equation}
where $E_{ij}$ are matrix units (i.e.~they contain 1 in place $(i,j)$ and 0s everywhere else) spanning $\matr{n}$. Mapping $\phi \mapsto C_\phi$ is a bijection from $B(\matr{n}, \matr{m})$ into $\matr{n}\otimes\matr{m}\simeq \matr{mn}$ known as the \emph{Choi-Jamio\l{}kowski isomorphism}. Then, Stinespring dilation theorem is equivalent to the famous \emph{Choi's theorem} \cite{Choi_1975}, which stays that $\phi$ is CP iff (if and only if) it is $n$-positive, which is then true iff $C_\phi \in \matr{mn}^+$. Furthermore, as a corollary, it can be shown that for every $\phi \in \cp{\matr{n}}{\matr{m}}$ there exists a set of matrices $\{X_i\}_{i=1}^{mn} \subset \matr{m\times n}$ such that
\begin{equation}
	\phi(a) = \sum_{i=1}^{mn} X_{i} a X_{i}^{*}, \quad a\in\matr{n},
\end{equation}
which is the \emph{Kraus decomposition} of $\phi$ (matrices $X_i$ are called \emph{Kraus operators}) associated with $\phi$. The notion of complete positivity proved itself to be very robust concept, both in mathematics and physics. Unfortunately, although the complete characterization of CP maps is known due to results by Stinespring, Choi and Kraus, we lack such in case of merely positive maps and finding it has been a long-standing goal in mathematics for many years.
\vskip\baselineskip
Throughout this paper, we will be focusing on a special sub-class of positive maps, the so-called \emph{decomposable maps}, which may be seen as a conceptually simple, however still nontrivial generalization of CP maps. Moreover, from now on we assume all maps under consideration to be exclusively \emph{endomorphisms} over matrix slgebra $\matrd$ and we tweak our notation accordingly by writing simply $B(\matrd)$, $\pose{\matrd}$ and $\cpe{\matrd}$ for appropriate maps on this algebra.
\vskip\baselineskip
Let $\theta : \matrd \to \matrd$ denote the \emph{transposition map}, i.e.
\begin{equation}
	\theta (a) = a^{\transpose}, \quad [a_{ij}] \mapsto [a_{ji}],
\end{equation}
with respect to some chosen basis in $\complexes^d$. It is easy to see that $\theta$ is a \emph{positive map}, however it is not CP (in fact, it fails to be even 2-positive). Transposition allows to define yet another class of positive maps, the so-called \emph{completely copositive maps}. One says that a map $\phi \in \pose{\matrd}$ is \emph{completely copositive} (coCP), if its composition with $\theta$ is CP, or that there exists some $\tilde{\phi}\in\cpe{\matrd}$ such that
\begin{equation}
    \phi = \theta \circ \tilde{\phi}.
\end{equation}
The marriage of notions of both complete positivity and copositivity determines a class of \emph{decomposable maps}, which will remain at our focus throughout this article:

\begin{definition}
Let $\varphi \in \pose{\matrd}$. We say $\varphi$ is \emph{decomposable}, $\varphi \in \dece{\matrd}$, if it can be expressed as a convex combination of CP and coCP map, i.e.~if there exist $\phi, \psi \in \cpe{\matrd}$ such that
\begin{equation}\label{eq:DecomposableMapDecomposition}
	\varphi = \phi + \theta \circ \psi .
\end{equation}
\end{definition}
Decomposable maps may be also characterized in terms of a following necessary and sufficient condition. Let $\varphi\in\pose{\matrd}$ and let $C_\varphi \in \matr{d^2}$ be its corresponding Choi matrix. By identification $\matr{d^2} \simeq \matrd \otimes \matrd$ we introduce a linear map of \emph{partial transposition} (with respect to second factor) $\Gamma$ on $\matr{d^2}$, defined by its action on simple tensors as
\begin{equation}
	a\otimes b \mapsto (a\otimes b)^{\Gamma} = a \otimes b^{\transpose}.
\end{equation}
Define also two convex cones
\begin{equation}
	V_d = \matr{d^2}^{+}, \quad V^{\Gamma}_{d} = \{\rho : \rho^{\Gamma} \in \matr{d^2}^{+}\}.
\end{equation}
Then, a following characterization of decomposable maps applies \cite{Woronowicz1976,Chruscinski2018}:

\begin{theorem}
\label{thm:DecCondition}
Map $\varphi$ on $\matrd$ is decomposable iff
\begin{equation}
	\forall \, \rho\in V_{d}\cap V_{d}^{\Gamma} : \tr{C_\varphi \rho} \geqslant 0.
\end{equation}
\end{theorem}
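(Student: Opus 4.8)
The plan is to recast decomposability of $\varphi$ as the membership of its Choi matrix in a concrete convex cone, and then deduce the stated inequality as an instance of cone duality (the bipolar theorem) in the real vector space of Hermitian elements of $\matr{d^2}$ equipped with the trace pairing.

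First I would translate the Definition through the Choi--Jamio\l{}kowski isomorphism. Since a positive map is automatically Hermiticity preserving, $C_\varphi$ is a Hermitian matrix. By Choi's theorem, $\phi \in \cpe{\matrd}$ iff $C_\phi \in V_d$, and a direct computation from \eqref{eq:ChoiMatrix} gives $C_{\theta\circ\psi} = (C_\psi)^{\Gamma}$, so $\theta\circ\psi$ is coCP iff $C_\psi \in V_d$, i.e.~iff $C_{\theta\circ\psi} \in V_d^{\Gamma}$. Consequently, an expression $\varphi = \phi + \theta\circ\psi$ with $\phi,\psi\in\cpe{\matrd}$ exists if and only if $C_\varphi \in V_d + V_d^{\Gamma}$. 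Thus the theorem reduces to the purely convex-geometric equivalence $C_\varphi \in V_d + V_d^{\Gamma} \iff \tr{C_\varphi\rho}\geqslant 0$ for all $\rho\in V_d\cap V_d^{\Gamma}$.

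The next step is to identify the dual cones. Working in the real space $\mathcal{H}$ of Hermitian matrices in $\matr{d^2}$ with the inner product $\langle A,B\rangle = \tr{AB}$, and writing $K^{*} = \{B : \langle A,B\rangle\geqslant 0 \text{ for all } A\in K\}$, one uses self-duality of the positive cone, $V_d^{*} = V_d$, together with the fact that $\Gamma$ restricts to a self-adjoint involution on $\mathcal{H}$ (the identity $\tr{A^{\Gamma}B} = \tr{AB^{\Gamma}}$ is checked on simple tensors), whence $V_d^{\Gamma} = \Gamma(V_d)$ is also self-dual, $(V_d^{\Gamma})^{*} = V_d^{\Gamma}$. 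Since $(K_1+K_2)^{*} = K_1^{*}\cap K_2^{*}$ for arbitrary cones, this yields $(V_d + V_d^{\Gamma})^{*} = V_d\cap V_d^{\Gamma}$. The easy implication of the reduced equivalence is now immediate: if $C_\varphi = \sigma_1+\sigma_2$ with $\sigma_1\in V_d$, $\sigma_2\in V_d^{\Gamma}$, then for $\rho\in V_d\cap V_d^{\Gamma}$ we have $\tr{\sigma_1\rho}\geqslant 0$ and $\tr{\sigma_2\rho} = \tr{\sigma_2^{\Gamma}\rho^{\Gamma}}\geqslant 0$. For the converse, if $C_\varphi\notin V_d+V_d^{\Gamma}$ one separates it from this closed convex cone by a hyperplane, obtaining $\rho_0\in(V_d+V_d^{\Gamma})^{*} = V_d\cap V_d^{\Gamma}$ with $\tr{C_\varphi\rho_0}<0$; equivalently, $(V_d+V_d^{\Gamma})^{**} = V_d+V_d^{\Gamma}$.

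I expect the main obstacle to be the one analytic input of the argument, namely that $V_d + V_d^{\Gamma}$ is \emph{closed} — precisely what is needed to invoke the bipolar/separation theorem. I would handle this by observing that both $V_d$ and $V_d^{\Gamma}$ lie in the half-space $\{\tr{\cdot}\geqslant 0\}$ and each meets the affine hyperplane $\{\tr{\cdot} = 1\}$ in a compact convex set; hence $\mathrm{conv}\big((V_d\cap\{\tr{\cdot}=1\})\cup(V_d^{\Gamma}\cap\{\tr{\cdot}=1\})\big)$ is a compact convex set avoiding the origin, and the cone it generates, which is exactly $V_d+V_d^{\Gamma}$, is therefore closed. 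The remaining verifications (Hermiticity preservation of positive maps, the formula $C_{\theta\circ\psi} = (C_\psi)^{\Gamma}$, and self-adjointness of $\Gamma$) are routine, and assembling the pieces gives the claimed characterization.
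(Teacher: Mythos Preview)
Your argument is correct and is essentially the standard duality proof (going back to St{\o}rmer): translate decomposability to $C_\varphi \in V_d + V_d^{\Gamma}$ via Choi--Jamio{\l}kowski, identify $(V_d + V_d^{\Gamma})^{*} = V_d \cap V_d^{\Gamma}$ using self-duality of the positive cone and self-adjointness of $\Gamma$, and close with the bipolar theorem after checking that $V_d + V_d^{\Gamma}$ is closed. The closedness step, which you flagged as the only non-trivial point, is handled correctly by the compact-base argument.

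Note, however, that the paper does \emph{not} supply its own proof of this theorem: it is quoted as a known characterization with references to Woronowicz and to Chru\'{s}ci\'{n}ski--Marciniak--Rutkowski, and is used as a black box (in particular for the numerical SDP checks described after Theorem~\ref{thm:MainTheorem2}). So there is no in-paper argument to compare against; what you have written is a self-contained justification of the cited result. One small remark: your reduction tacitly assumes $C_\varphi$ is Hermitian so that the problem lives in the real space $\mathcal{H}$ where you set up the duality. This is harmless here because the paper defines decomposability only for $\varphi \in \pose{\matrd}$, hence Hermiticity preserving, but it is worth making explicit.
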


In practice, verifying if a given linear map is decomposable by finding exact decomposition into a combination \eqref{eq:DecomposableMapDecomposition} of its CP and coCP part may be a hopeless task, even in low dimensional algebras. Instead, condition stated in theorem \ref{thm:DecCondition} can be checked quite sufficiently by means of a semidefinite programming (SDP) routines, as is also the case in this article.

Every decomposable map $\varphi$ is in addition \emph{Hermiticity preserving}, i.e.~it satisfies
\begin{equation}
	\varphi (a)^\hadj = \varphi (a^\hadj)
\end{equation}
for all $a\in\matrd$. It is known from works by St{\o}rmer and Woronowicz \cite{Woronowicz1976,Stormer_1963} that cones of positive and decomposable maps in $B(\matr{n},\matr{m})$ are equal if $mn\leqslant 6$, i.e.~every positive map is decomposable in such case; in particular, all positive endomorphisms on $\matr{2}$ are decomposable, as well as positive maps between $\matr{2}$ and $\matr{3}$. The question of exact conditions for decomposability in higher-dimensional algebras remains unanswered, however counter-examples are known in literature already for maps on $\matr{3}$.

\subsection{Quantum evolution families}

Here we provide some basic description of evolution in theory of open quantum systems. Let $\rho_t$ stand for a time-dependent \emph{density matrix} of some $d$-dimensional quantum system, i.e.~let
\begin{equation}
	\rho_t \in \matrd^{+}, \quad \tr{\rho_t} = 1 \qquad \text{for all } \, t\in\reals_+ .
\end{equation}
A family of linear, time-parametrized maps $\{\Lambda_t : t\in\reals_+ \}$ on $\matrd$, providing an evolution of density matrix via equation
\begin{equation}
	\rho_t = \Lambda_t (\rho_0)
\end{equation}
for some initial $\rho_0$, will be called the \emph{quantum evolution family}, or \emph{quantum dynamical map}. In order to maintain the probabilistic interpretation of $\rho_t$ as density matrix at every $t\geqslant 0$, it is required for $\Lambda_t$ to be \emph{trace preserving} (i.e.~$\tr{\Lambda_t (\rho)} = \tr{\rho}$) and \emph{positive}. By physical reasoning, one often demands not merely a positivity, but rather complete positivity of $\Lambda_t$ (one can find appropriate explanation e.g. in \cite{Alicki2006a,Breuer2002,Rivas2012} and numerous other sources). This restriction, however, will be abandoned in this paper in favor of \emph{decomposability}.
\begin{definition}
We say that quantum evolution family $\{\Lambda_t : t\in\reals_+ \}$ is \emph{divisible} in some interval $[t_1, t_2] \subseteq \reals_+$ if for every $t \in [t_1, t_2]$ and every $s \in [t_1, t]$ there exists a map $V_{t,s}$ satisfying
\begin{equation}\label{eq:qdpropagator}
	\Lambda_t = V_{t,s} \circ \Lambda_s .
\end{equation}
If in addition $V_{t,s}$ is a \emph{positive} or \emph{completely positive} map for every $s \leqslant t$, then $\{\Lambda_t : t\in\reals_+ \}$ is called \emph{P-divisible} or \emph{CP-divisible} in this interval, respectively. 
\end{definition}
Such two-parameter family of maps $\{V_{t,s} : s\leqslant t\}$ is then called the \emph{propagator} of evolution family (as $V_{t,s}$ \emph{propagates} $\Lambda_s$ forward in time). If $\Lambda_t$ is invertible then it is immediate that $V_{t,s} = \Lambda_t \circ \Lambda_{s}^{-1}$. CP-divisibility is commonly identified with Markovianity.

It is most frequently assumed, that the dynamical map in question satisfies the time-local \emph{Master Equation} in two equivalent forms
\begin{equation}\label{eq:ME}
	\frac{d\Lambda_t}{dt} = L_t \circ \Lambda_t \qquad \text{or} \qquad \frac{d\rho_t}{dt} = L_t (\rho_t),
\end{equation}
for some map $L_t \in B(\matrd)$, called a \emph{generator}. All dynamical maps obeying \eqref{eq:ME} are divisible. By celebrated results of Lindblad, Gorini, Kossakowski and Sudarshan \cite{Lindblad1976,Gorini1976}, a necessary and sufficient condition for an invertible map $\Lambda_t$ subject to Master Equation \eqref{eq:ME} to be \emph{CP-divisible} is that $L_t$ must be of a form
\begin{equation}\label{eq:standardForm}
	L_t (\rho) = -i \comm{H_t}{\rho} + \sum_{j,k=1}^{d^2 -1} a_{jk}(t) \left( F_j \rho F_{k} - \frac{1}{2}\acomm{F_{k}F_j}{\rho}\right),
\end{equation}
where $H_t$ is Hermitian and $[a_{jk}(t)]\in\matr{d^2 -1}^+$ for all $t\in\reals_+$ ($\acomm{a}{b} = ab+ba$ is the anticommutator). Equation \eqref{eq:standardForm} defines so-called \emph{standard form} (also \emph{Lindblad form} or \emph{LGKS form}) of $L_t$. On physics grounds, $H_t$ is identified with system's Hamiltonian (which includes Lamb-shift corrections; here one puts $\hbar = 1$ for brevity) and matrix $[a_{jk}(t)]$, being commonly called the \emph{Kossakowski matrix}, expresses the ``non-unitary'' part of the evolution due to interactions between system and the environment. If generator $L_t$ is time-independent, i.e.~$L_t = L$, then a solution of Master Equation \eqref{eq:ME} is a one-parameter contraction semigroup $\{e^{tL} : t\in\reals_+\}$ of trace preserving CP maps, known as the \emph{Quantum Dynamical Semigroup}.

\section{D-divisible quantum evolution families}
\label{sec:DdivisibleQuantumDynamicalMaps}

\subsection{Notion of D-divisibility} 
\label{sec:NotionOfDdivisibility}

In this section we propose and elaborate on the notion of \emph{D-divisibility}. Let $\{\Lambda_t : t\in\reals_+\}$ again stand for a family of positive and trace preserving maps on $\matrd$. Then, we define D-divisibility of this family in a manner analogous to CP-divisibility by demanding that the propagator is decomposable:

\begin{definition}
We say that a family $\{\Lambda_t : t\in\reals_+\}$ is D-divisible (decomposably divisible) in interval $[t_1, t_2] \subseteq\reals_+$, iff it is divisible in $[t_1, t_2]$ and its associated propagator $V_{t,s}$ is trace preserving and decomposable for all $s,t \in [t_1, t_2]$, $s\leqslant t$, i.e.
\begin{equation}
	V_{t,s} = X_{t,s} + \theta \circ Y_{t,s},
\end{equation}
for some maps $X_{t,s}, Y_{t,s} \in \cpe{\matrd}$ continuously depending on $(t,s)$.
\end{definition}
We stress here that although map $V_{t,s}$ is required to be trace preserving as a whole, neither of maps $X_{t,s}$, $Y_{t,s}$ is \emph{a priori} expected to be so:
\begin{proposition}\label{prop:Vproperties}
Let a family $\{\Lambda_t : t \in\reals_+\}$ be D-divisible in $[t_1, t_2] \subseteq\reals_+$ and let $\Lambda_0 = \id{}$. Then, the following hold for all $t\in[t_1, t_2]$ and all $s \in [t_1, t]$:
\begin{enumerate}
	\item \label{item:VpropId} $V_{t,t} = \id{}$,
	\item \label{item:VpropXttId} $X_{t,t} = \id{}$,
	\item \label{item:VpropYttZero} $Y_{t,t} = 0$,
	\item \label{item:VpropDecTP} $\Lambda_t\in\dece{\matrd}$ and is trace preserving,
	\item \label{item:VpropTP} $X_{t,s} + Y_{t,s}$ is trace preserving.
\end{enumerate}
\end{proposition}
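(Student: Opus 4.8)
The plan is to establish the items essentially in the order listed, exploiting that D-divisibility forces $V_{t,s}$ to be an invertible family (via $\Lambda_t$ being invertible near $t=0$, at least on a maximal initial interval) together with the composition law $\Lambda_t = V_{t,s}\circ\Lambda_s$.

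First I would prove \eqref{item:VpropId}: setting $s=t$ in $\Lambda_t = V_{t,s}\circ\Lambda_s$ gives $\Lambda_t = V_{t,t}\circ\Lambda_t$. Since $\Lambda_0 = \id{}$ and, by continuity of the family together with invertibility at $t=0$, each $\Lambda_t$ is invertible on $[t_1,t_2]$ (we may shrink the interval if needed, or simply invoke that the propagator is well-defined by hypothesis so $V_{t,s} = \Lambda_t\circ\Lambda_s^{-1}$), composing on the right with $\Lambda_t^{-1}$ yields $V_{t,t} = \id{}$.

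Next, \eqref{item:VpropXttId} and \eqref{item:VpropYttZero}: from $V_{t,t} = X_{t,t} + \theta\circ Y_{t,t} = \id{}$, I would pass to Choi matrices. The Choi matrix of $\id{}$ is (a multiple of) the maximally entangled projector, which is supported on the symmetric subspace; the Choi matrix of $X_{t,t}$ is positive semidefinite, and the Choi matrix of $\theta\circ Y_{t,t}$ equals the partial transpose of a positive semidefinite matrix. A clean way is to test against a suitable separable state: for a product vector $|x\rangle|\bar x\rangle$ one gets $\langle x|X_{t,t}(|x\rangle\langle x|)|x\rangle$ from the CP part and $\langle x|Y_{t,t}(|x\rangle\langle x|^{\transpose})^{\transpose}|x\rangle$-type nonnegative contributions from the coCP part, both $\geq 0$, and their sum must equal the corresponding value for $\id{}$; alternatively, I would use that $\id{}$ is an extreme ray of the decomposable cone whose only decomposition $\phi+\theta\circ\psi$ with $\phi,\psi$ CP is the trivial one $\phi=\id{}$, $\psi=0$ — this is where the main work lies. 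Concretely, $C_{\id{}} = d\,|\Omega\rangle\langle\Omega|$ with $|\Omega\rangle$ maximally entangled; $C_{X_{t,t}}\geq 0$ and $C_{Y_{t,t}}^{\Gamma}\geq 0$, and $C_{X_{t,t}} + C_{Y_{t,t}}^{\Gamma} = d|\Omega\rangle\langle\Omega|$. Sandwiching between the projector $Q$ onto $(\ker|\Omega\rangle\langle\Omega|)$ forces $Q C_{X_{t,t}} Q = 0$ hence $C_{X_{t,t}}$ is supported on $\complexes|\Omega\rangle$, so $X_{t,t} = \lambda\,\id{}$ for some $\lambda\in[0,1]$; symmetrically applying $\Gamma$ (noting $|\Omega\rangle\langle\Omega|^{\Gamma}$ is the flip/swap operator, which has eigenvalues $\pm 1$) and positivity of $C_{Y_{t,t}}^{\Gamma}$ pins down $\lambda = 1$, $Y_{t,t} = 0$. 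I expect this extremality argument — showing the identity admits no nontrivial decomposable splitting — to be the principal obstacle, everything else being formal.

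Then \eqref{item:VpropDecTP}: taking $s=0$ in the divisibility relation gives $\Lambda_t = V_{t,0}\circ\Lambda_0 = V_{t,0}$, which is decomposable (it is $X_{t,0}+\theta\circ Y_{t,0}$ with both summands CP) and trace preserving by the defining hypothesis on the propagator. Finally \eqref{item:VpropTP}: trace preservation of $V_{t,s}$ means $\tr{V_{t,s}(a)} = \tr{a}$ for all $a$, i.e. $\tr{X_{t,s}(a)} + \tr{\theta(Y_{t,s}(a))} = \tr{a}$; since $\tr{\theta(b)} = \tr{b^{\transpose}} = \tr{b}$, this reads $\tr{(X_{t,s}+Y_{t,s})(a)} = \tr{a}$ for all $a\in\matrd$, which is exactly the assertion that $X_{t,s}+Y_{t,s}$ is trace preserving. (Continuity of $X_{t,s},Y_{t,s}$ in $(t,s)$ is used only implicitly, to guarantee the identities above hold pointwise on the whole interval.)
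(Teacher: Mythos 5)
Your overall route matches the paper for items (1), (4) and (5): take $s=t$, take $s=0$, and use $\tr{\theta(b)}=\tr{b}$ together with linearity of the trace; on these points the two arguments coincide (the paper, like you, implicitly cancels $\Lambda_t$ in item (1) without dwelling on invertibility, so your extra caveat is at worst harmless). The real divergence is in items (2)--(3): the paper disposes of them in one sentence --- since $V_{t,t}=\id{}$, it is ``a decomposable map with its coCP part being zero'' --- i.e.\ it simply asserts that the identity admits only the trivial splitting, whereas you attempt to prove this uniqueness. That instinct is correct, and it is the only place where anything nontrivial happens: a priori $X_{t,s},Y_{t,s}$ are just whatever maps occur in the given decomposition, and uniqueness at $s=t$ is exactly what makes the assertion $X_{t,t}=\id{}$, $Y_{t,t}=0$ meaningful.

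However, your concrete execution of that step has a genuine error. You assert $C_{Y_{t,t}}^{\Gamma}\geqslant 0$; the correct bookkeeping is the other way around: $Y_{t,t}\in\cpe{\matrd}$ gives $C_{Y_{t,t}}\geqslant 0$, while the Choi matrix of the coCP part is $C_{\theta\circ Y_{t,t}}=(C_{Y_{t,t}})^{\Gamma}$, which need not be positive semidefinite. Consequently the sandwich with the projector $Q$ onto the orthogonal complement of $|\Omega\rangle$ does not force $QC_{X_{t,t}}Q=0$, because $Q\,(C_{Y_{t,t}})^{\Gamma}Q$ has no definite sign. Likewise the test vector $|x\rangle|\bar{x}\rangle$ is the wrong one: it gives $\|x\|^{4}$ on the identity side, not $0$, so it yields no constraint. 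The repair is the separable test done with the right vectors: for product vectors $|x\rangle|y\rangle$ orthogonal to $|\Omega\rangle$ (i.e.\ $\langle\bar{x},y\rangle=0$) one has $\langle x\otimes y|C_{X_{t,t}}|x\otimes y\rangle\geqslant 0$ and $\langle x\otimes y|(C_{Y_{t,t}})^{\Gamma}|x\otimes y\rangle=\langle x\otimes\bar{y}|C_{Y_{t,t}}|x\otimes\bar{y}\rangle\geqslant 0$, while their sum vanishes; since such product vectors span the orthogonal complement of $|\Omega\rangle$ and $C_{X_{t,t}}\geqslant 0$, this gives $C_{X_{t,t}}=\lambda d\,|\Omega\rangle\langle\Omega|$, i.e.\ $X_{t,t}=\lambda\,\id{}$. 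The cleanest finish is then algebraic rather than via the swap operator's spectrum as you sketched it: $\theta\circ Y_{t,t}=(1-\lambda)\id{}$ forces $Y_{t,t}=(1-\lambda)\theta$, and $\theta$ is not completely positive, so $\lambda=1$, giving $X_{t,t}=\id{}$ and $Y_{t,t}=0$ (equivalently, $C_{Y_{t,t}}=(1-\lambda)V$ with $V$ the swap, whose eigenvalue $-1$ kills $1-\lambda$). With that correction your argument is complete and in fact supplies the justification that the paper's one-line treatment of items (2)--(3) leaves implicit.
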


\begin{proof}
Property \ref{item:VpropId} follows immediately from divisibility condition \eqref{eq:qdpropagator} after taking $s=t$. As a consequence $V_{t,t}$ is a  decomposable map with its coCP part being zero, so properties \ref{item:VpropXttId} and \ref{item:VpropYttZero} follow. For property \ref{item:VpropDecTP}, see that \eqref{eq:qdpropagator} also yields $\Lambda_t = V_{t,0} \circ \Lambda_0 = V_{t,0}$ and so $\Lambda_t$ is decomposable and trace preserving. Remaining property \ref{item:VpropTP} then follows from linearity of trace and trace preservation of transposition map after simple algebra. 
\end{proof}

\subsection{Generators of decomposable dynamics}
\label{sec:GeneratorsOfDecomposableDynamics}

In this section we present our main result, i.e.~a necessary and sufficient condition for a quantum evolution family to be D-divisible expressed in terms of properties of the associated generator. Before that we briefly discuss some additional notions. Our construction of generator (given in a proof of theorem \ref{thm:MainTheorem}) will be heavily depending on so-called \emph{operator sum representation} of linear maps on $\matrd$, including the transposition map. Namely, if $T$ is any linear endomorphism on algebra $\matrd$, its action on $a\in\matrd$ may be always represented in a form
\begin{equation}
	T(a) = \sum_{i,j=1}^{d^2} t_{ij} F_i a F_j
\end{equation}
for some matrix of coefficients $[t_{ij}]\in\matr{d^2}$. In addition, $T\in\cpe{\matrd}$ iff $[t_{ij}]\in\matr{d^2}^+$. Similarly, the transposition map $\theta$ admits an operator-sum representation of a form
\begin{equation}\label{eq:ThetaRep}
	\theta (a) = a^{\transpose} = \sum_{i=1}^{d^2} \theta_i F_i a F_{i}
\end{equation}
for coefficients $\theta_i \in \{-1, \, 1\}$ given as
\begin{equation}\label{eq:ThetaIdef}
	\theta_i = \begin{cases} -1, \quad \text{for } 1+\frac{1}{2}d(d-1) \leqslant i \leqslant d(d-1), \\ +1, \quad \text{otherwise.} \end{cases}
\end{equation}
Proof of this statement is available in the appendix \ref{subsect:OperatorSumRep}. We will use coefficients $\theta_i$ given above to define a particular 4-index \emph{geometric tensor}, which will be of crucial importance later on. Recall that basis matrices $F_i$ obey composition rule \eqref{eq:FiFjCompRule} for coefficients $\xi_{ijk} = \tr{F_i F_j F_k}$.
\begin{definition}
We define the 4-index \emph{geometric tensor} $\hat{\mathbf{\Omega}} = [\Omega_{\mu\nu}^{jk}]$, where $1 \leqslant j,k \leqslant d^2$, $1\leqslant\mu\nu\leqslant d^2-1$, by setting
\begin{equation}\label{eq:OmegaTensor}
	\Omega_{\mu\nu}^{jk} = \sum_{i=1}^{d^2} \theta_i \xi_{ij\mu} \overline{\xi_{ik\nu}}.
\end{equation}
\end{definition}
One can easily show (see lemma \ref{lemma:OmegaAlternateForm} in section \ref{app:AdditionalResults} of the Appendix) that $\mathbf{\hat{\Omega}}$ admits a somewhat more compact and robust representation as
\begin{equation}\label{eq:OmegaTensorRep}
	\Omega_{\mu\nu}^{jk} = \hsiprod{F_{k}^{\transpose} F_\mu}{F_{\nu}^{\transpose}F_j},
\end{equation}
which will become useful. Now we are ready to formulate our main result:

\begin{theorem}\label{thm:MainTheorem}
Let a family $\{\Lambda_t : t \in \reals_+ \}$ of maps on $\matrd$ satisfy an ordinary differential equation
\begin{equation}\label{eq:LambdaODE}
	\frac{d\Lambda_t}{dt} = L_t \circ \Lambda_t, \quad \Lambda_0 = \id{},
\end{equation}
where $L_t \in B(\matrd)$ and function $t\mapsto L_t$ is continuous everywhere in interval $[t_1, t_2]\subseteq\reals_+$. Then, family $\{\Lambda_t : t \in \reals_+\}$ is D-divisible and trace preserving in this interval iff there exists a map $M_t$ on $\matrd$ in standard form, Hermitian matrix $K_t \in \matrd$ and matrix $[\omega_{jk}(t)] \in \matr{d^2}^+$ such that
\begin{equation}\label{eq:LtDecomposition}
    L_t = M_t + N_t, \quad t \in [t_1, t_2],
\end{equation}
where $N_t$ admits a form
\begin{equation}\label{eq:NtDecomposition}
	N_t (\rho) = - i \comm{K_{t}}{\rho} + \sum_{\mu,\nu =1}^{d^2 -1}\eta_{\mu\nu}(t) \left( F_\mu \rho F_{\nu} - \frac{1}{2}\acomm{F_{\nu}F_{\mu}}{\rho} \right)
\end{equation}
for coefficients
\begin{equation}\label{eq:EtaMatrixOmega}
	\eta_{\mu\nu} (t) = \sum_{j,k=1}^{d^2} \Omega_{\mu\nu}^{jk} \omega_{jk}(t).
\end{equation}
\end{theorem}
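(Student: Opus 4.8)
The strategy is to characterize $D$-divisibility through the propagator $V_{t,s}$ and its generator, running parallel to the classical LGKS argument but with the cone $\cpe{\matrd}$ replaced by $\dece{\matrd}$. First I would observe that, assuming $\Lambda_t$ is invertible in $[t_1,t_2]$ (which follows from continuity of $L_t$ and the ODE \eqref{eq:LambdaODE}, since $\Lambda_t$ solves a linear system with $\Lambda_{t_1}$ near the identity — a point to be stated carefully), the propagator is $V_{t,s} = \Lambda_t \circ \Lambda_s^{-1}$ and satisfies its own ODE $\partial_t V_{t,s} = L_t \circ V_{t,s}$ with $V_{s,s} = \id{}$. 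Hence $V_{t,s}$ is trace preserving for all $s \le t$ iff $L_t$ is trace annihilating for all $t$, and the heart of the matter is: $V_{t,s}$ is decomposable for all $s \le t$ in the interval iff $L_t$ has the claimed structure \eqref{eq:LtDecomposition}–\eqref{eq:EtaMatrixOmega}.

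For the \emph{sufficiency} direction, given $L_t = M_t + N_t$ with $M_t$ in standard form and $N_t$ as in \eqref{eq:NtDecomposition}, I would show directly that $V_{t,s}$ is decomposable. The idea is that $M_t$ generates a CP-divisible flow (by LGKS), so it suffices to handle $N_t$. Using the representation \eqref{eq:OmegaTensorRep} of $\hat{\mathbf\Omega}$ together with \eqref{eq:EtaMatrixOmega}, the dissipative part of $N_t$ with coefficients $\eta_{\mu\nu}(t)$ should be rewritten, via the operator-sum representation \eqref{eq:ThetaRep} of $\theta$ and the composition rule \eqref{eq:FiFjCompRule}, as something of the form $\theta \circ (\text{CP generator built from }[\omega_{jk}(t)])$ up to a compensating Hermitian term absorbed into $K_t$; concretely, $\sum_{\mu\nu}\eta_{\mu\nu}(t)\,F_\mu \rho F_\nu = \big(\sum_{jk}\omega_{jk}(t)\,F_j \rho^{\transpose} F_k\big)^{\transpose}$, which is the transpose of a CP map since $[\omega_{jk}(t)] \ge 0$. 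Then an infinitesimal/Lie–Trotter argument — writing $V_{t+h,t} = \id{} + hL_t + o(h)$ and noting that $\id{} + h N_t$ is, to first order, a convex-type combination of a CP map and $\theta$ composed with a CP map — shows that $V_{t,s}$, as a limit of products of such near-identity decomposable pieces, stays decomposable (decomposability being preserved under composition with CP maps and under limits, since $\dece{\matrd}$ is a closed cone). One must check that the anticommutator/Hamiltonian corrections needed to make $\id{}+hN_t$ genuinely trace preserving and of the right infinitesimal form are exactly the ones written in \eqref{eq:NtDecomposition}; this is where the identities \eqref{eq:XiProp} and the definition \eqref{eq:OmegaTensor} of $\Omega$ get used.

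For the \emph{necessity} direction, assume $V_{t,s}$ is decomposable and trace preserving for all $s \le t$. Trace preservation of $V_{t,s}$ for all $s$ gives, by differentiating at $s = t$, that $L_t$ is trace annihilating, i.e.\ $\tr{L_t(\rho)} = 0$. Decomposability of $V_{t,t+h}^{-1}\circ$-type increments is more delicate: the plan is to use that $V_{t+h,t} \in \dece{\matrd}$ for small $h>0$ and $V_{t+h,t} = \id{} + hL_t + o(h)$, so $L_t = \lim_{h\to 0^+} h^{-1}(V_{t+h,t} - \id{})$ is a limit of (rescaled) ``decomposable minus identity'' operators — the tangent cone to $\dece{\matrd}$ at $\id{}$. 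The main obstacle, and the technical core of the paper, is to identify this tangent cone explicitly: one must show that any such $L_t$ (Hermiticity-preserving, trace-annihilating, and a one-sided derivative of a decomposable curve through $\id{}$) necessarily decomposes as $M_t + N_t$ with the stated form. I would do this by expanding $V_{t+h,t} = X_{t+h,t} + \theta\circ Y_{t+h,t}$ with $X_{t,t} = \id{}$, $Y_{t,t}=0$ (Proposition \ref{prop:Vproperties}), writing $X$ and $Y$ in operator-sum form with positive coefficient matrices, differentiating, and showing the derivative of the $Y$-part produces exactly the $N_t$ structure with $[\omega_{jk}(t)] := \tfrac{d}{dh}\big|_{0^+}[\text{coeffs of }Y_{t+h,t}] \ge 0$ (positivity surviving the limit since the coefficient matrices are PSD and vanish at $h=0$), the contraction \eqref{eq:EtaMatrixOmega} against $\hat{\mathbf\Omega}$ arising precisely from composing $\theta$'s operator-sum \eqref{eq:ThetaRep} with the $F_j(\cdot)F_k$ terms and projecting onto the traceless part, while the leftover (the derivative of the $X$-part, plus first-order Hamiltonian/trace corrections) is a standard-form generator $M_t$ by the LGKS theorem. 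Closing the gap between ``$L_t$ is a one-sided tangent vector to $\dece{\matrd}$ at $\id{}$'' and ``$L_t$ has the exact algebraic form \eqref{eq:LtDecomposition}'' — in particular ensuring the positivity of $[\omega_{jk}(t)]$ and that no information is lost in passing from the increment to its derivative — is the step I expect to require the most care.
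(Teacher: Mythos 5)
Your plan follows essentially the same route as the paper in the direction ``D-divisible $\Rightarrow$ form of $L_t$'': expand $V_{t+h,t}=X_{t+h,t}+\theta\circ Y_{t+h,t}$ with operator-sum representations whose coefficient matrices are positive semidefinite, differentiate at $h=0^{+}$, obtain the standard-form part $M_t$ from the $X$-derivative and the part $N_t$ from the $\theta\circ Y$-derivative, with the contraction against $\hat{\mathbf{\Omega}}$ coming from composing \eqref{eq:ThetaRep} with the $F_j(\cdot)F_k$ terms and with $[\omega_{jk}(t)]\geqslant 0$ surviving the limit (the paper isolates this last point as Lemma \ref{lemma:OmegaMatrixPositive}); that half of your proposal is sound and matches the paper.

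In the converse direction (``form $\Rightarrow$ D-divisible'') there are two problems. First, your concrete identity $\sum_{\mu\nu}\eta_{\mu\nu}(t)F_\mu\rho F_\nu=\bigl(\sum_{jk}\omega_{jk}(t)F_j\rho^{\transpose}F_k\bigr)^{\transpose}$ is not correct: the right-hand side is $\theta\circ\Psi\circ\theta$ with $\Psi$ completely positive, which is again a CP map, so taken literally you would be rebuilding a CP rather than a coCP dissipator. The correct rewriting (paper, appendix \ref{app:NtDerivation}) has no inner transpose, $\sum_{\mu,\nu=1}^{d^2-1}\eta_{\mu\nu}(t)F_\mu\rho F_\nu=\bigl(\sum_\alpha C_{\alpha,t}\rho C_{\alpha,t}^{\hadj}\bigr)^{\transpose}-D_t\rho-\rho D_t^{\hadj}+e(t)\rho$, and the $D_t$ terms are precisely what produce the additional commutator/anticommutator corrections in \eqref{eq:NtFormula}. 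Second, and more seriously, your limit-of-products argument uses near-identity factors $\id{}+hL_t$ (or $\id{}+hN_t$) which are only ``decomposable to first order''. Such Euler factors do not belong to $\dece{\matrd}$ (already $\id{}+hM_t$ fails to be CP for $h>0$ because of the anticommutator term), and closedness of the cone only yields decomposability of the limit if every approximant genuinely lies in the cone; as stated, this step would fail unless you additionally control the $O(h^2)$ errors accumulated over $\sim 1/h$ factors. The paper circumvents this by splitting $L_t=\tilde L^{(0)}_t+\tilde L^{(1)}_t+\tilde L^{(2)}_t$ and showing each exponential factor is \emph{exactly} in the cone: $\exp(\tau\tilde L^{(0)}_t)$ is the explicit CP group $\rho\mapsto e^{\tau W}\rho\,e^{\tau W^{\hadj}}$, $\exp(\tau\tilde L^{(1)}_t)$ is CP and $\exp(\tau\tilde L^{(2)}_t)$ is decomposable (propositions \ref{prop:ExpOfCPmap} and \ref{prop:ExpOfcoCPmap}), after which the Lie--Trotter formula together with the time-splitting representation \eqref{eq:VtsTSF} and closedness of $\dece{\matrd}$ give decomposability of $V_{t,s}$. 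Your preliminary remarks on invertibility and on trace preservation being equivalent to $\tr{L_t(\rho)}=0$ are fine and consistent with the paper.
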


\begin{proof}
The proof will follow general guidelines of \cite[Theorem 4.2.1]{Rivas2012}. We are interested in computing $\frac{d\rho_t}{dt}$, where the derivative is to be calculated ``from above'', i.e.
\begin{equation}
	\frac{d\rho_t}{dt} = \lim_{\epsilon\searrow 0}\frac{\Lambda_{t+\epsilon}(\rho_0) - \Lambda_t (\rho_0)}{\epsilon} = \lim_{\epsilon\searrow 0}\frac{V_{t+\epsilon,t}-\id{}}{\epsilon}\circ \Lambda_t (\rho_0) = L_t (\rho_t),
\end{equation}
which comes via divisibility, $\Lambda_{t+\epsilon} = V_{t+\epsilon,t}\circ\Lambda_t$ and $\Lambda_0 = \id{}$. We therefore have
\begin{equation}
	L_t = \lim_{\epsilon\searrow 0}\frac{V_{t+\epsilon,t}-\id{}}{\epsilon}.
\end{equation}
Let us apply the D-divisibility condition, i.e.~put
\begin{equation}
	V_{t+\epsilon, t} = X_{t+\epsilon,t} + \theta \circ Y_{t+\epsilon,t}
\end{equation}
for some continuous functions $(t,s)\mapsto X_{t,s},Y_{t,s}\in\cpe{\matrd}$, $s\leqslant t$. Maps $X_{t,s}$ and $Y_{t,s}$, being completely positive, admit operator-sum representations
\begin{equation}
	X_{t,s}(\rho) = \sum_{j,k=1}^{d^2} x_{jk}(t,s) F_j \rho F_k, \qquad Y_{t,s}(\rho) = \sum_{j,k=1}^{d^2} y_{jk}(t,s) F_j \rho F_k,
\end{equation}
where $\rho\in\matrd$, for some matrices $[x_{jk}(t,s)],[y_{jk}(t,s)]\in\matr{d^2}^+$, also continuously depending on $(t,s)$. Similarly, the transposition map $\theta$ admits a representation \eqref{eq:ThetaRep}, i.e.
\begin{equation}
	\theta(\rho) = \rho^\transpose = \sum_{i=1}^{d^2} \theta_i F_i \rho F_i
\end{equation}
where $\theta_i$ are given in \eqref{eq:ThetaIdef}. Therefore, the expression for $L_t$, using composition rule \eqref{eq:FiFjCompRule} and properties \eqref{eq:XiProp}, is
\begin{align}
	&L_t (\rho ) = \lim_{\epsilon\searrow 0}\frac{1}{\epsilon}\left[X_{t+\epsilon, t}(\rho) + Y_{t+\epsilon,t}(\rho )^\transpose - \rho\right] \\
	&= \lim_{\epsilon\searrow 0}\frac{1}{\epsilon}\left[ \sum_{j,k=1}^{d^2} x_{jk}(t+\epsilon,t) F_j \rho F_k + \sum_{j,k,l=1}^{d^2} \theta_l y_{jk}(t+\epsilon,t) F_l F_j \rho F_k F_l - \rho \right] \nonumber \\
	&= \lim_{\epsilon\searrow 0}\frac{1}{\epsilon}\left[ \sum_{j,k=1}^{d^2} x_{jk}(t+\epsilon,t) F_j \rho F_k + \sum_{j,k,l=1}^{d^2}\sum_{\mu,\nu=1}^{d^2} \theta_l  y_{jk}(t+\epsilon,t) \xi_{lj\mu}\overline{\xi_{lk\nu}} F_\mu \rho F_\nu - \rho \right] \nonumber \\
	&= \lim_{\epsilon\searrow 0}\frac{1}{\epsilon}\left[ \sum_{j,k=1}^{d^2} x_{jk}(t+\epsilon,t) F_j \rho F_k + \sum_{\mu,\nu=1}^{d^2} \sum_{j,k=1}^{d^2}\Omega_{\mu\nu}^{jk}  y_{jk}(t+\epsilon,t) F_\mu \rho F_\nu - \rho \right].\nonumber
\end{align}
Let now
\begin{equation}
	z_{\mu\nu}(t,s) = \sum_{j,k=1}^{d^2}\Omega_{\mu\nu}^{jk}  y_{jk}(t,s).
\end{equation}
It is easy to check that matrix $[z_{\mu\nu}(t,s)]_{\mu\nu}\in\matr{d^2-1}$ is Hermitian for every $(t,s)$, however is not positive semidefinite in general. Next, we subtract from both summations terms with $\mu,\nu=d^2$ and obtain, by $F_{d^2} = \frac{1}{\sqrt{d}}I_d$,
\begin{equation}
	L_t (\rho ) = \lim_{\epsilon\searrow 0}\left[W_{t,\epsilon}\rho + E_{t,\epsilon}\rho + \rho E_{t,\epsilon}^{\hadj} + \sum_{\mu,\nu=1}^{d^2-1} w_{\mu\nu}(t+\epsilon, t) F_\mu \rho F_\nu \right],
\end{equation}
where we introduced
\begin{subequations}
	\begin{equation}\label{eq:wMatrixDec}
		w_{\mu\nu}(t,s) = x_{\mu \nu}(t, s) + z_{\mu \nu}(t, s),
	\end{equation}
	\begin{equation}
		W_{t,\epsilon} = \left[\frac{1}{d}w_{d^2 d^2}(t+\epsilon,t)- 1 \right] I_d,
	\end{equation}
	\begin{equation}
		E_{t,\epsilon} = \frac{1}{\sqrt{d}} \sum_{\mu=1}^{d^2-1} w_{\mu d^2}(t+\epsilon, t) F_\mu .
	\end{equation}
\end{subequations}
Now, we define new time-dependent coefficients $g_{\mu\nu}(t)$ by setting
\begin{subequations}
	\begin{equation}
		g_{d^2d^2}(t) = \lim_{\epsilon\searrow 0}\frac{1}{\epsilon}\left[\frac{1}{d}w_{d^2 d^2}(t+\epsilon,t)- 1 \right],
	\end{equation}
	\begin{equation}
		g_{\mu\nu}(t) = \lim_{\epsilon\searrow 0} \frac{1}{\epsilon} w_{\mu \nu}(t+\epsilon, t), \quad 1\leqslant\mu,\nu\leqslant d^2 -1,
	\end{equation}
\end{subequations}
where existence of all limits is assured by differentiability of $\Lambda_t$, so our expression for $L_t (\rho)$ becomes
\begin{align}
	L_t (\rho) &= g_{d^2 d^2}(t) \rho + E_t\rho + \rho E_{t}^{\hadj} + \sum_{\mu,\nu=1}^{d^2-1} g_{\mu\nu}(t) F_\mu \rho F_\nu
\end{align}
for $E_t = \frac{1}{\sqrt{d}}\sum_{\mu=1}^{d^2}g_{\mu d^2}(t) F_\mu$. Introducing two new matrices
\begin{subequations}
	\begin{equation}
		A_t = \frac{1}{2}\left(E_t + E_{t}^{\hadj}\right) + \frac{1}{2}\gamma_{d^2 d^2}(t) I_d ,
	\end{equation}
	\begin{equation}
		J_t = -\frac{1}{2i} \left( E_t - E_{t}^{\hadj}\right),
	\end{equation}
\end{subequations}
we obtain
\begin{equation}
	L_t (\rho) = -i \comm{J_t}{\rho} + \acomm{A_t}{\rho} + \sum_{\mu,\nu=1}^{d^2-1} g_{\mu\nu}(t)F_\mu \rho F_\nu.
\end{equation}
We demand $V_{t,s}$ to obey the \emph{trace preservation} condition, which means that $L_t$ must nullify the trace, $\tr{L_t (\rho)} = 0$ regardless of $\rho$. This applied to our expression yields, after some algebra involving cyclic property of trace,
\begin{equation}
	A_t = -\frac{1}{2}\sum_{\mu,\nu = 1}^{d^2-1} g_{\mu\nu}(t) F_\nu F_\mu .
\end{equation}
By inserting back we therefore end up with a form
\begin{equation}\label{eq:LtRho}
	L_t(\rho) = -i\comm{J_t}{\rho} + \sum_{\mu,\nu = 1}^{d^2-1} g_{\mu\nu}(t) \left( F_\mu \rho F_\nu - \frac{1}{2}\acomm{F_\nu F_\mu}{\rho}\right),
\end{equation}
which despite its visual resemblance is \emph{not} the standard form, since matrix $[g_{\mu\nu}(t)]_{\mu\nu}$ is not positive semi-definite in general. However, formula \eqref{eq:wMatrixDec} allows to split coefficients $g_{\mu\nu}(t)$ into a sum of expressions defined solely via either the CP or the coCP part of the propagator, namely
\begin{equation}
	g_{\mu\nu}(t) = \gamma_{\mu\nu}(t) + \eta_{\mu\nu}(t),
\end{equation}
where
\begin{equation}
	\gamma_{\mu\nu}(t) = \lim_{\epsilon\searrow 0}\frac{1}{\epsilon} x_{\mu\nu}(t+\epsilon, t), \qquad \eta_{\mu\nu}(t) = \lim_{\epsilon\searrow 0}\frac{1}{\epsilon} \sum_{j,k=1}^{d^2} \Omega_{\mu\nu}^{jk} y_{jk}(t+\epsilon, t).
\end{equation}
In similar fashion, we have $J_t = H_t + K_t$ where
\begin{equation}
	H_t = \frac{i}{2\sqrt{d}} \left( \sum_{\mu=1}^{d^2-1}\gamma_{\mu d^2}(t) F_\mu - \sum_{\mu=1}^{d^2-1} \overline{\gamma_{\mu d^2}(t)} F_\mu \right)
\end{equation}
and $K_t$ has an identical structure, with $\eta_{\mu d^2}(t)$ in place of $\gamma_{\mu d^2}(t)$. It is then evident that expression \eqref{eq:LtRho} may be rewritten as a sum of two maps, $L_t = M_t + N_t$, acting on $\rho$, where
\begin{equation}\label{eq:Nt}
	N_t(\rho) = -i\comm{K_t}{\rho} + \sum_{\mu,\nu = 1}^{d^2-1} \eta_{\mu\nu}(t) \left( F_\mu \rho F_\nu - \frac{1}{2}\acomm{F_\nu F_\mu}{\rho}\right)
\end{equation}
and $M_t$ is of the same structure, with $H_t$ replacing $K_t$ and $\gamma_{\mu\nu}(t)$ in place of $\eta_{\mu\nu}(t)$. By direct check, matrices $H_t$ and $K_t$ are Hermitian and complete positivity of map $X_{t,s}$ yields both matrices $[x_{\mu\nu}(t,s)]$ and $[\gamma_{\mu\nu}(t)]$ to be positive semidefinite, i.e.~map $M_t$ is in standard form. It remains to show that coefficients $\eta_{\mu\nu}(t)$ are as claimed. We have
\begin{equation}
	\eta_{\mu\nu}(t) =\sum_{j,k=1}^{d^2} \Omega_{\mu\nu}^{jk} \lim_{\epsilon\searrow 0}\frac{1}{\epsilon}  y_{jk}(t+\epsilon, t).
\end{equation}
As we show in lemma \ref{lemma:OmegaMatrixPositive} in the Appendix, the above limiting procedure under the summation defines a positive semidefinite matrix for all $t\in[t_1, t_2]$, i.e.~we have
\begin{equation}
	\omega_{jk}(t) = \lim_{\epsilon\searrow 0}\frac{1}{\epsilon}  y_{jk}(t+\epsilon, t), \quad [\omega_{jk}(t)]\in\matr{d^2}^{+}
\end{equation}
and $\eta_{\mu\nu}(t)$ admits a form \eqref{eq:EtaMatrixOmega}. This proves sufficiency. To show necessity, we start with re-expressing $N_t$, basing on expression \eqref{eq:Nt}, as
\begin{align}\label{eq:NtFormula}
	N_t (\rho) = &-i \comm{K_t}{\rho} + \left( \sum_{\alpha=1}^{d^2} C_{\alpha,t} \rho C_{\alpha,t}^{\hadj} \right)^{\transpose} - \frac{1}{2} \comm{D_t - D_{t}^{\hadj}}{\rho} \\
	&- \frac{1}{2} \sum_{\alpha=1}^{d^2}\acomm{C_{\alpha,t}^{\hadj}C_{\alpha,t}}{\rho} \nonumber
\end{align}
which is achieved by: (1) expressing $\eta_{\mu\nu}(t)$ via \eqref{eq:EtaMatrixOmega}, (2) expanding the geometric tensor $\Omega^{jk}_{\mu\nu}$ according to \eqref{eq:OmegaTensor}, (3) applying the operator-sum representation \eqref{eq:ThetaRep} of transposition map $\theta$, (4) expressing $[\omega_{jk}(t)]\in\matr{d^2}^{+}$ as $\omega_{jk}(t) = \sum_{\alpha} c_{j\alpha}(t) \overline{c_{k\alpha}(t)}$ for some new matrix $[c_{ij}(t)]$ and finally (5) substituting
\begin{equation}
	C_{\alpha,t} = \sum_{j=1}^{d^2} c_{\alpha j}(t) F_j, \qquad D_t = \sum_{j,k=1}^{d^2} \omega_{jk}(t) A_k F_j,
\end{equation}
for $A_k = \sum_{l} \theta_l \overline{\xi_{lkd^2}}F_l$ (see the derivation in section \ref{app:NtDerivation} in the Appendix). The matrix $D_t - D_{t}^{\hadj}$ is clearly skew-Hermitian, so it is of a form $D_t - D_{t}^{\hadj} = -i E_t$ for some Hermitian $E_t$. Now, recall $M_t$ was in standard form, so matrix $[\gamma_{\mu\nu}(t)]$ is positive semidefinite, i.e.~it may be cast into a form
\begin{equation}
	\gamma_{\mu\nu}(t) = \sum_{i=1}^{d^2-1} g_{\mu i}(t) \overline{g_{\nu i}(t)}
\end{equation}
for some matrix $[g_{ij}(t)]\in\matr{d^2-1}$. Then, by defining $G_{\alpha,t} = \sum_i g_{\alpha i}(t) F_i$ we can rewrite $M_t$ as
\begin{equation}
	M_t (\rho) = -i \comm{H_t}{\rho} + \sum_{\alpha} \left(G_{\alpha,t} \rho G_{\alpha,t}^{\hadj} - \frac{1}{2}\acomm{G_{\alpha,t}^{\hadj}G_{\alpha,t}}{\rho}\right),
\end{equation}
which is sometimes referred to as the \emph{second standard form} of a generator. All of this allows to rewrite expression for $L_t$ as
\begin{equation}
	L_t = \tilde{L}_{t}^{(0)} + \tilde{L}_{t}^{(1)} + \tilde{L}_{t}^{(2)}
\end{equation}
where individual parts $\tilde{L}_{t}^{(i)}$ are defined via
\begin{subequations}
\begin{equation}
	\tilde{L}_{t}^{(0)}(\rho) = -i \comm{H_t + K_t + E_t}{\rho} - \frac{1}{2}\sum_{\alpha} \acomm{G_{\alpha,t}^{\hadj}G_{\alpha,t} + C_{\alpha,t}^{\hadj}C_{\alpha,t}}{\rho},
\end{equation}
\begin{equation}\label{eq:Lt1}
	\tilde{L}_{t}^{(1)}(\rho) = \sum_{\alpha} G_{\alpha,t}\rho G_{\alpha,t}^{\hadj},
\end{equation}
\begin{equation}\label{eq:Lt2}
	\tilde{L}_{t}^{(2)}(\rho) = \left(\sum_{\alpha} C_{\alpha,t}\rho C_{\alpha,t}^{\hadj}\right)^{\transpose} = \theta \left( \sum_{\alpha} C_{\alpha,t}\rho C_{\alpha,t}^{\hadj} \right).
\end{equation}
\end{subequations}
Now, select an increasing sequence $(\tau_{j})_{j=0}^{n} \subset [s,t]$ of instants such that $\tau_0 = s$ and $\tau_n = t$. Then, we can express the propagator $V_{t,s}$ in a form
\begin{equation}\label{eq:VtsTSF}
	V_{t,s} = \lim_{\max{|\tau_{j+1}-\tau_j|\to 0}}{\prod_{j=n-1}^{0}e^{(\tau_{j+1}-\tau_j)L_{\tau_j}}},
\end{equation}
i.e.~we approximate the exact propagator by a composition of \emph{semigroups}; this is known as the \emph{time-splitting formula} \cite{Rivas2012}. Denote $\tau_{j+1}-\tau_j = \Delta_j$. Applying decomposition \eqref{eq:LtDecomposition} we have, by \emph{Lie-Trotter product formula},
\begin{equation}\label{eq:ExpDeltaLtau}
	e^{\Delta_j L_{\tau_j}} = \exp{\left(\Delta_j \sum_{k=0}^{2}\tilde{L}_{\tau_j}^{(k)}\right)} = \lim_{n\to\infty}\left( \prod_{k=0}^{2}\exp{\frac{\Delta_j}{n}\tilde{L}_{\tau_j}^{(k)}} \right)^{n}.
\end{equation}
We now have to specify properties of three maps $\exp{\frac{\Delta_j}{n}\tilde{L}_{t}^{(k)}}$ for $k = 0$, $1$ and $2$:

\begin{enumerate}
	\item Case $k = 0$. Let us define 
\begin{equation}
	W = \frac{\Delta_j}{n}\left[ -i(H_t+K_t + E_t) - \frac{1}{2}\sum_{\alpha=1}^{d^2}\left( G_{\alpha,t}^{\hadj}G_{\alpha,t} + C_{\alpha,t}^{\hadj}C_{\alpha,t} \right) \right]
\end{equation}
for fixed $t$, $j$ and a mapping $\xi \mapsto f_\xi \in \cpe{\matrd}$ by setting
\begin{equation}
	f_\xi(\rho) = e^{\xi W} \rho e^{\xi {W}^\hadj}, \qquad \rho \in \matrd .
\end{equation}
Then, by direct calculation one can easily check that we have
\begin{equation}
	\frac{d}{d\xi}f_\xi(\rho) = \frac{\Delta_j}{n}\tilde{L}_{t}^{(0)} (f_\xi(\rho)),
\end{equation}
i.e.~the identity
\begin{equation}\label{eq:fXi}
	f_\xi = \exp{\frac{\xi\Delta_j}{n}\tilde{L}_{t}^{(0)}}
\end{equation}
holds for all $\xi\in\reals$, i.e.~$\{f_\xi : \xi\in\reals\}$ is a group of completely positive maps. In particular, $\exp{\frac{\Delta_j}{n}\tilde{L}_{t}^{(0)}} = f_1$ is CP.
\item Case $k=1$. Note that $\tilde{L}^{(1)}_{t}$ defined in \eqref{eq:Lt1} is a CP map (being in its Kraus form). Therefore $\exp{\frac{\Delta_j}{n}\tilde{L}_{t}^{(1)}}$ is also CP due to proposition \ref{prop:ExpOfCPmap} (see appendix \ref{app:DecomposableMaps}).
\item Case $k=2$. Finally, $\tilde{L}^{(2)}_{t}$ given via \eqref{eq:Lt2} is clearly a coCP map. Then, by virtue of proposition \ref{prop:ExpOfcoCPmap} (appendix \ref{app:DecomposableMaps}), the remaining map $\exp{\frac{\Delta_j}{n}\tilde{L}_{t}^{(2)}}$ is decomposable.
\end{enumerate}
In the result, the map appearing under the limit in expression \eqref{eq:ExpDeltaLtau} is decomposable for every $n$ (as a composition); this shows $e^{\Delta_j L_{\tau_{j}}}$ is also decomposable, since it is a limit of a sequence of decomposable maps in closed cone $\dece{\matrd}$. This very same fact then shows that $V_{t,s}$ given in \eqref{eq:VtsTSF} is also decomposable. Finally, one checks by direct calculation that $L_t = M_t + N_t$ nullifies the trace, i.e.~$\tr{L_t (\rho)} = 0$. This yields that a family $\{e^{\tau L_t} : \tau \in\reals_+\}$ must be trace preserving for every choice of $t\in\reals_+$; in consequence, every map $e^{\Delta_j L_{\tau_j}}$ in decomposition \eqref{eq:VtsTSF} is also trace preserving and so is the whole propagator $V_{t,s}$. This concludes the proof.
\end{proof}

We furnish our result with the following equivalent statement. Recall that, as a finite dimensional vector space, $\matrd$ is isomorphic to its algebraic dual $\matrd^\prime$ with duality pairing expressed in terms of the \emph{trace},
\begin{equation}
	\matrd^\prime\times\matrd \ni (f,a) = \tr{b_f a},
\end{equation}
where a mapping $f \mapsto b_f \in \matrd$ is a bijection. Let $\phi$ be a linear map on $\matrd$. Then, there exists another linear map $\phi^\prime$ on $\matrd$ such that
\begin{equation}
	\tr{a\,\phi(b)} = \tr{\phi^\prime (a)b}, \quad a,b \in \matrd,
\end{equation}
which we call \emph{dual} to $\phi$ (with a slight abuse of terminology). We have:

\begin{theorem}
\label{thm:MainTheorem2}
Family $\{\Lambda_t : t\in\reals_+\}$ of linear maps on $\matrd$, subject to equation \eqref{eq:LambdaODE} in interval $[t_1, t_2]\subseteq\reals_+$, is D-divisible if and only if there exists a Hermitian matrix $S_t \in \matrd$ and map $\varphi_t \in \dece{\matrd}$ such that the generator $L_t$ admits the form
\begin{equation}
	L_t = -i \comm{S_t}{\cdot\,} + \varphi_t - \frac{1}{2}\acomm{\varphi_{t}^{\prime}(I)}{\cdot\,}.
\end{equation}
\end{theorem}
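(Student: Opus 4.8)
The plan is to show that Theorem~\ref{thm:MainTheorem2} is merely a repackaging of Theorem~\ref{thm:MainTheorem}, obtained by absorbing the ``dissipative bookkeeping'' terms into a single decomposable map and its dual. The starting point is the decomposition $L_t = M_t + N_t$ of Theorem~\ref{thm:MainTheorem}, together with the explicit forms \eqref{eq:NtFormula} and the second-standard-form expression for $M_t$ derived in the proof. First I would collect all the purely Hamiltonian pieces: the commutators $-i\comm{H_t}{\cdot}$, $-i\comm{K_t}{\cdot}$ and the skew-Hermitian contribution $-\tfrac12\comm{D_t - D_t^\hadj}{\cdot} = -i\comm{E_t/2}{\cdot}$ (using $D_t - D_t^\hadj = -iE_t$ with $E_t$ Hermitian from the proof) combine into a single term $-i\comm{S_t}{\cdot}$ with $S_t = H_t + K_t + \tfrac12 E_t$ Hermitian.

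Next I would identify the decomposable map $\varphi_t$. The ``gain'' parts are $\tilde L_t^{(1)}(\rho) = \sum_\alpha G_{\alpha,t}\rho G_{\alpha,t}^\hadj$, which is CP, and $\tilde L_t^{(2)}(\rho) = \theta\!\left(\sum_\alpha C_{\alpha,t}\rho C_{\alpha,t}^\hadj\right)$, which is coCP; so I set
\begin{equation}
\varphi_t = \tilde L_t^{(1)} + \tilde L_t^{(2)},
\end{equation}
which lies in $\dece{\matrd}$ by definition \eqref{eq:DecomposableMapDecomposition}. The remaining task is to verify that the ``loss'' terms assemble precisely into $-\tfrac12\acomm{\varphi_t^\prime(I)}{\cdot}$. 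Here I would compute the dual: for the CP part, $(\tilde L_t^{(1)})^\prime(a) = \sum_\alpha G_{\alpha,t}^\hadj a G_{\alpha,t}$ by the cyclic property of the trace, so $(\tilde L_t^{(1)})^\prime(I) = \sum_\alpha G_{\alpha,t}^\hadj G_{\alpha,t}$, matching the anticommutator term in the second standard form of $M_t$. For the coCP part, using $\tr{a\,\theta(b)} = \tr{\theta(a)\,b}$ (transposition is its own dual) one gets $(\tilde L_t^{(2)})^\prime(a) = \sum_\alpha C_{\alpha,t}^\hadj a^\transpose C_{\alpha,t}$, and I would check that $(\tilde L_t^{(2)})^\prime(I) = \sum_\alpha C_{\alpha,t}^\hadj C_{\alpha,t}$ reproduces exactly the $-\tfrac12\sum_\alpha\acomm{C_{\alpha,t}^\hadj C_{\alpha,t}}{\cdot}$ contribution in \eqref{eq:NtFormula}. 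Summing, $\varphi_t^\prime(I) = \sum_\alpha(G_{\alpha,t}^\hadj G_{\alpha,t} + C_{\alpha,t}^\hadj C_{\alpha,t})$, which is precisely $\tilde L_t^{(0)}$'s anticommutator coefficient; this gives one direction.

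For the converse, given $L_t = -i\comm{S_t}{\cdot} + \varphi_t - \tfrac12\acomm{\varphi_t^\prime(I)}{\cdot}$ with $\varphi_t \in \dece{\matrd}$, I would write $\varphi_t = \phi_t + \theta\circ\psi_t$ with $\phi_t, \psi_t \in \cpe{\matrd}$, take Kraus decompositions of $\phi_t$ and $\psi_t$, and reconstruct from them the maps $M_t$ (standard form, using the CP part $\phi_t$ and the corresponding half of $\varphi_t^\prime(I)$) and $N_t$ (via \eqref{eq:NtFormula}, using the coCP part), together with a splitting of $S_t$ into the required Hermitian pieces; one also checks trace preservation, which holds automatically since the anticommutator coefficient is exactly $\varphi_t^\prime(I)$. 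The main obstacle, and the step deserving genuine care rather than hand-waving, is the dual computation for the coCP term: the partial transposition inside $\tilde L_t^{(2)}$ interacts with the trace pairing, and one must be careful that $(\theta\circ\sigma)^\prime = \sigma^\prime\circ\theta$ (not $\theta\circ\sigma^\prime$) so that $(\theta\circ\sigma)^\prime(I) = \sigma^\prime(I)$ still comes out Hermitian and positive — this is what makes the clean ``$\varphi_t^\prime(I)$'' formula work despite $\varphi_t$ itself not being CP, and it is the reason the theorem can be stated so compactly.
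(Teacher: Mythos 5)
Your proposal is correct and follows essentially the same route as the paper: it repackages Theorem \ref{thm:MainTheorem} by setting $\varphi_t = \phi_t + \theta\circ\psi_t$ with $\phi_t(\rho)=\sum_\alpha G_{\alpha,t}\rho G_{\alpha,t}^{\hadj}$, $\psi_t(\rho)=\sum_\alpha C_{\alpha,t}\rho C_{\alpha,t}^{\hadj}$, collecting the commutators into a single Hermitian $S_t$, and using the key identity $(\theta\circ\psi_t)^{\prime}(I)=\psi_t^{\prime}(I)$ (your $(\theta\circ\sigma)^{\prime}=\sigma^{\prime}\circ\theta$ observation is exactly the paper's remark). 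The only deviation is the harmless bookkeeping constant in front of $E_t$ inside $S_t$, which is immaterial since the theorem only asserts existence of some Hermitian $S_t$.
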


\begin{proof}
It suffices to set two CP maps,
\begin{equation}
	\phi_t (\rho) = \sum_{\alpha} G_{\alpha,t}\rho G_{\alpha,t}^{\hadj}, \quad \psi_t (\rho) = \sum_\alpha C_{\alpha,t}\rho C_{\alpha,t}^{\hadj},
\end{equation}
where we used the same notation as in the proof of theorem \ref{thm:MainTheorem}. Then one checks that both parts $M_t$ and $N_t$ of the generator may be conveniently re-expressed as
\begin{subequations}
	\begin{equation}
			M_t = -i \comm{H_t}{\cdot\,} + \phi_t - \frac{1}{2}\acomm{\phi_{t}^{\adj}(I)}{\cdot\,} ,
	\end{equation}
	\begin{equation}
			N_t = -i \comm{K_t + E_t}{\cdot\,} + \psi_t - \frac{1}{2}\acomm{\psi_{t}^{\adj}(I)}{\cdot\,}
	\end{equation}
\end{subequations}
and their sum can be shown with a simple algebra to be in the claimed form after defining a decomposable map $\varphi_t$ and Hermitian matrix $S_t$ via
\begin{equation}
	\varphi_t = \phi_t + \theta \circ \psi_t, \quad S_t = H_t + K_t + E_t
\end{equation}
and notifying $(\theta \circ \psi_t)^\adj (I) = \psi_{t}^{\adj}(I)$.
\end{proof}

In order to confirm validity of our results, we verified if families given by $L_t$ in proposed form were indeed decomposable. We checked for condition stated in theorem \ref{thm:DecCondition} by minimizing the functional $\rho \mapsto \tr{C_{e^{tL}}\rho}$ over a convex set $V_d \cap V_{d}^{\Gamma}$. This was achieved via a numerical and symbolic application of SDP optimization routines for a very wide range of different forms of $L_t$ in different dimensions and values of $t$. 

\subsection{Asymptotic complete positivity}

In general, decomposability properties of D-divisible dynamical maps turn out to be quite surprising, as we were able to check numerically. For instance, it may happen that $\Lambda_t$ suddenly becomes \emph{completely positive} despite the fact that the propagator $V_{t,s}$ remains \emph{truly} decomposable, i.e.~has a non-zero coCP part. Behavior of $\Lambda_t$ in this manner may be quite complex and ranges from being simply CP to even fluctuating between complete positivity and decomposability. Under particular circumstances, i.e.~under specific choice of the generator, an interesting phenomenon of $\Lambda_t$ is observed: namely, it is possible that initially $\Lambda_t$ is decomposable and then it \emph{switches} to being only CP and remains such as time progresses. This observation justifies a following definition of \emph{asymptotic} complete positivity of decomposable maps:

\begin{definition}
We will call a family $\{\Lambda_t : t\in\reals_+\}$ \emph{asymptotically CP} if there exists $t_0 > 0$ such that $\Lambda_t$ is CP and trace preserving for all $t \geqslant t_0$.
\end{definition}

In fact, asymptotic complete positivity is observed even in simplest semigroup case, as an example (see below) demonstrates, and is analyzed by examining the spectrum of Choi matrix $C_{\Lambda_t}$. Since $\Lambda_t$ is Hermiticity preserving, $C_{\Lambda_t}$ is Hermitian and therefore it suffices that $\spec{C_{\Lambda_t}}\subset\reals_+$ for $\Lambda_t$ to be CP, which in turn is guaranteed if the smallest eigenvalue $\lambda_{\mathrm{min}} (C_\varphi)$ is non-negative. Therefore one should be interested at least in finding some well-behaved and computable lower bounds for smallest eigenvalues. One such bound was specified by Wolkowicz and Styan in \cite[Theorem 2.1]{Wolkowicz1980}. Let $A \in \matr{n}$ be a matrix of real spectrum, $\spec{A} = \{\lambda_{i}(A) : 1 \leqslant i \leqslant n\}$, $\lambda_i (A) \in \reals$. Then, the smallest eigenvalue $\lambda_{\mathrm{min}} (A)$ satisfies inequality
\begin{equation}\label{eq:wolkowiczBounds}
	\mu_A - \nu_A \sqrt{n-1} \leqslant \lambda_{\mathrm{min}} (A) \leqslant \mu_A - \frac{\nu_A}{\sqrt{n-1}},
\end{equation}
for $\mu_A = \frac{1}{n}\tr{A}$ and $\nu_{A}^{2} = \frac{1}{n} \tr{(A^2)} - \mu_{A}^{2}$. This allows to formulate a following sufficient condition for complete positivity:

\begin{proposition}
\label{prop:CPcondition}
A trace preserving map $\varphi \in \dece{\matrd}$ is CP if
\begin{equation}
	\sum_{i,j=1}^{d} \hsnorm{\varphi (E_{ij})}^{2} \leqslant \frac{d^2}{d^2-1},
\end{equation}
where $\hsnorm{a} = \sqrt{\tr{a^\hadj a}}$ stands for the \emph{Hilbert-Schmidt norm} of $a\in\matrd$.
\end{proposition}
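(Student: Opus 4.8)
The plan is to apply the Wolkowicz--Styan lower bound \eqref{eq:wolkowiczBounds} to the Choi matrix $C_\varphi$ and show that the hypothesis forces that bound to be nonnegative; by Choi's theorem this gives $C_\varphi \in \matr{d^2}^+$ and hence complete positivity of $\varphi$. Since $\varphi$ is decomposable it is in particular Hermiticity preserving, so $C_\varphi$ is a Hermitian matrix of size $d^2 \times d^2$ with real spectrum, and \eqref{eq:wolkowiczBounds} applies with $n = d^2$. It therefore suffices to prove that $\mu_{C_\varphi} - \nu_{C_\varphi}\sqrt{d^2-1} \geq 0$.

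First I would evaluate the two scalars entering the bound. Using $C_\varphi = \sum_{i,j=1}^d E_{ij}\otimes\varphi(E_{ij})$ and $\tr{(a\otimes b)} = \tr{a}\tr{b}$ together with $\tr{E_{ij}} = \delta_{ij}$, trace preservation of $\varphi$ gives $\tr{C_\varphi} = \sum_{i}\tr{\varphi(E_{ii})} = \sum_i \tr{E_{ii}} = d$, hence $\mu_{C_\varphi} = 1/d$. For the second moment, $\tr{C_\varphi^2} = \sum_{i,j,k,l}\tr{E_{ij}E_{kl}}\,\tr{\varphi(E_{ij})\varphi(E_{kl})}$, and since $\tr{E_{ij}E_{kl}} = \delta_{jk}\delta_{il}$ only the terms with $k = j$, $l = i$ survive, so $\tr{C_\varphi^2} = \sum_{i,j}\tr{\varphi(E_{ij})\varphi(E_{ji})}$. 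Because $E_{ji} = E_{ij}^{\hadj}$ and $\varphi$ preserves Hermiticity, $\varphi(E_{ji}) = \varphi(E_{ij})^{\hadj}$, so by cyclicity of the trace $\tr{\varphi(E_{ij})\varphi(E_{ji})} = \tr{\varphi(E_{ij})^{\hadj}\varphi(E_{ij})} = \hsnorm{\varphi(E_{ij})}^{2}$. Thus $\tr{C_\varphi^2} = \sum_{i,j}\hsnorm{\varphi(E_{ij})}^2 =: S$, and $\nu_{C_\varphi}^2 = S/d^2 - 1/d^2 = (S-1)/d^2$ (automatically nonnegative, being a variance).

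Then I would substitute: $\lambda_{\mathrm{min}}(C_\varphi) \geq \frac{1}{d} - \sqrt{\tfrac{(S-1)(d^2-1)}{d^2}}$. Since $\mu_{C_\varphi} = 1/d \geq 0$, this right-hand side is nonnegative precisely when $\mu_{C_\varphi}^2 \geq \nu_{C_\varphi}^2(d^2-1)$, i.e.\ when $\tfrac{1}{d^2} \geq \tfrac{(S-1)(d^2-1)}{d^2}$, i.e.\ when $S \leq 1 + \tfrac{1}{d^2-1} = \tfrac{d^2}{d^2-1}$ — which is exactly the assumed inequality. Hence $\lambda_{\mathrm{min}}(C_\varphi) \geq 0$, so $C_\varphi$ is positive semidefinite, and $\varphi$ is completely positive by Choi's theorem. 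I do not expect a genuine obstacle here: the only points requiring care are the index bookkeeping in the computation of $\tr{C_\varphi^2}$, the use of Hermiticity preservation to rewrite $\tr{\varphi(E_{ij})\varphi(E_{ji})}$ as a squared Hilbert--Schmidt norm, and the observation that passing from $\mu_{C_\varphi} - \nu_{C_\varphi}\sqrt{d^2-1}\geq 0$ to its squared form is legitimate because $\mu_{C_\varphi}\geq 0$.
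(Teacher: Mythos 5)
Your proposal is correct and follows essentially the same route as the paper: apply the Wolkowicz--Styan bound \eqref{eq:wolkowiczBounds} to the Hermitian Choi matrix $C_\varphi$ with $n=d^2$, compute $\tr{C_\varphi}=d$ and $\tr{C_\varphi^2}=\sum_{i,j}\hsnorm{\varphi(E_{ij})}^2$ via trace and Hermiticity preservation, and demand the resulting lower bound \eqref{eq:LambdaMinLowerBound} on $\lambda_{\mathrm{min}}(C_\varphi)$ be nonnegative. Your added remarks on the index bookkeeping and on the legitimacy of squaring (since $\mu_{C_\varphi}\geqslant 0$) are sound and merely make explicit steps the paper leaves as ``simple algebra.''
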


\begin{proof}
Clearly $\varphi\in\cpe{\matrd}$ if $\lambda_{\mathrm{min}} (C_\varphi)$ is non-negative. By a simple algebra involving trace preservation of $\varphi$ one checks that
\begin{equation}
	\tr{C_\varphi} = d, \qquad \tr{C_{\varphi}^{2}} = \sum_{i,j=1}^{d} \tr{\varphi(E_{ij})\varphi(E_{ji})} = \sum_{i,j=1}^{d} \hsnorm{\varphi (E_{ij})}^{2},
\end{equation}
since $E_{ij} = E_{ji}^{\hadj}$ and $\varphi$ is Hermiticity preserving. This allows to check that $\lambda_{\mathrm{min}} (C_\varphi)$ satisfies
\begin{equation}\label{eq:LambdaMinLowerBound}
	\lambda_{\mathrm{min}} (C_\varphi) \geqslant \frac{1}{d}\left[ 1 - \sqrt{(d^2 -1) \left(\sum_{i,j=1}^{d} \hsnorm{\varphi (E_{ij})}^{2}-1\right)} \right] ,
\end{equation}
which comes from \eqref{eq:wolkowiczBounds} after putting $A = C_{\varphi}$, $n = d^2$. Finally, demanding the above lower bound to be non-negative yields the claim. 
\end{proof}
A following criterion of asymptotic complete positivity arises:
\begin{theorem}
\label{thm:AsymptCPSemigroupCondition}
Let $\{\Lambda_t : t \in \reals_+\}$ be D-divisible trace preserving family. If it happens that
\begin{equation}
	\lim_{t\to\infty}\sum_{i,j=1}^{d} \hsnorm{\Lambda_{t} (E_{ij})}^{2} < \frac{d^2}{d^2-1},
\end{equation}
then the family is asymptotically CP.
\end{theorem}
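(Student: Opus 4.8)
The plan is to obtain the conclusion as an almost immediate consequence of the sufficient condition for complete positivity recorded in Proposition~\ref{prop:CPcondition}, combined with an elementary limit argument. First I would note that, with the standing convention $\Lambda_0 = \id{}$, D-divisibility of the family on $\reals_+$ gives, via property~\ref{item:VpropDecTP} of Proposition~\ref{prop:Vproperties} applied on $[0,t]$, that $\Lambda_t = V_{t,0}$ is a decomposable map for every $t \in \reals_+$; by hypothesis it is also trace preserving. Hence every $\Lambda_t$ meets the standing assumptions of Proposition~\ref{prop:CPcondition}.

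Next I would abbreviate $f(t) = \sum_{i,j=1}^{d}\hsnorm{\Lambda_t(E_{ij})}^{2}$ and let $\ell = \lim_{t\to\infty} f(t)$, which by hypothesis exists and satisfies $\ell < d^{2}/(d^{2}-1)$. Setting $\varepsilon = d^{2}/(d^{2}-1) - \ell > 0$, the definition of the limit produces a threshold $t_0 > 0$ with $f(t) < \ell + \varepsilon = d^{2}/(d^{2}-1)$ for all $t \geqslant t_0$. In particular the non-strict bound $f(t) \leqslant d^{2}/(d^{2}-1)$ holds on the whole tail $[t_0, \infty)$.

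Finally, for every $t \geqslant t_0$ I would apply Proposition~\ref{prop:CPcondition} to $\varphi = \Lambda_t$: since $\Lambda_t$ is decomposable, trace preserving, and satisfies $f(t) \leqslant d^{2}/(d^{2}-1)$, it is completely positive. As $\Lambda_t$ is also trace preserving for all such $t$, the family is asymptotically CP in the sense of the definition preceding the theorem, with $t_0$ the required instant. I do not anticipate a genuine obstacle: the substantive work is entirely contained in Proposition~\ref{prop:CPcondition}, and what is added here is only the routine observation that a quantity converging to a value strictly below a threshold must eventually remain at or below that threshold; the sole point deserving a moment's attention is that decomposability of $\Lambda_t$, needed to invoke Proposition~\ref{prop:CPcondition}, is available for all $t$ directly from D-divisibility.
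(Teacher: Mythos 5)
Your proposal is correct and follows essentially the same route as the paper: define $g(t)=\sum_{i,j}\hsnorm{\Lambda_t(E_{ij})}^{2}$, use the definition of the limit to find $t_0$ beyond which $g(t)<d^2/(d^2-1)$, and invoke Proposition \ref{prop:CPcondition} to conclude $\Lambda_t$ is CP for all $t\geqslant t_0$. Your extra remark that decomposability and trace preservation of $\Lambda_t$ itself follow from D-divisibility (Proposition \ref{prop:Vproperties}) is a careful verification of the hypotheses of Proposition \ref{prop:CPcondition} that the paper leaves implicit, but it does not change the argument.
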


\begin{proof}
Let $g(t) = \sum_{i,j=1}^{d} \hsnorm{\Lambda_t (E_{ij})}^{2}$. If indeed $\lim_{t\to\infty} g(t) < \frac{d^2}{d^2-1}$ then by definition of a limit there exists $t_0 \geqslant 0$ such that $g(t) < \frac{d^2}{d^2-1}$ for all $t > t_0$ and we have $\lambda_{\mathrm{min}}(C_{\Lambda_t}) \geqslant 0$, $\Lambda_t \in \cpe{\matrd}$ by proposition \ref{prop:CPcondition}, i.e.~a family is asymptotically CP.
\end{proof}

\subsection{Decomposable semigroups}
\label{sec:Semigroups}

Here we briefly remark on the semigroup case. It is immediate that by suppressing all time dependence in decomposition \eqref{eq:LtDecomposition} we obtain a general characterization of D-divisible trace preserving semigroups over $\matrd$, for any $d$. Clearly, a semigroup is D-divisible if and only if it is decomposable, so we have a following corollary of theorem \ref{thm:MainTheorem}:

\begin{theorem}
A semigroup $\{e^{tL} : t\in\reals_+\}$ is trace preserving and decomposable iff $L$ is of a form stated in Theorems \ref{thm:MainTheorem} and \ref{thm:MainTheorem2}, with all matrices time independent.
\end{theorem}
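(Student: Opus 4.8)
The plan is to deduce this semigroup statement directly from Theorem~\ref{thm:MainTheorem} (together with its equivalent formulation Theorem~\ref{thm:MainTheorem2}) by specializing to the time-independent case. First I would observe that a one-parameter semigroup $\{e^{tL} : t\in\reals_+\}$ is automatically divisible, with propagator $V_{t,s} = e^{(t-s)L}$ depending only on the difference $t-s$; hence the family is D-divisible on $\reals_+$ precisely when each $e^{\tau L}$, $\tau\geqslant 0$, is trace preserving and decomposable, which is exactly the assertion that the semigroup itself is decomposable and trace preserving. This reduces the corollary to applying the characterization of $L_t$ from Theorem~\ref{thm:MainTheorem} with the constant family $L_t \equiv L$.

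Next I would check the two implications. For the ``if'' direction: if $L$ has the stated form with all matrices ($K$, $[\omega_{jk}]\in\matr{d^2}^+$, and the standard-form data of $M$) time-independent, then the continuous function $t\mapsto L_t \equiv L$ trivially satisfies the hypotheses of Theorem~\ref{thm:MainTheorem} on any interval $[t_1,t_2]$, so the solution $\Lambda_t = e^{tL}$ of \eqref{eq:LambdaODE} is D-divisible and trace preserving, which as noted above means the semigroup is decomposable and trace preserving. For the ``only if'' direction: suppose $\{e^{tL}\}$ is trace preserving and decomposable; then $\Lambda_t = e^{tL}$ solves \eqref{eq:LambdaODE} with the continuous (indeed constant) generator $L_t\equiv L$ and is D-divisible and trace preserving, so Theorem~\ref{thm:MainTheorem} supplies a map $M_t$ in standard form, a Hermitian $K_t$ and $[\omega_{jk}(t)]\in\matr{d^2}^+$ realizing \eqref{eq:LtDecomposition}--\eqref{eq:EtaMatrixOmega}. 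One then notes that, since $L_t$ is constant, the construction in the proof of Theorem~\ref{thm:MainTheorem} (where $\gamma_{\mu\nu}(t)$, $\omega_{jk}(t)$, $H_t$, $K_t$, $E_t$ are obtained as $\epsilon\searrow 0$ limits of the $(t+\epsilon,t)$-data of the propagator $e^{\epsilon L}$, which itself depends only on $\epsilon$) yields time-independent data; alternatively, and more cleanly, one simply remarks that a constant generator can always be decomposed with constant ingredients, e.g.\ by evaluating the already-constructed $t$-dependent decomposition at any single fixed $t_0$ and using $L = L_{t_0}$.

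I would then add a one-line remark that the equivalent packaging of Theorem~\ref{thm:MainTheorem2} transfers verbatim: the decomposable map $\varphi_t$ and Hermitian matrix $S_t$ produced there are built from the same (now constant) data, so $L = -i\comm{S}{\cdot\,} + \varphi - \frac12\acomm{\varphi^{\prime}(I)}{\cdot\,}$ with $S$ Hermitian and $\varphi\in\dece{\matrd}$ time-independent.

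The only genuinely delicate point is the claim ``a semigroup is D-divisible iff it is decomposable'': one must be careful that D-divisibility is a statement about the two-parameter propagator on a whole interval, not just about the maps $\Lambda_t$ themselves, and that for a semigroup the propagator is forced to be $e^{(t-s)L}$ by invertibility of $e^{tL}$ (so $V_{t,s} = \Lambda_t\circ\Lambda_s^{-1} = e^{(t-s)L}$), after which the equivalence is immediate. Everything else is a direct specialization of the already-proved Theorems~\ref{thm:MainTheorem} and~\ref{thm:MainTheorem2}, so I expect the proof to be short; the main obstacle is merely phrasing this reduction precisely rather than any new computation.
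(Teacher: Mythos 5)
Your proposal is correct and follows essentially the same route as the paper, which treats this as an immediate corollary of Theorem \ref{thm:MainTheorem}: suppress the time dependence, note that a semigroup (with propagator $V_{t,s}=e^{(t-s)L}$) is D-divisible precisely when it is decomposable, and read off the characterization of $L$. Your extra care about extracting time-independent data (e.g.\ evaluating the constructed decomposition at a fixed $t_0$) is a harmless elaboration of what the paper leaves implicit.
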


We note here that an equivalent formula for generator $L$ in semigroup case was derived by Franke in 1976 \cite{Franke1976}, however with methods different from ours and without explicit utilization of decomposability.

In some cases, the limit appearing in theorem \ref{thm:AsymptCPSemigroupCondition} may be computed exactly. For example, if $L$ is \emph{diagonalizable}, its value turns out to be determined by the biorthogonal system of eigenbasis and associated dual basis of $L$:

\begin{theorem}
\label{thm:AsymptCPSemigroupConditionSpectrum}
Let $L$ be diagonalizable, let $0 \in \spec{L}$ be of multiplicity 1 and let $\varepsilon\in\ker{L}$ be an associated eigenmatrix. Then,
\begin{equation}
	\lim_{t\to\infty}\sum_{i,j=1}^{d} \hsnorm{e^{tL} (E_{ij})}^{2} = (\hsnorm{\varepsilon}\hsnorm{\beta})^{2} \geqslant 1,
\end{equation}
where $\beta\in\matrd$ is an element of dual basis of $L$ such that $\hsiprod{\beta}{\varepsilon} = 1$.
\end{theorem}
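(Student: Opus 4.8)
\textit{Plan.} The plan is to diagonalise $L$, let $t\to\infty$ term by term in the eigen-expansion of $e^{tL}$, and observe that the left-hand side is nothing but the squared Hilbert--Schmidt norm of the superoperator $e^{tL}$, so that it converges to the norm of the spectral projection onto $\ker L$. First I would fix an eigenbasis $\{e_k\}_{k=1}^{d^2}$ of $L$ with $Le_k=\lambda_k e_k$, enumerated so that $e_1=\varepsilon$ and $\lambda_1=0$, together with its dual basis $\{f_k\}$ characterised by $\hsiprod{f_k}{e_l}=\delta_{kl}$; since $\ker L$ is one-dimensional by hypothesis, necessarily $f_1=\beta$. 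Diagonalisability gives $e^{tL}(a)=\sum_{k}e^{t\lambda_k}\hsiprod{f_k}{a}\,e_k$. Because the matrix units $\{E_{ij}\}_{i,j=1}^d$ form an orthonormal basis of $\matrd$ for the Hilbert--Schmidt inner product, a Parseval computation (using $\sum_{i,j}\overline{\hsiprod{f_k}{E_{ij}}}\,\hsiprod{f_l}{E_{ij}}=\hsiprod{f_l}{f_k}$) yields
\begin{equation}
\sum_{i,j=1}^{d}\hsnorm{e^{tL}(E_{ij})}^{2}=\sum_{k,l=1}^{d^2}e^{t(\overline{\lambda_k}+\lambda_l)}\,\hsiprod{e_k}{e_l}\,\hsiprod{f_l}{f_k}.
\end{equation}

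\textit{Passing to the limit.} Next I would control the exponents. The semigroup $\{e^{tL}:t\in\reals_+\}$ is decomposable and trace preserving, hence positive, Hermiticity preserving and trace-norm non-increasing on $\matrd$; consequently its spectral radius is at most $1$, so $|e^{t\lambda_k}|\leqslant 1$ for all $t\geqslant 0$ and therefore $\Re{\lambda_k}\leqslant 0$ for every $k$. Thus $e^{t(\overline{\lambda_k}+\lambda_l)}\to 0$ as $t\to\infty$ whenever $\Re{\lambda_k}+\Re{\lambda_l}<0$, and the only contributions that can survive come from indices with $\Re{\lambda_k}=\Re{\lambda_l}=0$. \emph{The hard part is precisely here}: one must show that $\lambda_1=0$ is the only eigenvalue of $L$ on the imaginary axis, equivalently that $e^{tL}$ genuinely converges (rather than retaining an oscillatory part) to the rank-one spectral projection $P_1\colon a\mapsto\hsiprod{\beta}{a}\,\varepsilon$ onto $\ker L$. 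This absence of nonzero peripheral eigenvalues is exactly what guarantees the stated limit even exists; it is forced by one-dimensionality of $\ker L$ together with the structure of positive trace-preserving semigroups. Concretely, $e^{tL}$ maps the compact convex set of density matrices into itself and hence fixes some density matrix $\rho_\ast$, which by $\dim\ker L=1$ must be proportional to $\varepsilon$; if $i\omega\in\spec{L}$ with $\omega\neq 0$, the restriction of $\{e^{tL}\}$ to the invariant range of the associated peripheral spectral projection would be a nontrivial relatively compact one-parameter group of (positive, trace-preserving) maps, and averaging it over the closure of that group would manufacture a second independent element of $\ker L$, a contradiction. (If one prefers, this non-oscillation may simply be absorbed into the hypotheses ensuring the limit exists.)

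\textit{Conclusion.} Granting this, only the $k=l=1$ term of the double sum survives, so
\begin{equation}
\lim_{t\to\infty}\sum_{i,j=1}^{d}\hsnorm{e^{tL}(E_{ij})}^{2}=\hsiprod{e_1}{e_1}\,\hsiprod{f_1}{f_1}=\hsnorm{\varepsilon}^{2}\hsnorm{\beta}^{2}=(\hsnorm{\varepsilon}\hsnorm{\beta})^{2}.
\end{equation}
Finally, since $\hsiprod{\beta}{\varepsilon}=1$, the Cauchy--Schwarz inequality gives $1=|\hsiprod{\beta}{\varepsilon}|\leqslant\hsnorm{\beta}\hsnorm{\varepsilon}$, whence $(\hsnorm{\varepsilon}\hsnorm{\beta})^{2}\geqslant 1$, with equality iff $\varepsilon$ and $\beta$ are parallel; one may note that trace preservation forces $\beta$ to be a scalar multiple of $I$ (because $L^{\hadj}(I)=0$ and $\dim\ker L^{\hadj}=1$), so equality holds precisely when the invariant state is maximally mixed. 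Apart from the peripheral-spectrum point, every step above is a routine manipulation of the eigen-expansion and of the orthonormality of $\{E_{ij}\}$.
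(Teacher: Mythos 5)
Your proposal is correct and follows essentially the same route as the paper's proof: the biorthogonal (dual-basis) expansion $e^{tL}=\sum_k e^{\mu_k t}\hsiprod{\beta_k}{\cdot}\,\varepsilon_k$, the Parseval-type computation over the matrix units giving $\sum_{i,j}\hsnorm{e^{tL}(E_{ij})}^2=\sum_{k,l}e^{(\overline{\mu_k}+\mu_l)t}\hsiprod{\beta_l}{\beta_k}\hsiprod{\varepsilon_k}{\varepsilon_l}$, the decay of every term except $k=l=1$, and Cauchy--Schwarz for the bound $\geqslant 1$. The one point you elaborate on beyond the paper --- excluding nonzero purely imaginary eigenvalues --- is dispatched there by simply invoking the spectral theory of positive trace-preserving semigroups (asserting $\Re{\mu_i}<0$ for all $\mu_i\neq 0$, with a citation), and although your averaging sketch for that point is shaky as stated (the time average of a rotating eigenvector vanishes rather than producing a second element of $\ker{L}$, and the restriction of $e^{tL}$ to a spectral subspace is not itself a positive trace-preserving map), this concerns a step the paper itself does not prove and does not affect the agreement of the two arguments on their common core.
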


\begin{proof}
Let again $g(t) = \sum_{i,j=1}^{d} \hsnorm{e^{tL} (E_{ij})}^{2}$ and assume that $L$ is diagonalizable, i.e.~that there exists a linearly independent set $\{\vec{e}_i\}$ spanning $\complexes^{d^2}$ of (not necessarily orthogonal) normalized eigenvectors of $\hat{L}\in\matr{d^2}$, the \emph{matrixized} version of $L$, as elaborated in section \ref{subsect:Vectorization}. Then, one can show that there always exists so-called \emph{dual basis} (or \emph{reciprocal basis}) $\{\vec{b}_i\}$, also spanning $\complexes^{d^2}$, which is subject to relation $\iprod{\vec{b}_i}{\vec{e}_j} = \delta_{ij}$, or that $(\{\vec{e}_i\},\{\vec{b}_i\})$ constitutes for a \emph{biorthogonal system}. Then, every operator $\hat{A}$ acting on $\complexes^{d^2}$ may be cast into a form
\begin{equation}
	\hat{A} = \sum_{i,j=1}^{d^2} a_{ij} \iprod{\vec{b}_i}{\cdot} \vec{e}_j
\end{equation}
for coefficients $a_{ij} = \iprod{\vec{b}_i}{\hat{A}\vec{e}_j}$. In particular, when basis $\{\vec{e}_i\}$ is chosen as an eigenbasis of $\hat{A}$, we have
\begin{equation}\label{eq:Apseudoeigendec}
	\hat{A} = \sum_{i=1}^{d^2} \lambda_i (A) \iprod{\vec{b}_i}{\cdot} \vec{e}_i ,
\end{equation}
where $\lambda_i (A) \in \spec{A}$, i.e.~$\hat{A}$ admits a pseudo-spectral decomposition as a combination of non-orthogonal rank one projection operators onto its eigenspaces. By diagonalizability, $\hat{A} = \hat{P}\hat{D}\hat{P}^{-1}$ for invertible matrix $\hat{P}$, built from eigenvectors $\vec{e}_i$ stacked column-by-column and diagonal matrix $\hat{D} = \operatorname{diag}{\{\lambda_i (A)\}}$. In result, every analytic function $f$ of $\hat{A}$ shares the same eigenspaces and $f(\hat{A}) = \hat{P} f(\hat{D})\hat{P}^{-1}$, i.e.~$\spec{f(\hat{A})} = \{f(\lambda_i (A))\}$. Let us therefore denote $\spec{\hat{L}} = \{\mu_i\}$ and assume $\hat{L}$ is diagonalizable. Then $\hat{L}$ admits a decomposition \eqref{eq:Apseudoeigendec} for eigenvalues $\mu_i$, eigenvectors $\vec{e}_i$ and associated dual vectors $\vec{b}_i$. Naturally, map $L$ itself is also diagonalizable and we have
\begin{equation}\label{eq:SemigroupDec}
	e^{tL} = \sum_{i=1}^{d^2} e^{\mu_i t} \hsiprod{\beta_i}{\cdot} \, \varepsilon_i ,
\end{equation}
where $\beta_i = \operatorname{vec}^{-1}{\vec{b}_i}$, $\varepsilon_i = \operatorname{vec}^{-1}{\vec{e}_i}$ are eigenmatrices of $L$ and $\hsiprod{\beta_i}{\varepsilon_j} = \delta_{ij}$.

From general theory of positive unital maps, we know that $\spec{e^{tL}}$ lays inside unit circle (being a trace norm contraction), contains 1 (as a result of trace preservation) and is closed with respect to complex conjugation, i.e.~$e^{\mu_i t}, \overline{e^{\mu_i t}} \in \spec{e^{tL}}$ (by Hermiticity preservation property) \cite{Szczygielski_2021}. This implies that $0 \in \spec{L}$ and $\spec{L}\setminus\{0\}$ consists of pairs $\{\mu_i, \overline{\mu_i} : \Re{\mu_i}<0\}$ and possibly some negative reals. Let us then set $\mu_{1} = 0$. We have
\begin{equation}
	e^{\mu_{1}t} = 1 \,\,\,\text{and}\,\,\, e^{\mu_i t} = e^{-|\Re{\mu_i}|t} e^{i \Im{\mu_i} t}, \quad 1 \leqslant i \leqslant d^2-1,
\end{equation}
where we write  $-|\Re{\mu_i}|$ to emphasize negativity of real parts. Decomposition \eqref{eq:SemigroupDec} allows to re-write expression for $g(t)$. First, one easily confirms that
\begin{align}
	\hsiprod{e^{tL}(E_{ij})}{e^{tL}(E_{ij})} &= \sum_{k,l=1}^{d^2} \overline{e^{\mu_k t}}e^{\mu_l t} \hsiprod{\hsiprod{\beta_k}{E_{ij}} \varepsilon_k}{\hsiprod{\beta_l}{E_{ij}} \varepsilon_l} \\
	&= \sum_{k,l=1}^{d^2} e^{(\overline{\mu_k}+\mu_l)t} (\beta_k)_{ji} \overline{(\beta_l)_{ji}} \hsiprod{\varepsilon_k}{\varepsilon_l} \nonumber
\end{align}
which comes from properties of inner product and property $\tr{E_{ij}[a_{ij}]} = a_{ji}$. Substituting this into formula for $g(t)$ we have
\begin{align}
	g(t) &= \sum_{k,l=1}^{d^2} e^{(\overline{\mu_k}+\mu_l)t} z_{kl}
\end{align}
for shorthand notation $z_{kl} = \hsiprod{\beta_l}{\beta_k} \hsiprod{\varepsilon_k}{\varepsilon_l}$. Applying properties of eigenvalues $\mu_i$ we recast this into
\begin{align}
	g(t) = z_{11} &+ \sum_{k=2}^{d^2} e^{-2 |\Re{\mu_k}| t} z_{kk} \\
	&+ 2 \sum_{k<l} e^{-|\Re{\mu_k}| t} e^{-|\Re{\mu_l}|t} \Re{\left[ e^{i\Im{(\mu_k-\mu_l})t} z_{kl} \right]} \nonumber .
\end{align}
Clearly, both sums vanish exponentially as $t \to\infty$, so
\begin{equation}
	\lim_{t\to\infty}g(t) = z_{11} = \hsnorm{\beta_1}^{2} \hsnorm{\varepsilon_1}^{2}.
\end{equation}
By Schwartz inequality, $1 = \hsiprod{\beta_1}{\varepsilon_1} \leqslant \hsnorm{\beta_1} \hsnorm{\varepsilon_1}$, so indeed $\lim_{t\to\infty}g(t) \geqslant 1$, as claimed.\end{proof}

\section{Examples}
\label{sec:Examples}

Here we present two simple examples of D-divisible trace preserving evolution as outlined in preceding sections in low dimensional matrix algebras. The first one concerns a decomposable semigroup over $\matr{2}$, whereas as the second one we explore a very basic case of time-dependent generator in $\matr{3}$. For simplicity and readability of obtained formulas we choose the appropriate generators in simplest possible way, e.g. by choosing matrix $[\omega_{jk}(t)]$ as a diagonal one or neglecting some parts of generator $L_t$ (such as commutator terms).

\subsection{Decomposable semigroup on algebra \texorpdfstring{$\matr{2}$}{of complex 2-by-2 matrices}}

As a first example, we examine a decomposable semigroup on algebra of complex square matrices of size 2. We set
\begin{equation}
	L(\rho) = N(\rho) = \frac{1}{2}\sum_{\mu,\nu = 1}^{3} \eta_{\mu\nu} \left( \sigma_\mu \rho \sigma_\nu - \frac{1}{2}\acomm{\sigma_\nu \sigma_\mu}{\rho} \right),
\end{equation}
where $\sigma_i$ is the usual basis of Pauli matrices, i.e.~we explicitly neglect the $M$ generator from decomposition \eqref{eq:LtDecomposition} and the commutator part of \eqref{eq:NtDecomposition}. The geometric tensor $\hat{\mathbf{\Omega}}$ may be then computed by applying \eqref{eq:OmegaTensorRep}; its only non-zero coefficients $\Omega^{jk}_{\mu\nu}$ read
\begin{subequations}
	\begin{align}
		&\Omega^{11}_{11} = \Omega^{11}_{22} = \Omega^{12}_{12} = \Omega^{13}_{13} = \Omega^{13}_{31} = \Omega^{21}_{21} = \Omega^{22}_{11} = \Omega^{22}_{22} = \Omega^{22}_{33} \\
		= \, &\Omega^{23}_{23} = \Omega^{31}_{13} = \Omega^{31}_{31} = \Omega^{32}_{32} = \Omega^{33}_{22} = \Omega^{33}_{33} = \Omega^{44}_{11} = \Omega^{44}_{33} = \frac{1}{2} \nonumber ,
	\end{align}
	\begin{align}
		\Omega^{11}_{33} = \Omega^{12}_{21} = \Omega^{21}_{12} = \Omega^{23}_{32} = \Omega^{32}_{23} = \Omega^{33}_{11} = \Omega^{44}_{22} = -\frac{1}{2},
	\end{align}
	\begin{align}
		\Omega^{14}_{23} = \Omega^{14}_{32} = \Omega^{24}_{31} = \Omega^{42}_{31} = \Omega^{43}_{12} = \Omega^{43}_{21} = \frac{i}{2},
	\end{align}
	\begin{align}
		\Omega^{24}_{13} = \Omega^{34}_{12} = \Omega^{34}_{21} = \Omega^{41}_{23} = \Omega^{41}_{32} = \Omega^{42}_{13} = -\frac{i}{2}.
	\end{align}
\end{subequations}
For demonstration purpose of this example we choose a diagonal matrix $[\omega_{jk}]$,
\begin{equation}
	[\omega_{jk}] = \operatorname{diag}{\{w_1, \, ... \, , \, w_4\}}, \quad w_i \geqslant 0.
\end{equation}
Matrix $[\eta_{\mu\nu}]$ also admits a diagonal form
\begin{equation}
	[\eta_{\mu\nu}] = \frac{1}{2}\operatorname{diag}{\{w_1+w_2-w_3+w_4 , w_1+w_2+w_3-w_4, -w_1 + w_2 + w_3 + w_4\}}.
\end{equation}
After some computations, one arrives at the generator $L$,
\begin{equation}
	L(\rho) = \frac{1}{2}\left( \begin{array}{cc} s_{12} (\rho_{11}-\rho_{22}) & -(s_{23}+s_{24})\rho_{12} + (w_4-w_3)\rho_{21} \\ -(s_{23}+s_{24})\rho_{21} + (w_4-w_3)\rho_{12} & s_{12} (\rho_{11}-\rho_{22}) \end{array}\right)
\end{equation}
for $s_{ij} = w_i + w_j$. Next, performing the vectorization of $L$ (which we omit here for brevity) we obtain its spectrum,
\begin{equation}
	\spec{L} = \{\mu_1 = 0, \mu_2 = -s_{12}, \mu_3 = -s_{23}, \mu_4 = -s_{24}\},
\end{equation}
as well as corresponding eigenmatrices $\varepsilon_i$ such that $L(\varepsilon_i) = \mu_i \varepsilon_i$, which in this particular case happen to be equivalent to Pauli matrices,
\begin{equation}
	\varepsilon_1 = \frac{1}{\sqrt{2}} I, \quad \varepsilon_2 = -\frac{1}{\sqrt{2}} \sigma_3, \quad \varepsilon_3 = \frac{1}{\sqrt{2}} \sigma_1, \quad \varepsilon_4 = -\frac{i}{\sqrt{2}} \sigma_2.
\end{equation}
In such case, a dual basis is identical, $\beta_i = \varepsilon_i$. Dynamical semigroup $e^{tL}$ can be then (again, by vectorization techniques) characterized by its action on matrix $\rho = [\rho_{ij}]$ via
\begin{equation}
	\rho_t = [\rho_{ij}(t)] = e^{tL}([\rho_{ij}]),
\end{equation}
for matrix elements
\begin{subequations}
	\begin{equation}
		\rho_{11}(t) = \frac{1}{2} \left(1 + e^{-s_{12}t}\right) \rho_{11} + \frac{1}{2} \left(1 - e^{-s_{12}t}\right) \rho_{22},
	\end{equation}
	\begin{equation}
		\rho_{21}(t) = \frac{1}{2} \left(e^{-s_{23}t}-e^{-s_{24}t}\right)\rho_{12} + \frac{1}{2} \left(e^{-s_{23}t}+e^{-s_{24}t}\right)\rho_{21},
	\end{equation}
	\begin{equation}
		\rho_{12}(t) = \frac{1}{2} \left(e^{-s_{23}t}-e^{-s_{24}t}\right)\rho_{21} + \frac{1}{2} \left(e^{-s_{23}t}+e^{-s_{24}t}\right)\rho_{12},
	\end{equation}
	\begin{equation}
		\rho_{22}(t) = \frac{1}{2} \left(1 + e^{-s_{12}t}\right) \rho_{22} + \frac{1}{2} \left(1 - e^{-s_{12}t}\right) \rho_{11}.
	\end{equation}
\end{subequations}
The Choi matrix $C_{e^{tL}}$ is Hermitian as expected and reads
\begin{equation}
	C_{e^{tL}} = \frac{1}{2}\left( \begin{array}{cccc} 1+e^{-s_{12}t} & 0 & 0 & e^{-s_{23}t}+e^{-s_{24}t} \\
	0 & 1-e^{-s_{12}t} & e^{-s_{23}t}-e^{-s_{24}t} & 0 \\
	0 & e^{-s_{23}t}-e^{-s_{24}t} & 1-e^{-s_{12}t} & 0 \\
	e^{-s_{23}t}+e^{-s_{24}t} & 0 & 0 & 1+e^{-s_{12}t}
	\end{array}\right)
\end{equation}
and its spectrum is found to be
\begin{subequations}
	\begin{equation}
		\lambda_1(C_{e^{tL}}) = \frac{1}{2} \left(1-e^{-s_{12}t}-e^{-s_{23}t}+e^{-s_{24}t}\right),
	\end{equation}
	\begin{equation}
		\lambda_2(C_{e^{tL}}) = \frac{1}{2} \left(1-e^{-s_{12}t}+e^{-s_{23}t}-e^{-s_{24}t}\right),
	\end{equation}
	\begin{equation}
		\lambda_3(C_{e^{tL}}) = \frac{1}{2} \left(1+e^{-s_{12}t}-e^{-s_{23}t}-e^{-s_{24}t}\right),
	\end{equation}
	\begin{equation}
		\lambda_4(C_{e^{tL}}) = \frac{1}{2} \left(1+e^{-s_{12}t}+e^{-s_{23}t}+e^{-s_{24}t}\right).
	\end{equation}
\end{subequations}
Depending on actual values of $w_i$, the smallest eigenvalue of $C_{e^{tL}}$ may change sign and monotonicity. It is then possible for the semigroup to exhibit a mixed behavior:
\begin{enumerate}
	\item it may be CP for all $t \geqslant 0$, when $\lambda_{\mathrm{min}} (C_{e^{tL}})$ is everywhere non-negative; exemplary plot regarding such situation is shown in fig. \ref{fig:AlwaysCP},
	
	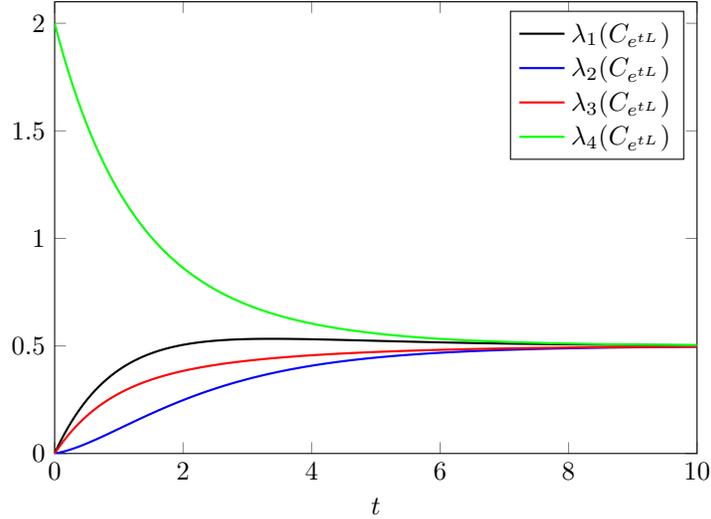
\begin{figure}[h!]
	\centering
	\begin{tikzpicture}
	\begin{axis}[
			xmin = 0, xmax = 10,
			ymin = 0, ymax = 2.1,
			width = 0.8\textwidth,
			height = 0.6\textwidth,
			xlabel = {$t$}
			]
			\addplot[
				black,
				domain = 0:10,
				samples = 200,
				smooth,
				thick,
			] {(1 - exp(-x * 0.7) - exp(-x * 1.1) + exp(-x * 0.5))/2};
			
			\addplot[
				blue,
				domain = 0:10,
				samples = 200,
				smooth,
				thick,
			] {(1 - exp(-x * 0.7) + exp(-x * 1.1) - exp(-x * 0.5))/2};
			
			\addplot[
				red,
				domain = 0:10,
				samples = 200,
				smooth,
				thick,
			] {(1 + exp(-x * 0.7) - exp(-x * 1.1) - exp(-x * 0.5))/2};

			\addplot[
				green,
				domain = 0:10,
				samples = 200,
				smooth,
				thick,
			] {(1 + exp(-x * 0.7) + exp(-x * 1.1) + exp(-x * 0.5))/2};
			
			\legend{
			{$\lambda_1 (C_{e^{tL}})$},
			{$\lambda_2 (C_{e^{tL}})$},
			{$\lambda_3 (C_{e^{tL}})$},
			{$\lambda_4 (C_{e^{tL}})$}
			}
	\end{axis}
	\end{tikzpicture}\caption{Spectrum of Choi matrix $C_{e^{tL}}$ as function of $t$ for parameters $w_1 = 0.3$, $w_2 = 0.4$, $w_3 = 0.7$, $w_4 = 0.1$. All eigenvalues are non-negative, i.e.~semigroup is always CP.}
	\label{fig:AlwaysCP}
	\end{figure}
	
	\item it may be decomposable (with both CP and coCP parts non-zero) for all $t \geqslant 0$, when $\lambda_{\mathrm{min}} (C_{e^{tL}}) < 0$ everywhere; see fig. \ref{fig:AlwaysDec},
	
	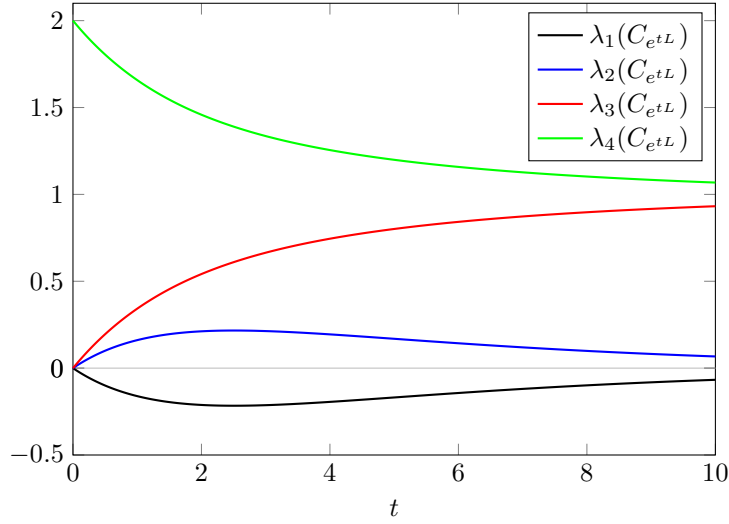
\begin{figure}[h!]
	\centering
	\begin{tikzpicture}
	\begin{axis}[
			xmin = 0, xmax = 10,
			ymin = -0.5, ymax = 2.1,
			width = 0.8\textwidth,
			height = 0.6\textwidth,
			xlabel = {$t$},
			extra y ticks = {0},
			extra tick style={grid=major},
			]
			\addplot[
				black,
				domain = 0:10,
				samples = 200,
				smooth,
				thick,
			] {(- exp(-x * 0.2) + exp(-x * 0.7))/2};
			
			\addplot[
				blue,
				domain = 0:10,
				samples = 200,
				smooth,
				thick,
			] {( exp(-x * 0.2) - exp(-x * 0.7))/2};
			
			\addplot[
				red,
				domain = 0:10,
				samples = 200,
				smooth,
				thick,
			] {(2 - exp(-x * 0.2) - exp(-x * 0.7))/2};

			\addplot[
				green,
				domain = 0:10,
				samples = 200,
				smooth,
				thick,
			] {(2 + exp(-x * 0.2) + exp(-x * 0.7))/2};
			
			\legend{
			{$\lambda_1 (C_{e^{tL}})$},
			{$\lambda_2 (C_{e^{tL}})$},
			{$\lambda_3 (C_{e^{tL}})$},
			{$\lambda_4 (C_{e^{tL}})$}
			}
	\end{axis}
	\end{tikzpicture}\caption{Spectrum of Choi matrix $C_{e^{tL}}$ as function of $t$ for parameters $w_1 = w_2 = 0$, $w_3 = 0.2$, $w_4 = 0.7$. One eigenvalue remains negative for all $t > 0$, i.e.~a semigroup is decomposable, yet never CP (except for $t=0$).}
	\label{fig:AlwaysDec}
	\end{figure}
	
	\item and finally, it can be decomposable in some interval $(0, t_0]$ and then become CP for $t \geqslant t_0$, i.e.~it may be asymptotically CP, as presented in fig.~\ref{fig:AsymCP}.
	
	\begin{figure}[h!]
	\centering
	\begin{tikzpicture}
	\begin{axis}[
			xmin = 0, xmax = 10,
			ymin = -0.3, ymax = 2.1,
			width = 0.8\textwidth,
			height = 0.6\textwidth,
			xlabel = {$t$},
			extra y ticks = {0},
			extra y tick style={grid=major},
			extra x ticks = {3.79},
			extra x tick label = {$t_0$},
			every extra x tick/.style={tick label style={fill=none, anchor=south east}}
			]
			\addplot[
				black,
				domain = 0:10,
				samples = 200,
				smooth,
				thick,
			] {(1 - exp(-x * 0.13) - exp(-x * 0.23) + exp(-x * 0.93))/2};
			
			\addplot[
				blue,
				domain = 0:10,
				samples = 200,
				smooth,
				thick,
			] {(1 - exp(-x * 0.13) + exp(-x * 0.23) - exp(-x * 0.93))/2};
			
			\addplot[
				red,
				domain = 0:10,
				samples = 200,
				smooth,
				thick,
			] {(1 + exp(-x * 0.13) - exp(-x * 0.23) - exp(-x * 0.93))/2};

			\addplot[
				green,
				domain = 0:10,
				samples = 200,
				smooth,
				thick,
			] {(1 + exp(-x * 0.13) + exp(-x * 0.23) + exp(-x * 0.93))/2};
			
			\addplot [dashed, latex-latex, samples=2, domain=-42.5:42.5] (3.79,x) node [pos=0.1, anchor=north, font=\footnotesize, sloped] {$x=3.79$};
			
			\legend{
			{$\lambda_1 (C_{e^{tL}})$},
			{$\lambda_2 (C_{e^{tL}})$},
			{$\lambda_3 (C_{e^{tL}})$},
			{$\lambda_4 (C_{e^{tL}})$}
			}
	\end{axis}
	\end{tikzpicture}\caption{Spectrum of Choi matrix $C_{e^{tL}}$ as function of $t$ for parameters $w_1 = 0.1$, $w_2 = 0.03$, $w_3 = 0.2$, $w_4 = 0.9$. The smallest eigenvalue $\lambda_{\mathrm{min}}(C_{e^{tL}})$ changes sign in neighborhood of $t_0 \approx 3.79$ and remains positive for all $t > t_0$ i.e.~a semigroup is asymptotically CP.}
	\label{fig:AsymCP}
	\end{figure}
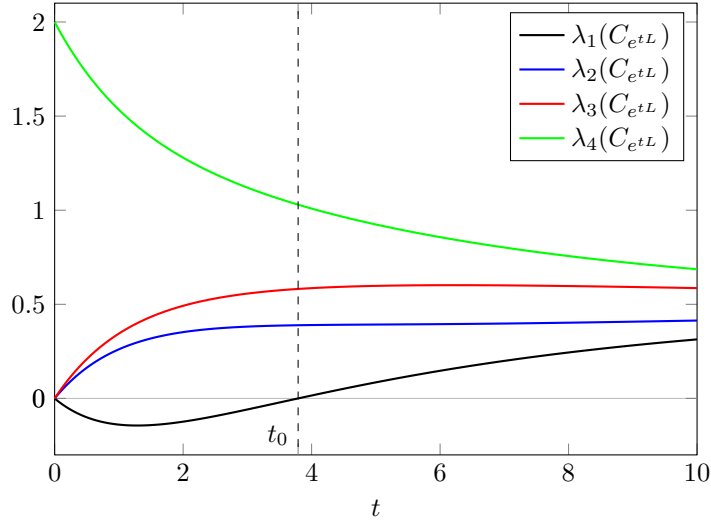
\end{enumerate}

\subsection{Time-dependent commutative Lindbladian in \texorpdfstring{$\mathbf{d=3}$}{d = 3}}
\label{sec:ExampleCommLind}

Our second example concerns a simple time-dependent Lindbladian over algebra $\matr{3}$ which we choose as
\begin{equation}
	L_t = g(t) L, \quad \text{where} \, \, g(t) = e^{-t}(1+\sin{\omega t}),
\end{equation}
and $L = \epsilon_1 M + \epsilon_2 N$ is constant and given as in \eqref{eq:LtDecomposition} and \eqref{eq:NtDecomposition}, however lacking commutator terms; $\epsilon_1, \epsilon_2, \omega \geqslant 0$ are dimensionless parameters. Just as earlier, we choose a diagonal matrix $[\omega_{jk}]$, this time of a form
\begin{equation}
	[\omega_{jk}] = \sum_{i=1}^{3} E_{ii} \otimes E_{ii} = \operatorname{diag}{\{1,\,0,\,0,\,0,\,1,\,0,\,0,\,0,\,1\}},
\end{equation}
which in result yields
\begin{equation}
	[\eta_{\mu\nu}] = \frac{1}{6} \operatorname{diag}{\{5,\,5,\,2,\,1,\,1,\,-2\}} \oplus \frac{1}{6} \left(\begin{array}{cc} -1 & \sqrt{3} \\ \sqrt{3} & 5 \end{array}\right).
\end{equation}
Matrix $[\gamma_{\mu\nu}]$ which defines the part $M$ of the generator is simply chosen to be identity, $\gamma_{\mu\nu} = \delta_{\mu\nu}$. Note, that function $f$ is always non-negative so positive semidefiniteness of matrices $[\gamma_{\mu\nu}]$ and $[\omega_{jk}]$ cannot be spoiled. The Hilbert-Schmidt orthonormal basis $\{F_{i}\}_{i=1}^{9}$ spanning $\matr{3}$ consists of Gell-Mann matrices (up to normalization); see Appendix \ref{subsect:HSbasis} for details. After evaluations, we obtain the action of maps $M$ and $N$,
\begin{subequations}
	\begin{align}\label{eq:Example2M}
		M(\rho) &= \sum_{\mu=1}^{8} \left( F_\mu \rho F_\mu - \frac{1}{2}\acomm{F_{\mu}^{2}}{\rho}\right) \\
		&= \left( \begin{array}{ccc} -2 \rho_{11} + \rho_{22} + \rho_{33} & -3 \rho_{12} & -3\rho_{13} \\ -3 \rho_{21} & \rho_{11} - 2\rho_{22} + \rho_{33} & -3\rho_{23} \\ -3 \rho_{31} & -3\rho_{32} & \rho_{11} + \rho_{22} -2\rho_{33} \end{array}\right), \nonumber
	\end{align}
	\begin{align}
		N(\rho) &= \sum_{\mu,\nu=1}^{8} \eta_{\mu\nu}\left( F_\mu \rho F_\mu - \frac{1}{2}\acomm{F_{\mu}^{2}}{\rho}\right) \\
		&= \frac{1}{12}\left( \begin{array}{ccc} 6(-2\rho_{11} + \rho_{22}+\rho_{33}) & 4\rho_{21}-7\rho_{12} & 4\rho_{31}-19\rho_{13} \\ 4\rho_{12}-7\rho_{21} & 6(\rho_{11}-\rho_{22}) & 2(2\rho_{32}-5\rho_{23}) \\ 4\rho_{13}-19\rho_{31} & 2( 2\rho_{23} - 5\rho_{32}) & 6(\rho_{11}-\rho_{33}) \end{array}\right). \nonumber
	\end{align}
\end{subequations}
Notice that generator $L_t$ satisfies commutativity condition $\comm{L_t}{L_s} = 0$ for any two chosen $t,s \in \reals_+$. This convenient property implies a particularly simple, formal expression for $\Lambda_t$,
\begin{equation}
	\Lambda_t = \exp{\int\limits_{0}^{t} L_{t^\prime} dt^\prime} = e^{f(t) L},
\end{equation}
where 
\begin{equation}
	f(t) = \int\limits_{0}^{t} g(t^\prime) dt^\prime = 1 - e^{-t} + \frac{1}{1+\omega^2}(\omega - \omega e^{-t}\cos{\omega t} - e^{-t}\sin{\omega t}).
\end{equation}
Map $\Lambda_t$ is then defined by its action, $\rho_t = \Lambda_t([\rho_{ij}])$, for explicit matrix elements
\begin{subequations}
	\begin{equation}
		\rho_{11}(t) = p_1 (t) \rho_{11} + p_2 (t) \rho_{22} + p_2(t)\rho_{33},
	\end{equation}
	\begin{equation}
		\rho_{22}(t) = p_2 (t) \rho_{11} + s_1 (t)\rho_{22} + s_2 (t) \rho_{33},
	\end{equation}
	\begin{equation}
		\rho_{33}(t) = p_2 (t) \rho_{11} + s_2 (t) \rho_{22} + s_1 (t) \rho_{33},
	\end{equation}
	\begin{equation}
		\rho_{21}(t) = q_1 (t) \rho_{12} + q_2 (t) \rho_{21}, \quad \rho_{12}(t) = q_2 (t) \rho_{12} + q_1 (t) \rho_{21},
	\end{equation}
	\begin{equation}
		\rho_{31}(t) = r_1 (t)\rho_{13} + r_2 (t) \rho_{31}, \quad \rho_{13}(t) = r_2 (t)\rho_{13} + r_1 (t) \rho_{31},
	\end{equation}
	\begin{equation}
		\rho_{32}(t) = u_1 (t)\rho_{23} + u_2 (t)\rho_{32}, \quad \rho_{23}(t) = u_1 (t)\rho_{32} + u_2 (t)\rho_{23},
	\end{equation}
\end{subequations}
and functions
\begin{subequations}
	\begin{equation}
		p_1 (t) = \frac{1}{3} \left( 1 + 2e^{-\frac{3}{2}(2\epsilon_1 + \epsilon_2) f(t)} \right),
	\end{equation}
	\begin{equation}
		p_2 (t) = \frac{1}{3} \left( 1 - e^{-\frac{3}{2}(2\epsilon_1 + \epsilon_2) f(t)} \right),
	\end{equation}
	\begin{equation}
		q_{1,2} (t) = \frac{1}{2} e^{-\frac{1}{4} (12 \epsilon_1+\epsilon_2) f(t)}\left(1\mp e^{-\frac{2}{3} \epsilon_2 f(t)}\right) ,
	\end{equation}
	\begin{equation}
		r_1 (t) = \frac{1}{2} e^{-(3 \epsilon_1+\frac{23}{12}\epsilon_2) f(t)}\left(-1 + e^{-\frac{2}{3} \epsilon_2 f(t)}\right),
	\end{equation}
	\begin{equation}
		r_2 (t) = \frac{1}{2} e^{-(3 \epsilon_1+\frac{5}{4}\epsilon_2) f(t)}\left(1 + e^{-\frac{2}{3} \epsilon_2 f(t)}\right),
	\end{equation}
	\begin{equation}
		s_{1,2} (t) = \frac{1}{6} \left[2+e^{-(3 \epsilon_1+\frac{3}{2} \epsilon_2) f(t)}\left(1\pm 3 e^{\epsilon_2 f(t)}\right) \right],
	\end{equation}
	\begin{equation}
		u_{1,2}(t) = \frac{1}{2}  e^{-(3 \epsilon_1 +\frac{1}{2}\epsilon_2) f(t)} \left(1\mp e^{-\frac{2}{3} \epsilon_2 f(t)}\right).
	\end{equation}
\end{subequations}
After some effort, one can calculate the associated Choi matrix $C_{\Lambda_t}$ and its spectrum (for sake of reader's convenience we chose to avoid presenting the resulting cumbersome formulas), at least numerically for chosen values of parameters. Similar to the previous semigroup example, we had examined the time dependence of $\lambda_{\mathrm{min}}(C_{\Lambda_t})$, the smallest eigenvalue of Choi matrix, for a wide range of $\epsilon_1$, $\epsilon_2$ and $\omega$ and found the behavior of $\Lambda_t$ to be in parallel with the semigroup case, i.e.~$\Lambda_t$ may be always CP (when $\lambda_{\mathrm{min}}(C_{\Lambda_t}) \geqslant 0$, $t \geqslant 0$), always decomposable (i.e.~with coCP part non-zero, when $\lambda_{\mathrm{min}}(C_{\Lambda_t}) < 0$, $t\geqslant 0$) or asymptotically CP (when $\lambda_{\mathrm{min}}(C_{\Lambda_t}) \geqslant 0$ for all $t \geqslant t_0$), depending on parameters $\epsilon_{1,2}$. Some exemplary plots of $\lambda_{\mathrm{min}}(C_{\Lambda_t})$ are presented in fig.~\ref{fig:Example2}. Clearly, lowering the $\epsilon_1/\epsilon_2$ ratio decreases the significance of part $M$ \eqref{eq:Example2M} of the generator and pushes the dynamics from global complete positivity towards decomposability.

\begin{figure}[h!]
\centering
\begin{tikzpicture}
\begin{axis}[
		xmin = 0, xmax = 5,
		ymin = -0.05, ymax = 0.06,
		width = 0.8\textwidth,
		height = 0.6\textwidth,
		xlabel = {$t$},
		extra y ticks = {0},
		extra y tick style={grid=major},
		yticklabel style={
        /pgf/number format/fixed,
        /pgf/number format/precision=2
				},
		scaled y ticks=false,
		every extra x tick/.style={tick label style={fill=none, anchor=south east}},
		legend style={nodes={scale=0.7, transform shape}}, legend image post style={mark=*},
		legend cell align={left}
		]
		
\addplot[color=red] graphics[xmin=0,xmax=5,ymin=-0.05,ymax=0.06] {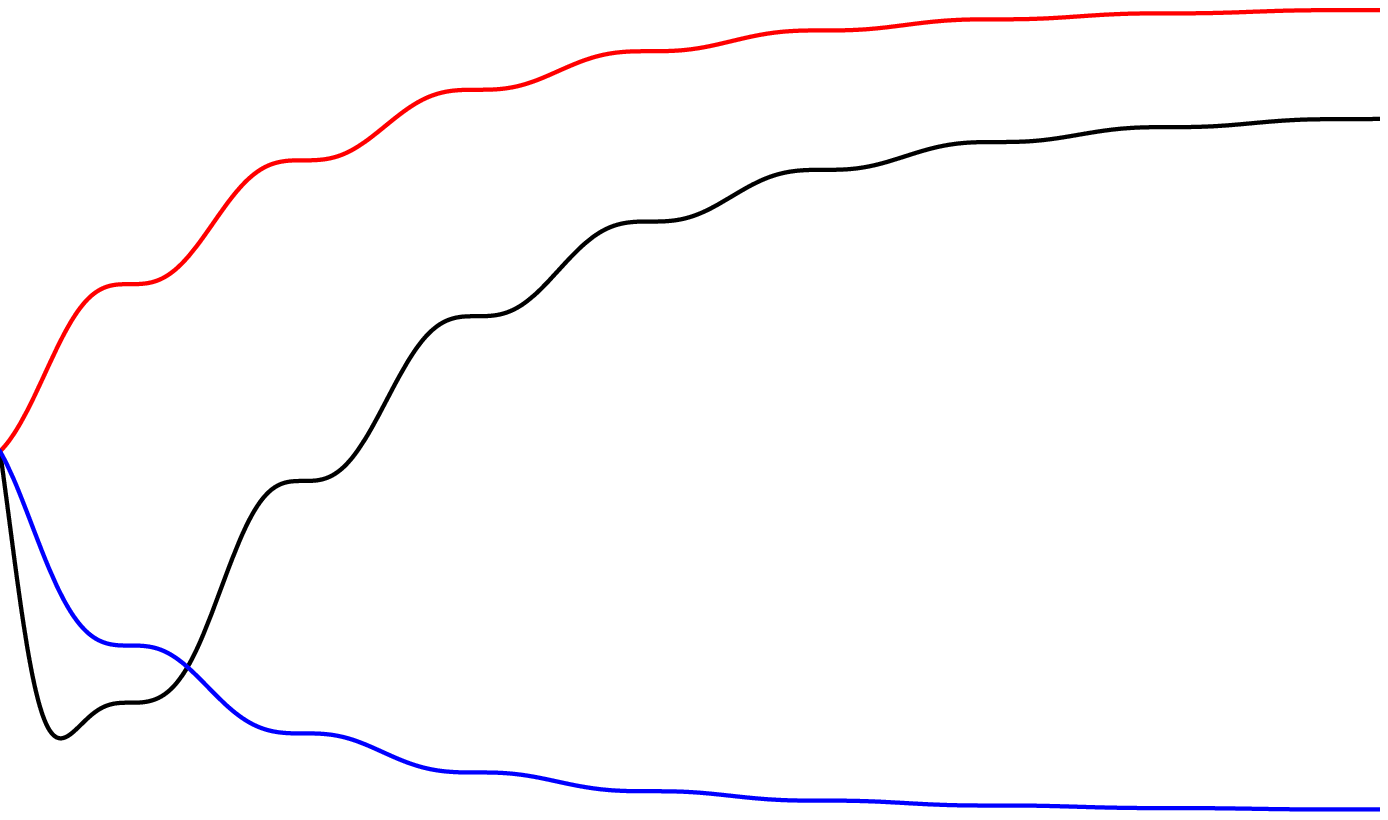};
\addplot[color=blue, draw=none] {0};
\addplot[color=black, draw=none] {0};

\legend{
{$\epsilon_1 = 0.1$, $\epsilon_2 = 0.2$ (CP)},
{$\epsilon_1 = 0.01$, $\epsilon_2 = 0.2$ (decomp.)},
{$\epsilon_1 = 0.1$, $\epsilon_2 = 1.0$ (asymp. CP)}
}
\end{axis}
\end{tikzpicture}
\caption{Time dependence of $\lambda_{\mathrm{min}}(C_{\Lambda_t})$ (numerically obtained) in example \ref{sec:ExampleCommLind} for different values of $\epsilon_1$, $\epsilon_2$ and fixed $\omega = 10$, showing three possible regimes of the dynamical map $\Lambda_t$ being CP ($\epsilon_1 = 0.1$, $\epsilon_2 = 0.2$), decomposable ($\epsilon_1 = 0.01$, $\epsilon_2 = 0.2$) and asymptotically CP ($\epsilon_1 = 0.1$, $\epsilon_2 = 1.0$).}
\label{fig:Example2}
\end{figure}

\section{Acknowledgments}

The author is indebted to Prof.~Dariusz Chru\'{s}ci\'{n}ski for discussion and to anonymous Referee for comments and suggestions which led to certain improvements in the \emph{Examples} section.

\section{Data availability}

No new data were created or analysed in this study.

\appendix

\section{Mathematical supplement}

\subsection{Hermitian Hilbert-Schmidt basis}
\label{subsect:HSbasis}

Let $\{F_{i}\}_{i=1}^{d^2}$ be the Hermitian Hilbert-Schmidt orthonormal basis in $\matrd$ subject to conditions \eqref{eq:HSonb}, i.e.
\begin{equation}
	F_{j} = F_{j}^{\hadj}, \quad \tr{F_{j}F_{k}} = \delta_{jk}, \quad \tr{F_j} = \delta_{j d^2}, \quad F_{d^2} = \frac{1}{\sqrt{d}} I.
\end{equation}
Matrices $\{F_i\}$ can be then constructed explicitly in a following way \cite{Hioe1981,Kimura2003}. Let again $E_{jk}$ denote matrix units, i.e.~they contain 1 in position $(j,k)$ and 0s elsewhere. Let us define matrices $W_{kj}^{d}, K_{k}^{d} \in \matrd$ such that
\begin{equation}
	W_{jk} = \begin{cases} \frac{1}{\sqrt{2}} \left(E_{jk}+E_{kj}\right), \quad \text{for } k<j, \\ -\frac{i}{\sqrt{2}}\left(E_{jk}-E_{kj}\right), \quad \text{for } k>j, \end{cases}
\end{equation}
such that $j,k \in \{1, \, ... \, , \, d^2 - 1\}$, $j\neq k$, as well as
\begin{equation}
	K_{k} = \frac{1}{\sqrt{k(k+1)}} \left( \sum_{j=1}^{k} E_{jj} - k E_{k+1,k+1} \right),
\end{equation}
where $k \in \{1, \, ... \, , \, d - 1\}$. Then, the set $\{W_{jk}, \, K_{k}, \, \frac{1}{\sqrt{d}} I_d\}$ contains $d^2$ matrices and is orthonormal (with respect to Hilbert-Schmidt inner product) and complete, being a basis of $\matrd$. Its elements are then labeled $F_i$ for $1\leqslant i \leqslant d^2$. Matrices $W_{jk}$ are either \emph{symmetric off-diagonal} or \emph{antisymmetric} and matrices $K_k$ are \emph{diagonal and of zero trace}. By simple counting, there is then exactly $\frac{1}{2}d(d-1)$ of both symmetric off-diagonal and antisymmetric matrices and $d$ diagonal matrices (including $F_{d^2} = \frac{1}{\sqrt{d}}I$).

One then introduces the so-called \emph{structure constants} $f_{ijk}$ and $g_{ijk}$, which respectively define the commutation and anticommutation relations amongst matrices $F_i$,
\begin{equation}
	\comm{F_i}{F_j} = \sum_{k=1}^{d^2-1} f_{ijk} F_k, \quad \acomm{F_i}{F_j} = \sum_{k=1}^{d^2-1} g_{ijk} F_k ,
\end{equation}
being defined as
\begin{equation}
	f_{ijk} = \tr{F_{k}\comm{F_i}{F_j}}, \quad g_{ijk} = \tr{F_{k}\acomm{F_i}{F_j}}.
\end{equation}
It is worth noting that structure constants characterize $\matrd$ as a Lie algebra. These allow us to derive a following composition rule
\begin{equation}
	F_a F_b = \sum_{c=1}^{d^2} \xi_{abc} F_{c}
\end{equation}
for coefficients $\xi_{ijk} = \frac{1}{2}(f_{ijk} + g_{ijk}) = \tr{F_i F_j F_k}$.

\subsubsection{Cases \texorpdfstring{$d = 2, 3$}{d = 2, 3}}

When $d = 2$, matrices $F_i$ are proportional to usual Pauli matrices:
\begin{equation}
	F_1 = \frac{1}{\sqrt{2}}\left(\begin{array}{cc} 0 & 1 \\ 1 & 0 \end{array}\right), \quad F_2 = \frac{1}{\sqrt{2}}\left(\begin{array}{cc} 0 & -i \\ i & 0 \end{array}\right), \quad F_3 = \frac{1}{\sqrt{2}}\left(\begin{array}{cc} 1 & 0 \\ 0 & -1 \end{array}\right),
\end{equation}
and $F_4 = \frac{1}{\sqrt{2}} I$. For $d=3$ instead, resulting matrices take the form
\begin{align}
	&F_1 = \frac{1}{\sqrt{2}}\left(\begin{array}{ccc} 0 & 1 & 0 \\ 1 & 0 & 0 \\ 0 & 0 & 0 \end{array}\right), \quad F_2 = \frac{1}{\sqrt{2}}\left(\begin{array}{ccc} 0 & 0 & 1 \\ 0 & 0 & 0 \\ 1 & 0 & 0 \end{array}\right), \\
	&F_3 = \frac{1}{\sqrt{2}}\left(\begin{array}{ccc} 0 & 0 & 0 \\ 0 & 0 & 1 \\ 0 & 1 & 0 \end{array}\right), \quad F_4 = \frac{1}{\sqrt{2}}\left(\begin{array}{ccc} 0 & -i & 0 \\ i & 0 & 0 \\ 0 & 0 & 0 \end{array}\right), \nonumber \\
	&F_5 = \frac{1}{\sqrt{2}}\left(\begin{array}{ccc} 0 & 0 & -i \\ 0 & 0 & 0 \\ i & 0 & 0 \end{array}\right), \quad F_6 = \frac{1}{\sqrt{2}}\left(\begin{array}{ccc} 0 & 0 & 0 \\ 0 & 0 & -i \\ 0 & i & 0 \end{array}\right), \nonumber \\
	&F_7 = \frac{1}{\sqrt{2}}\left(\begin{array}{ccc} 1 & 0 & 0 \\ 0 & -1 & 0 \\ 0 & 0 & 0 \end{array}\right), \quad F_8 = \frac{1}{\sqrt{6}}\left(\begin{array}{ccc} 1 & 0 & 0 \\ 0 & 1 & 0 \\ 0 & 0 & -2 \end{array}\right) \nonumber
\end{align}
and $F_9 = \frac{1}{\sqrt{3}}I$, i.e.~they are proportional to Gell-Mann matrices.

\subsection{Vectorization and matrixization}
\label{subsect:Vectorization}

Recall, that $\matrd$ is isomorphically identified with $\complexes^{d^2}$ and $B(\matrd)$ with $\matr{d^2}$. It is then very common and convenient to utilize these identifications in order to represent matrices as (column) vectors and linear maps on $\matrd$ as matrices of size $d^2$.

Every bijection $\matrd \to \complexes^{d^2}$ defines so-called \emph{vectorization} scheme \cite{Miszczak2011,Bengtsson2017}. A convenient vectorization, which we here denote by $\mathbf{vec}$, is the one given as the operation of \emph{flattening} of a matrix -- namely, for a matrix $[m_{ij}] \in \matrd$ we define a unique vector $\vec{m} = \operatorname{\mathbf{vec}}{([m_{ij}])}\in\complexes^{d^2}$ via \cite{Bengtsson2017}
\begin{equation}\label{eq:TheVec}
	\vec{m} = \operatorname{\textbf{vec}}{([m_{ij}])} = ( m_{11}, \, m_{12}, \, ... \, ,\, m_{1d}, \, m_{21}, \, m_{22}, \, ...\, , \, m_{dd} )^{\transpose},
\end{equation}
i.e.~by putting rows of $[m_{ij}]$ one behind another, or in a \emph{lexicographic order}. We remark here, that the convention of vectorization we use in this article is by no means universal. For example, some authors prefer the matrix flattening not in a row-by-row manner, but rather in column-by-column manner, which is sometimes called a \emph{reshaping}. For details, see \cite{Miszczak2011} and references within. The inverse operation $\mathbf{vec}^{-1} : \complexes^{d^2} \to \matrd$ reforms vectors back into matrices by splitting them into $d$-tuples and stacking one behind the other; such operation is sometimes called \emph{matrixization}. Every linear map $T$ on $\matrd$ then admits a unique representation as a matrix $\hat{T}\in\matr{d^2}$ in such a way, that for any $m\in\matrd$, matrix $T(m)$ is identified with $\hat{T}\vec{m}$, i.e.~$T(m) = \operatorname{\mathbf{vec}^{-1}}{(\hat{T}\vec{m})}$.

\subsection{Linear maps on matrix algebra}

\subsubsection{Operator-sum representation}
\label{subsect:OperatorSumRep}

Let $T : \matr{n} \to \matr{m}$ be linear. Then, there exist two nonunique, finite families of matrices $\{A_i\},\{B_i\}\in\matr{m, n}$ such that action of $T$ on any $a\in\matr{n}$ can be expressed as
\begin{equation}\label{eq:OSR}
	T(a) = \sum_{i} A_i a B_{i}^{\hadj},
\end{equation}
where it is customary to put the Hermitian conjugation of matrix $B_i$. Form \eqref{eq:OSR} is called the \emph{operator-sum representation} of $T$. For example, any Hermiticity preserving map possesses a form
\begin{equation}
	T(a) = \sum_{i} \lambda_i A_i a A_{i}^{\hadj}
\end{equation}
for some family of matrices $\{A_i\}$ and real coefficients $\lambda_i$ \cite{Pillis1967}. If in addition all $\lambda_i \geqslant 0$, then $T$ is completely positive.

Assume $T$ is an endomorphism over $\matrd$. Expanding matrices $A_i$, $B_i$ in basis $\{F_i\}$ one quickly checks that \eqref{eq:OSR} can be equivalently expressed as
\begin{equation}\label{eq:OSRF}
	T(a) = \sum_{i,j=1}^{d^2} t_{ij} F_i a F_j
\end{equation}
for some coefficients $t_{ij}\in\complexes$. Then, we easily see that $T$ is Hermiticity preserving if and only if $[t_{ij}]$ is Hermitian and CP if and only if $[t_{ij}]\geqslant 0$. We have a following

\begin{proposition}
\label{prop:GeneralMapF}
Matrix $[t_{ij}]\in\matr{d^2}$ in decomposition \eqref{eq:OSRF} may be computed as
\begin{equation}\label{eq:tCoeffFormula}
	t_{ij} = \tr{\left[(F_i \otimes \overline{F_k})^{\hadj} \hat{T}\right]},
\end{equation}
where $\hat{T} \in \matr{d^2}$ is a \emph{matricial representation} of $T$ under vectorization scheme elaborated in section \ref{subsect:Vectorization}.
\end{proposition}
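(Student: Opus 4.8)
The plan is to pin down the matricial representation $\hat T$ of $T$ explicitly in terms of the unknown coefficients $t_{ij}$, and then recover $t_{ij}$ by projecting $\hat T$ onto a convenient orthonormal basis of $\matr{d^2}$.

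First I would invoke the standard vectorization identity for the row-by-row (lexicographic) scheme \eqref{eq:TheVec}: for any $A,B,X\in\matrd$ one has $\operatorname{\mathbf{vec}}(AXB) = (A\otimes B^{\transpose})\operatorname{\mathbf{vec}}(X)$. Applied to the elementary maps $a\mapsto F_i a F_j$ occurring in \eqref{eq:OSRF}, this says precisely that the matricial representation of $a\mapsto F_i a F_j$ is $F_i\otimes F_j^{\transpose}$. Since every $F_j$ is Hermitian we have $F_j^{\transpose} = \overline{F_j}$, so by linearity of the matrixization map $T\mapsto\hat T$ we obtain
\[
    \hat T = \sum_{i,j=1}^{d^2} t_{ij}\, F_i\otimes\overline{F_j}.
\]

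Next I would show that $\{F_i\otimes\overline{F_j}\}_{i,j=1}^{d^2}$ is an orthonormal basis of $\matr{d^2}\simeq\matrd\otimes\matrd$ for the Hilbert--Schmidt inner product. Factoring the trace over the tensor product gives $\hsiprod{F_i\otimes\overline{F_j}}{F_k\otimes\overline{F_l}} = \tr{F_i^{\hadj}F_k}\,\tr{(\overline{F_j})^{\hadj}\overline{F_l}}$; the first factor equals $\delta_{ik}$ by \eqref{eq:HSonb}, and the second equals $\overline{\tr{F_j F_l}} = \delta_{jl}$, again using Hermiticity of the $F_j$. As there are $d^4 = \dim\matr{d^2}$ such matrices, they form an orthonormal basis. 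Projecting the expansion of $\hat T$ onto this basis then yields $t_{ij} = \hsiprod{F_i\otimes\overline{F_j}}{\hat T} = \tr{\left[(F_i\otimes\overline{F_j})^{\hadj}\hat T\right]}$, which is formula \eqref{eq:tCoeffFormula}.

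The only delicate point is the conjugate/transpose bookkeeping tied to the specific vectorization convention: with the row-major choice \eqref{eq:TheVec} the elementary map $a\mapsto F_i a F_j$ becomes $F_i\otimes F_j^{\transpose}$, whereas under a column-major convention the two tensor legs swap and one would instead get $F_j^{\transpose}\otimes F_i$, so consistency with \eqref{eq:TheVec} must be maintained throughout. Once the vectorization identity and the Hermiticity substitution $F_j^{\transpose}=\overline{F_j}$ are in place, the rest is a routine orthogonal projection argument.
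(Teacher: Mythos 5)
Your proof is correct and follows essentially the same route as the paper's: the vectorization identity $\operatorname{\mathbf{vec}}(AaB)=(A\otimes B^{\transpose})\operatorname{\mathbf{vec}}(a)$, the expansion $\hat T=\sum_{i,j}t_{ij}\,F_i\otimes F_j^{\transpose}$, orthonormality of the product basis, and projection via the Hilbert--Schmidt inner product. Your explicit verification of orthonormality and the remark on the row-major versus column-major convention are merely details the paper leaves implicit.
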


\begin{proof}
It may be shown \cite{Miszczak2011,Bengtsson2017} that the mapping $a \mapsto AaB$, for $a,A,B\in\matrd$, can be represented under the vectorization scheme \eqref{eq:TheVec} as a matrix $A \otimes B^\transpose$ where $\otimes$ is the usual Kronecker product of matrices, i.e.
\begin{equation}
	\operatorname{\mathbf{vec}}{(AaB)} = (A\otimes B^\transpose) \vec{a}.
\end{equation}
This means that general prescription for linear map \eqref{eq:OSRF} is equivalently represented as a matrix $\hat{T}$ of size $d^2$ of a form
\begin{equation}
	\hat{T}=\sum_{i,j=1}^{d^2} t_{ij} (F_i \otimes F_{j}^{\transpose}).
\end{equation}
Notice that $\{F_{i}^{\transpose}\}$ is still a Hilbert-Schmidt orthonormal basis in space $\matrd^\transpose \simeq \matrd$, and so a set $\{F_i \otimes F_{j}^{\transpose}\}$ spans space $\matrd\otimes\matrd^\transpose \simeq \matr{d^2}$ being still a Hermitian Hilbert-Schmidt basis. This, together with Hermiticity of $F_i$ immediately implies
\begin{equation}
	t_{ij} = \hsiprod{F_i \otimes F_{j}^{\transpose}}{\hat{T}} = \tr{\left[(F_{i}\otimes\overline{F_j})^\hadj \hat{T}\right]},
\end{equation}
which is the claim.
\end{proof}

\subsubsection{Transposition map}

We grant a special attention to a transposition map, i.e.~a linear, Hermiticity and trace preserving map $\theta : \matrd \to \matrd$ acting via prescription $\theta ([a_{ij}]) = [a_{ji}]$. Let again a space $\matr{d}$ be spanned by a Hilbert-Schmidt orthonormal basis $\{F_{i}\}$ satisfying properties \eqref{eq:HSonb}. Then we have a following result:

\begin{proposition}
Let
\begin{equation}\label{eq:ThetaMatrix}
	\hat{\theta} = \operatorname{diag}{\{\theta_1, \, ... \, , \, \theta_{d^2}\}} = I_{\frac{1}{2}d(d-1)} \oplus \left(-I_{\frac{1}{2}d(d-1)}\right) \oplus I_d ,
\end{equation}
Define also a set
\begin{equation}
	\mathcal{J} = \{1+\frac{1}{2}d(d-1), \, ... \, , \, d(d-1)\}.
\end{equation}
Then, the transposition map $\theta$ admits an operator-sum representation of a form
\begin{equation}
	\theta (a) = a^{\transpose} = \sum_{i=1}^{d^2} \theta_i F_i a F_{i}
\end{equation}
for coefficients $\theta_i \in \{-1, \, 1\}$ given explicitly as
\begin{equation}
	\theta_i = \begin{cases} -1, \quad \text{for } i\in\mathcal{J}, \\ +1, \quad \text{otherwise,} \end{cases}
\end{equation}
which therefore yields
\begin{equation}\label{eq:ThetaAlternateForm}
	\theta (a) = \sum_{i=1}^{d^2} F_i a F_i - 2 \sum_{i\in \mathcal{J}} F_i a F_i .
\end{equation}
\end{proposition}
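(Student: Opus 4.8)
The plan is to verify the claimed identity $\theta(a)=\sum_{i=1}^{d^2}\theta_i F_i a F_i$ entrywise on the matrix units and extend by linearity, using two elementary facts about the Hermitian Hilbert–Schmidt basis $\{F_i\}$ of Appendix~\ref{subsect:HSbasis}. The final form \eqref{eq:ThetaAlternateForm} then follows for free by separating off the indices lying in $\mathcal{J}$.

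First I would record how the basis matrices transform under transposition. From the explicit construction, the symmetric off-diagonal matrices $W_{jk}$ ($k<j$), the traceless diagonal matrices $K_k$, and $F_{d^2}=\frac{1}{\sqrt d}I$ are all fixed by $\theta$, whereas the antisymmetric matrices $W_{jk}$ ($k>j$) change sign. By the enumeration convention of Section~\ref{sec:Preliminaries} the antisymmetric matrices are precisely those indexed by $i\in\mathcal{J}$, so $F_i^{\transpose}=\theta_i F_i$ with $\theta_i$ exactly as in the statement; equivalently, since $\theta_i^2=1$, one has $(F_i)_{mn}=\theta_i(F_i)_{nm}$ for all $m,n$. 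Second, I would invoke Parseval's identity (completeness of the HS-orthonormal basis $\{F_i\}$) applied to the matrix units $E_{\alpha\beta}$, giving $\sum_{i=1}^{d^2}\overline{(F_i)_{\alpha\beta}}\,(F_i)_{\gamma\delta}=\delta_{\alpha\gamma}\delta_{\beta\delta}$; combined with Hermiticity $\overline{(F_i)_{\alpha\beta}}=(F_i)_{\beta\alpha}$ this reads $\sum_i (F_i)_{\beta\alpha}(F_i)_{\gamma\delta}=\delta_{\alpha\gamma}\delta_{\beta\delta}$.

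With these in hand I would compute the $(m,n)$ entry of $\sum_i\theta_i F_i E_{pq} F_i$. Since $(F_iE_{pq}F_i)_{mn}=(F_i)_{mp}(F_i)_{qn}$, and using the relation $\theta_i(F_i)_{qn}=(F_i)_{nq}$ coming from the first fact, the sum collapses to $\sum_i(F_i)_{mp}(F_i)_{nq}$, which by the completeness relation equals $\delta_{pn}\delta_{mq}=(E_{qp})_{mn}$. Hence $\sum_i\theta_i F_i E_{pq} F_i=E_{qp}=\theta(E_{pq})$ for every matrix unit, and by linearity $\theta(a)=\sum_i\theta_i F_i a F_i$ for all $a\in\matrd$. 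Splitting the terms with $i\in\mathcal{J}$ (where $\theta_i=-1$) from the remaining ones (where $\theta_i=+1$) immediately yields $\theta(a)=\sum_{i}F_i a F_i-2\sum_{i\in\mathcal{J}}F_i a F_i$, which is \eqref{eq:ThetaAlternateForm}.

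There is no genuine obstacle here; the only points needing care are the index/conjugation bookkeeping when combining Hermiticity with the completeness relation, and reading the sign pattern $\theta_i$ correctly off the construction of $\{F_i\}$ so that $\mathcal{J}$ really does label the antisymmetric basis elements. As an alternative that bypasses the entrywise argument, one can apply Proposition~\ref{prop:GeneralMapF}: the matricial representation of $\theta$ under the vectorization \eqref{eq:TheVec} is the swap operator $\hat\theta=\sum_{j,k}E_{jk}\otimes E_{kj}$, so the operator-sum coefficients are $t_{ij}=\tr{[(F_i\otimes\overline{F_j})^{\hadj}\hat\theta]}=\tr{(F_i F_j^{\transpose})}=\theta_j\delta_{ij}$, giving the same conclusion.
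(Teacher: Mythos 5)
Your proof is correct, but it follows a genuinely different route from the paper's. You verify the operator-sum identity directly and entrywise: you act on matrix units $E_{pq}$, use the transpose-eigenvector property $F_i^{\transpose}=\theta_i F_i$ (read off the explicit construction and the enumeration convention, so that $\mathcal{J}$ indeed labels the antisymmetric elements) to absorb the signs, and then collapse the sum $\sum_i (F_i)_{mp}(F_i)_{nq}=\delta_{pn}\delta_{mq}$ via Parseval/completeness of the Hermitian Hilbert--Schmidt basis, obtaining $\sum_i\theta_i F_iE_{pq}F_i=E_{qp}$ and concluding by linearity; the split \eqref{eq:ThetaAlternateForm} is then immediate. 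The paper instead works through the vectorization machinery: it invokes Proposition \ref{prop:GeneralMapF}, identifies the matricial representation of $\theta$ as the SWAP matrix $\sum_{j,k}E_{jk}\otimes E_{kj}$, and computes the full coefficient matrix $\theta_{ij}=\hsiprod{F_i}{F_j^{\transpose}}=\pm\delta_{ij}$, which is essentially the alternative you sketch in your last sentences (your reuse of the symbol $\hat\theta$ for the SWAP matrix there clashes with the statement's $\hat\theta$, which denotes the diagonal sign matrix, but the content is the same). Your main argument buys self-containedness and elementarity --- it needs only orthonormality, completeness and Hermiticity of $\{F_i\}$, with no appeal to a vectorization scheme --- while the paper's route buys systematics: the general formula of Proposition \ref{prop:GeneralMapF} produces all coefficients $\theta_{ij}$, including the vanishing off-diagonal ones, without guessing the ansatz in advance, and reuses machinery employed elsewhere in the appendix.
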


\begin{proof}
From proposition \ref{prop:GeneralMapF} we know that the transposition map may be put in its operator-sum representation
\begin{equation}
	\theta (a) = \sum_{i,j=1}^{d^2} \theta_{ij} F_i a F_j
\end{equation}
for matrix $[\theta_{ij}]\in\matr{d^2}$ calculated from formula \eqref{eq:tCoeffFormula}, where $\hat{T}$ is chosen as a matricial representation of $\theta$ under the vectorization scheme. It is not difficult to show that general structure of $\hat{T}$ is
\begin{equation}
	\hat{T} = \sum_{i,j=1}^{d} E_{ij} \otimes E_{ji}
\end{equation}
where $E_{ij}$ are matrix units. $\hat{T}$ then consists of $d^2$ square blocks containing only single 1 at some location and 0s elsewhere and in fact is a permutation matrix (in literature, those are sometimes called \emph{SWAP matrices}). As an example, below we demonstrate appropriate matrices for $d=2$ and $3$:
\begin{subequations}
\begin{equation}
	\hat{T}_{2\times 2} = \left(\begin{array}{cc|cc} 1 & 0 & 0 & 0 \\ 0 & 0 & 1 & 0 \\\hline 0 & 1 & 0 & 0 \\ 0 & 0 & 0 & 1\end{array}\right),
\end{equation}
\begin{equation}
	\hat{T}_{3\times 3} = \left(
\begin{array}{ccc|ccc|ccc}
 1 & 0 & 0 & 0 & 0 & 0 & 0 & 0 & 0 \\
 0 & 0 & 0 & 1 & 0 & 0 & 0 & 0 & 0 \\
 0 & 0 & 0 & 0 & 0 & 0 & 1 & 0 & 0 \\\hline
 0 & 1 & 0 & 0 & 0 & 0 & 0 & 0 & 0 \\
 0 & 0 & 0 & 0 & 1 & 0 & 0 & 0 & 0 \\
 0 & 0 & 0 & 0 & 0 & 0 & 0 & 1 & 0 \\\hline
 0 & 0 & 1 & 0 & 0 & 0 & 0 & 0 & 0 \\
 0 & 0 & 0 & 0 & 0 & 1 & 0 & 0 & 0 \\
 0 & 0 & 0 & 0 & 0 & 0 & 0 & 0 & 1 \\
\end{array}
\right).
\end{equation}
\end{subequations}
Now, by Hermiticity of $F_i$ we have
\begin{align}
	\theta_{ij} &= \tr{\left[ (F_i \otimes \overline{F_{j}})^\hadj \hat{T} \right]} = \tr{\left[ (F_i \otimes F_{j}^\transpose) \hat{T} \right]} \\
	&= \sum_{k,l=1}^{d} \tr{\left( F_i E_{kl} \otimes F_{j}^\transpose E_{lk}\right)} = \sum_{k,l=1}^{d} \tr{F_i E_{kl}} \cdot \tr{F_{j}^\transpose E_{lk}} \nonumber \\
	&= \sum_{k,l=1}^{d} \hsiprod{F_i}{E_{kl}} \hsiprod{E_{kl}}{F_{j}^{\transpose}} = \hsiprod{F_i}{\sum_{k,l=1}^{d}\hsiprod{E_{kl}}{F_{j}^{\transpose}}E_{kl}}\nonumber \\
	&= \hsiprod{F_i}{F_{j}^{\transpose}},\nonumber
\end{align}
since canonical basis $\{E_{ij}\}$ is yet another (nonhermitian) Hilbert-Schmidt orthonormal basis. Notice that $F_{j}^{\transpose} = \pm F_j$ depending on symmetry of $F_j$ and so
\begin{equation}
	\theta_{ij} = \pm \delta_{ij}
\end{equation}
and matrix $[\theta_{ij}]$ is diagonal, $\theta_{ij} = \operatorname{diag}{\{\theta_i\}}$ for $\theta_i = \pm 1$. If $1 \leqslant i \leqslant \frac{1}{2}d(d-1)$, i.e.~$F_i$ is symmetric, we have $\theta_i = 1$; if, on the other hand $\frac{1}{2}d(d-1) + 1 \leqslant i \leqslant d^2 -d$, i.e.~$F_i$ is antisymmetric, we have $\theta_i = -1$. In the remaining case $d^2 - d + 1 \leqslant i \leqslant d^2$ the resulting diagonal matrices $F_i$ are naturally also symmetric, so we still have $\theta_i = 1$, as claimed.
\end{proof}
\begin{proposition}
The following statements hold:
\begin{enumerate}
	\item \label{TranspositionPropOne} For every (not necessarily Hermitian) Hilbert-Schmidt basis $\{G_i\}$ there exists such a Hermitian matrix $[m_{ij}]\in\matr{d^2}$ unitarily equivalent to $\hat{\theta}$ \eqref{eq:ThetaMatrix} that the mapping $a \mapsto \sum_{i,j=1}^{d^2} m_{ij} G_i a G_{j}^{\hadj}$ is a transposition.
	\item \label{TranspositionPropTwo} For every matrix $[m_{ij}] \in \matr{d^2}$ unitarily equivalent to matrix $\hat{\theta}$ \eqref{eq:ThetaMatrix} there exists such a (not necessarily Hermitian) Hilbert-Schmidt basis $\{G_i\}$ that a mapping $a \mapsto \sum_{i,j=1}^{d^2} m_{ij} G_i a G_{j}^{\hadj}$ is a transposition.
\end{enumerate}
\end{proposition}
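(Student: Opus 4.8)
The plan is to exploit the fact that passing from one Hilbert--Schmidt orthonormal basis of $\matrd$ to another is implemented by a unitary matrix, and to track how such a unitary acts on the coefficient matrix of an operator-sum representation in which the $\hadj$-conjugated basis vectors sit on the right. Applied to the transposition $\theta$, whose operator-sum representation with respect to the Hermitian basis $\{F_i\}$ has coefficient matrix $\hat{\theta}$ (established in the preceding proposition), this produces exactly the unitary-equivalence orbit of $\hat{\theta}$, which is precisely what both claims describe.

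First I would fix a second Hilbert--Schmidt basis $\{G_i\}$. Since $\{F_i\}$ and $\{G_i\}$ are both orthonormal for $\hsiprod{\cdot}{\cdot}$, there is a unitary $[u_{ij}]\in\matr{d^2}$ with $G_i=\sum_{k}u_{ik}F_k$; taking $\hadj$ and using $F_k^{\hadj}=F_k$, then inverting (both $u$ and $\overline{u}$ being unitary), I obtain the two substitution rules $F_k=\sum_i\overline{u_{ik}}G_i$ and $F_l=\sum_j u_{jl}G_j^{\hadj}$. Plugging these into $\theta(a)=\sum_{k,l}\hat{\theta}_{kl}F_k a F_l$ and collecting terms gives $\theta(a)=\sum_{i,j}m_{ij}G_i a G_j^{\hadj}$ with $[m_{ij}]=\overline{u}\,\hat{\theta}\,u^{\transpose}$. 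Because $u^{\transpose}=(\overline{u})^{-1}=(\overline{u})^{\hadj}$, this reads $[m_{ij}]=\overline{u}\,\hat{\theta}\,(\overline{u})^{\hadj}$, so $[m_{ij}]$ is Hermitian (conjugation by a unitary preserves Hermiticity of $\hat{\theta}$) and unitarily equivalent to $\hat{\theta}$ — this is statement (1). For statement (2) I would run the computation backwards: given $[m_{ij}]$ Hermitian and unitarily equivalent to $\hat{\theta}$, write $[m_{ij}]=\hat{W}\hat{\theta}\hat{W}^{\hadj}$ with $\hat{W}$ unitary, set $u=\overline{\hat{W}}$ (again unitary), and define $G_i=\sum_k u_{ik}F_k$, which is an HS orthonormal basis, non-Hermitian in general. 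By the computation just performed, $a\mapsto\sum_{i,j}m'_{ij}G_i a G_j^{\hadj}$ has coefficient matrix $m'=\overline{u}\,\hat{\theta}\,(\overline{u})^{\hadj}=\hat{W}\hat{\theta}\hat{W}^{\hadj}=m$ and equals $\theta$, as required.

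I do not anticipate a genuine obstacle; the content is essentially a reindexing of operator sums. The one place requiring care is the bookkeeping of conjugates versus transposes in the substitution rules of the first step — in particular, that it is $F_l=\sum_j u_{jl}G_j^{\hadj}$ (and not $\sum_j\overline{u_{jl}}G_j$) that must be used on the right, so that the outcome is expressed through the $G_j^{\hadj}$, which is what makes the resulting coefficient matrix conjugate-equivalent (rather than congruence-equivalent) to $\hat{\theta}$. It is also worth remarking that the basis produced in (2) is far from unique: $\hat{W}$ is determined only up to right multiplication by a unitary commuting with $\hat{\theta}$, i.e.\ by an element of $U(\tfrac12 d(d-1))\times U(\tfrac12 d(d-1))\times U(d)$, matching the freedom in the choice of $\{G_i\}$.
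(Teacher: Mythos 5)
Your proof is correct, and it is essentially the same argument as the paper's: both rest on the operator-sum representation $\theta(a)=\sum_i\theta_i F_i a F_i$ and on the fact that a unitary change of Hilbert--Schmidt basis conjugates the coefficient matrix by the corresponding unitary, yielding $[m_{ij}]$ unitarily equivalent to $\hat{\theta}$ (your $\overline{u}\,\hat{\theta}\,\overline{u}^{\hadj}$ is the paper's $U^{\hadj}\hat{\theta}U$ up to the transposed indexing convention). The careful bookkeeping of conjugates versus transposes that you flag is exactly the point where the two computations coincide, so no further changes are needed.
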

\begin{proof}
Ad (\ref{TranspositionPropOne}). Let $\{G_i\}$ be some orthonormal Hilbert-Schmidt basis. Then there exists a unitary transformation matrix $U = [u_{ij}] \in \matr{d^2}$ such that
\begin{equation}
	G_i = \sum_{j} u_{ji} F_j \quad \text{and} \quad F_i = \sum_{j} \overline{u_{ij}} G_j .
\end{equation}
Set a matrix $[m_{ij}]$ as
\begin{equation}
	[m_{ij}] = U^\hadj \hat{\theta} U, \quad m_{ij} = \sum_{kl} \theta_k \delta_{kl} \overline{u_{ki}} u_{lj},
\end{equation}
which then yields, for $a \in \matrd$,
\begin{equation}
	\sum_{ij} m_{ij} G_i a G_{j}^{\hadj} = \sum_{i} \theta_i F_i a F_i = a^\transpose
\end{equation}
after easy algebra. Ad (\ref{TranspositionPropTwo}). Analogously, let again $[m_{ij}] = U^\hadj \hat{\theta} U$ for some arbitrarily chosen unitary $U = [u_{ij}]$. Then, if one \emph{defines} $G_i = \sum_{j} u_{ji} F_j$ then immediately we have $\sum_{ij} m_{ij} G_i a G_{j}^{\hadj} = \sum_{i} \theta_i F_i a F_i = a^\transpose$ and there exists such a basis.
\end{proof}

\subsubsection{Some properties of decomposable maps}
\label{app:DecomposableMaps}

\begin{proposition}
\label{prop:ExpOfCPmap}
Let $\phi \in \cpe{\matrd}$. Then $e^\phi \in \cpe{\matrd}$ as well.
\end{proposition}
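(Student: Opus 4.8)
The plan is to deploy the power-series representation $e^{\phi}=\sum_{n=0}^{\infty}\frac{1}{n!}\,\phi^{\circ n}$, which converges in the finite-dimensional normed space $B(\matrd)$ because $\|\phi^{\circ n}\|\leqslant\|\phi\|^{n}$, and then to exploit that $\cpe{\matrd}$ is a \emph{closed convex cone stable under composition}. So first I would record two standard facts. (i) Composition stability: if $\phi,\psi\in\cpe{\matrd}$ then $\phi\circ\psi\in\cpe{\matrd}$; this is immediate from the Kraus form recalled in the Preliminaries, since composing $\phi(a)=\sum_{i}X_{i}aX_{i}^{\hadj}$ with $\psi(a)=\sum_{j}Y_{j}aY_{j}^{\hadj}$ yields $(\phi\circ\psi)(a)=\sum_{i,j}(X_{i}Y_{j})\,a\,(X_{i}Y_{j})^{\hadj}$, again of Kraus type. (ii) Closedness: by Choi's theorem $\phi\in\cpe{\matrd}$ iff $C_{\phi}\in\matr{d^2}^{+}$, the Choi--Jamio\l{}kowski map $\phi\mapsto C_{\phi}$ is a linear isomorphism $B(\matrd)\to\matr{d^2}$ hence a homeomorphism, and $\matr{d^2}^{+}$ is closed; thus $\cpe{\matrd}$ is closed in $B(\matrd)$.

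Then the argument runs as follows. For $\phi\in\cpe{\matrd}$, every iterate $\phi^{\circ n}$ with $n\geqslant 1$ lies in $\cpe{\matrd}$ by fact (i), while $\phi^{\circ 0}=\id{}$ is trivially CP. Consequently each partial sum $S_{N}=\sum_{n=0}^{N}\frac{1}{n!}\,\phi^{\circ n}$ is a non-negative linear combination of elements of the convex cone $\cpe{\matrd}$, hence $S_{N}\in\cpe{\matrd}$. Since $S_{N}\to e^{\phi}$ in norm, fact (ii) gives $e^{\phi}\in\cpe{\matrd}$.

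I do not expect a genuine obstacle here: the whole content is packaged in the two auxiliary facts, both of which are classical consequences of the Choi/Kraus/Stinespring machinery already quoted earlier in the paper, so they can simply be invoked. If one preferred to sidestep the closedness argument altogether, an alternative is to write $e^{\phi}=\lim_{N\to\infty}(\id{}+\tfrac{1}{N}\phi)^{\circ N}$, observe that each factor $\id{}+\tfrac{1}{N}\phi$ is CP as a sum of two CP maps, and conclude by composition stability alone; but the series route is the cleanest and I would present that one.
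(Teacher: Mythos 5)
Your proposal is correct and essentially matches the paper's argument: the paper proves the statement via the Euler-type limit $e^{\phi}=\lim_{n\to\infty}\left(\id{}+\tfrac{1}{n}\phi\right)^{n}$ together with closedness of the cone $\cpe{\matrd}$, which is precisely the alternative you sketch at the end, while your main power-series route is the same idea (CP preserved under composition and non-negative combinations, then pass to the limit in the closed cone) with a different approximating sequence. One small caveat: the Euler-limit route does \emph{not} dispense with the closedness argument, since passing from the CP maps $\left(\id{}+\tfrac{1}{N}\phi\right)^{\circ N}$ to their limit still requires $\cpe{\matrd}$ to be closed, exactly as the paper invokes it.
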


\begin{proof}
Recall that, since $\phi$ may be represented as a complex square matrix of size $d^2$, one can always express $e^{\phi}$ as a limit
\begin{equation}\label{eq:eToPhi}
	e^{\phi} = \lim_{n\to\infty} \left( \id{} + \frac{1}{n}\phi\right)^{n},
\end{equation}
where all maps of a form $\left(\id{} + \frac{1}{n}\phi\right)^{n}$, $n\in\naturals$, are also CP. Then, the limit also defines a CP map since the cone $\cpe{\matrd}$ is closed.
\end{proof}

\begin{proposition}
\label{prop:ExpOfcoCPmap}
Let $\varphi\in\cocpe{\matrd}$. Then $e^{\varphi}\in\dece{\matrd}$.
\end{proposition}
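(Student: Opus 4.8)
The plan is to expand $e^{\varphi}$ in its exponential series, split it into even and odd powers of $\varphi$, and show the even part is completely positive while the odd part is completely copositive; by the definition of $\dece{\matrd}$ this gives the claim.

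First I would record the elementary composition rules for the cones: a composition of two CP maps is CP, a composition of two coCP maps is CP, and a composition of a CP map with a coCP map, in either order, is coCP. All of these reduce, via $\theta^2 = \id{}$ and the definition $\cocpe{\matrd} = \theta\circ\cpe{\matrd}$, to the single fact that $\theta\circ\phi\circ\theta \in \cpe{\matrd}$ whenever $\phi\in\cpe{\matrd}$; this follows immediately by conjugating a Kraus decomposition $\phi(X) = \sum_i K_i X K_i^{\hadj}$, which yields $(\theta\circ\phi\circ\theta)(X) = \sum_i \overline{K_i}\,X\,(\overline{K_i})^{\hadj}$, again a Kraus form.

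Writing $\varphi = \theta\circ\psi$ with $\psi\in\cpe{\matrd}$ and iterating these rules (starting from $\varphi^{0} = \id{}\in\cpe{\matrd}$), one obtains $\varphi^{2k}\in\cpe{\matrd}$ and $\varphi^{2k+1}\in\cocpe{\matrd}$ for every $k\geqslant 0$. Then I would write
\begin{equation}
	e^{\varphi} = \sum_{k=0}^{\infty}\frac{\varphi^{2k}}{(2k)!} \;+\; \sum_{k=0}^{\infty}\frac{\varphi^{2k+1}}{(2k+1)!},
\end{equation}
both series converging in the finite-dimensional space $B(\matrd)$. Each partial sum of the first series is a finite nonnegative combination of CP maps, hence lies in $\cpe{\matrd}$; since that cone is closed, the sum of the first series is a CP map $\phi$. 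For the second series, write each $\varphi^{2k+1} = \theta\circ\psi_k$ with $\psi_k\in\cpe{\matrd}$; its partial sums are then $\theta$ composed with finite nonnegative combinations of the $\psi_k$, so the same closedness argument (as used for Proposition \ref{prop:ExpOfCPmap}) shows the inner series sums to a CP map $\psi$, and the second series sums to $\theta\circ\psi$. Hence $e^{\varphi} = \phi + \theta\circ\psi$ with $\phi,\psi\in\cpe{\matrd}$, i.e.\ $e^{\varphi}\in\dece{\matrd}$.

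The manipulations are routine; the step I would be most careful about is the passage from finite partial sums to the limit, which relies on closedness of the cone $\cpe{\matrd}$, together with the small but easy-to-botch bookkeeping that $\theta\circ\phi\circ\theta$ is CP when $\phi$ is, and that an even (resp.\ odd) power of a coCP map is CP (resp.\ coCP).
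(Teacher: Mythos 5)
Your proof is correct, and it takes a genuinely different route from the paper's. The paper simply writes $e^{\varphi}$ via the limit formula $e^{\varphi} = \lim_{n\to\infty}\left(\id{} + \tfrac{1}{n}\varphi\right)^{n}$ (the same formula used for Proposition \ref{prop:ExpOfCPmap}), observes that each map $\left(\id{} + \tfrac{1}{n}\theta\circ\phi\right)^{n}$ under the limit is decomposable, and concludes by closedness of the cone $\dece{\matrd}$; implicitly this rests on the fact that compositions of decomposable maps are decomposable, which in turn reduces to the same key identity you isolate, namely that $\theta\circ\phi\circ\theta$ is CP whenever $\phi$ is (via conjugated Kraus operators $\overline{K_i}$). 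You instead expand the exponential series and split it into even and odd powers, proving $\varphi^{2k}\in\cpe{\matrd}$ and $\varphi^{2k+1}\in\cocpe{\matrd}$, and then pass to the limit of partial sums using closedness of $\cpe{\matrd}$ (plus continuity of composition with $\theta$ in finite dimension). What your route buys is an \emph{explicit} decomposition $e^{\varphi} = \phi + \theta\circ\psi$ with the CP part given by the even powers and the coCP part by the odd powers, which is more informative than mere membership in the closed cone $\dece{\matrd}$; what the paper's route buys is brevity, since it reuses the limit formula and closedness argument already set up for the CP case without having to track the even/odd bookkeeping. Both arguments are sound; your careful points (closedness of the cone when passing to the limit, and the conjugation identity for $\theta\circ\phi\circ\theta$) are exactly the right places to be careful.
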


\begin{proof}
Let $\varphi = \theta \circ \phi$ for $\phi\neq 0$ completely positive (case $\phi = 0$ gives $e^{\theta\circ\phi} = \id{}$ which is trivially decomposable). Then, it suffices to express $e^{\varphi}$ by putting $\theta\circ\phi$ in place of $\phi$ in formula \eqref{eq:eToPhi} and to notice that all maps under the limit are decomposable, for all $n\in\naturals$, as is the limit itself by the fact, that $\dece{\matrd}$ is closed.
\end{proof}

\subsection{Secondary lemmas and proofs}
\label{app:AdditionalResults}

\begin{lemma}\label{lemma:OmegaAlternateForm}
Geometric tensor $\hat{\mathbf{\Omega}}$ may be re-expressed in a form
\begin{equation}\label{eq:OmegaWithTranspositions}
	\Omega_{\mu\nu}^{jk} = \hsiprod{F_{k}^{\transpose} F_\mu}{F_{\nu}^{\transpose}F_j}.
\end{equation}
\end{lemma}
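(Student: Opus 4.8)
The plan is to reduce both the definition \eqref{eq:OmegaTensor} of $\hat{\mathbf{\Omega}}$ and the claimed expression \eqref{eq:OmegaWithTranspositions} to one and the same trace of a product of four basis matrices. The structural fact driving everything is that transposition acts diagonally on the Hermitian Hilbert--Schmidt basis: by the symmetry classification of the $F_i$ together with the definition \eqref{eq:ThetaIdef} of the signs $\theta_i$, one has $F_i^\transpose = \theta_i F_i$ for every $i$. Combined with linearity of $\theta$ and the completeness relation $B = \sum_i \hsiprod{F_i}{B}\,F_i$, this yields the identity $B^\transpose = \sum_{i=1}^{d^2}\theta_i \hsiprod{F_i}{B}\,F_i$ valid for all $B\in\matrd$; it is this identity that makes the summation over $i$ in \eqref{eq:OmegaTensor} collapse.

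First I would rewrite the two structure-constant factors as Hilbert--Schmidt inner products against $F_i$. Using Hermiticity of $F_i$ and cyclicity of the trace, $\xi_{ij\mu} = \tr{F_i F_j F_\mu} = \hsiprod{F_i}{F_j F_\mu}$; using in addition the conjugation rule $\overline{\xi_{ik\nu}} = \xi_{ki\nu}$ from \eqref{eq:XiProp} and cyclicity again, $\overline{\xi_{ik\nu}} = \tr{F_k F_i F_\nu} = \tr{F_i F_\nu F_k} = \hsiprod{F_i}{F_\nu F_k}$. Substituting into \eqref{eq:OmegaTensor}, pulling the $i$-independent matrices $F_j F_\mu$ out of the sum, and then applying the identity above with $B = F_\nu F_k$,
\[
\Omega_{\mu\nu}^{jk} = \sum_{i=1}^{d^2}\theta_i\,\tr{F_i F_j F_\mu}\,\hsiprod{F_i}{F_\nu F_k} = \tr{(F_\nu F_k)^\transpose F_j F_\mu},
\]
and since transposition reverses products, $(F_\nu F_k)^\transpose = F_k^\transpose F_\nu^\transpose$, so $\Omega_{\mu\nu}^{jk} = \tr{F_k^\transpose F_\nu^\transpose F_j F_\mu}$.

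It then remains to expand the right-hand side of \eqref{eq:OmegaWithTranspositions} directly: $\hsiprod{F_k^\transpose F_\mu}{F_\nu^\transpose F_j} = \tr{(F_k^\transpose F_\mu)^\hadj F_\nu^\transpose F_j} = \tr{F_\mu^\hadj (F_k^\transpose)^\hadj F_\nu^\transpose F_j}$. Here $F_\mu^\hadj = F_\mu$, and $(F_k^\transpose)^\hadj = \overline{F_k} = F_k^\transpose$ because $F_k$ is Hermitian; hence the expression equals $\tr{F_\mu F_k^\transpose F_\nu^\transpose F_j}$, which by cyclicity of the trace is exactly $\tr{F_k^\transpose F_\nu^\transpose F_j F_\mu}$ --- the same quantity obtained in the previous step. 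This establishes \eqref{eq:OmegaWithTranspositions}.

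I do not expect a genuine obstacle; the only delicate points are bookkeeping ones, namely keeping adjoints and transposes straight (in particular the equalities $F_i^\transpose = \theta_i F_i$ and $(F_k^\transpose)^\hadj = \overline{F_k} = F_k^\transpose$ for Hermitian $F_k$) and making sure the $i$-independent factors are legitimately extracted from the sum before the completeness-type identity for $\theta$ is applied.
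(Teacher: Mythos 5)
Your proof is correct, and its skeleton matches the paper's: both reduce the sum over $i$ in \eqref{eq:OmegaTensor} to a transposition applied to $F_\nu F_k$, arriving at $\Omega_{\mu\nu}^{jk} = \tr{\left[(F_\nu F_k)^{\transpose} F_j F_\mu\right]}$, and then finish with routine trace and adjoint manipulations. Where you differ is in how the key collapse is justified. The paper identifies the map $a \mapsto \sum_i \theta_i \hsiprod{F_i}{a} F_i$ with $\id{\matrd} - 2\proj{\mathrm{as.}}$, transfers it via the isomorphism $\zeta : \complexes^d \otimes \complexes^d \to \matrd$ to the operator $\tfrac{1}{2}\bigl(\id{} - V\bigr)$ built from the swap operator $V$, and checks that $\zeta \circ V \circ \zeta^{-1}$ is the transposition. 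You instead observe directly that transposition acts diagonally on the Hermitian basis, $F_i^{\transpose} = \theta_i F_i$ (immediate from the symmetric/antisymmetric/diagonal classification and the definition \eqref{eq:ThetaIdef}), and combine this with the completeness relation $B = \sum_i \hsiprod{F_i}{B} F_i$ and linearity of $\theta$ to get $B^{\transpose} = \sum_i \theta_i \hsiprod{F_i}{B} F_i$. Your route is more elementary and shorter, avoiding the tensor-product and swap-operator machinery entirely; the paper's route is heavier but makes the structural link between the sign pattern $\theta_i$, the antisymmetric subspace and the swap operator explicit. Your handling of the conjugation, in particular $\overline{\xi_{ik\nu}} = \xi_{ki\nu} = \tr{F_i F_\nu F_k}$ and $(F_k^{\transpose})^{\hadj} = \overline{F_k} = F_k^{\transpose}$ for Hermitian $F_k$, is accurate, so there is no gap.
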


\begin{proof}
Recall that the operator-sum representation \eqref{eq:ThetaRep} of transposition map may be rearranged in a form of formula \eqref{eq:ThetaAlternateForm},
\begin{equation}
	\theta(a) = \sum_{i=1}^{d^2} F_i a F_i - 2 \sum_{i\in\mathcal{J}} F_i a F_i,
\end{equation}
where $\mathcal{J} = \{1+d(d-1)/2, \, ... \, , \, d(d-1)\}$ enumerates the \emph{antisymmetric} part of basis, i.e.~a linear span of $\{F_i : i\in\mathcal{J}\}$ is the subspace $\matrd_{\mathrm{as.}}$ of all antisymmetric matrices in $\matrd$. This fact implies that $\Omega_{\mu\nu}^{jk}$ may be, after using cyclicity of trace, put in a form
\begin{align}\label{eq:OmegaProjectionForm}
	\Omega_{\mu\nu}^{jk} &= \hsiprod{F_\mu F_j}{\sum_{i=1}^{d^2} \theta_i \hsiprod{F_i}{F_\nu F_k}F_i} \\
	&= \hsiprod{F_\mu F_j}{ \left( \id{\matrd} - 2 \proj{\mathrm{as.}} \right)(F_\nu F_k)} \nonumber
\end{align}
where $\id{\matrd}$ is the identity map on $\matrd$ and $\proj{\mathrm{as.}}$ is the orthogonal projection onto $\matrd_{\mathrm{as.}}$ given as
\begin{equation}
	\proj{\mathrm{as.}}(a) = \sum_{i\in\mathcal{J}} \hsiprod{F_i}{a}F_i .
\end{equation}
Let $\{e_i\}$ be a canonical basis in $\complexes^d$. By dimension count, it is easy to see that space $\matrd$ may be identified with a Hilbert space tensor product $\complexes^d \otimes \complexes^d$, with a mapping $\zeta : \complexes^d \otimes \complexes^d \to \matrd$ defined by its action on basis elements as
\begin{equation}
	\zeta(e_i \otimes e_j) = E_{ij} = \left| e_i \right\rangle \left\langle e_j \right|
\end{equation}
and then extended by linearity, being a natural bijection. Under action of $\zeta$, every vector $x = \sum_{ij} x_{ij} e_i \otimes e_j \in \complexes^d \otimes \complexes^d$ can be isomorphically represented as a matrix $[x_{ij}]\in\matrd$ and vice versa. This implies, that $\matrd_{\mathrm{as.}}$ is identified with $ \complexes^d \wedge \complexes^d$, the antisymmetric subspace of $\complexes^d \otimes \complexes^d$. In result, operator $P_{\mathrm{as.}} = \zeta^{-1} \circ \proj{\mathrm{as.}} \circ \zeta$ is the corresponding projection onto $\complexes^d \wedge \complexes^d$. We know however, that such projection may be expressed in a form
\begin{equation}
	P_{\mathrm{as.}} = \frac{1}{2} \left( \id{\complexes^d \otimes \complexes^d} - V \right),
\end{equation}
with $V$ being the \emph{swap operator} on $\complexes^d \otimes \complexes^d$ defined via
\begin{equation}
	V(x \otimes y) = y\otimes x, \quad x,y \in \complexes^d.
\end{equation}
From this, we have
\begin{equation}
	\id{\matrd} - 2 \proj{\mathrm{as.}} = \zeta\circ V \circ \zeta^{-1},
\end{equation}
which by direct check is a \emph{transposition} on $\matrd$. In result, \eqref{eq:OmegaProjectionForm} reads
\begin{equation}
	\Omega_{\mu\nu}^{jk} = \hsiprod{F_\mu F_j}{ (F_\nu F_k)^{\transpose}}
\end{equation}
which is equal to claimed form \eqref{eq:OmegaWithTranspositions} after easy manipulations.
\end{proof}

\begin{lemma}
\label{lemma:OmegaMatrixPositive}
Matrix $[\omega_{jk}(t)] \in \matr{d^2}$ given via expression
\begin{equation}
	\omega_{jk}(t) = \lim_{\epsilon\searrow 0}\frac{1}{\epsilon}  y_{jk}(t+\epsilon, t)
\end{equation}
is well-defined and positive semidefinite for all $t\in[t_1,t_2]$.
\end{lemma}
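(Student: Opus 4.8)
The plan is to reduce the statement to two facts: that for each $\epsilon>0$ the difference quotient $\tfrac1\epsilon[y_{jk}(t+\epsilon,t)]$ already lies in the cone $\matr{d^2}^{+}$, and that this cone is closed, so that \emph{once the limit is shown to exist} it is automatically positive semidefinite. The genuine work is therefore in existence and in the preliminary point of well-definedness.

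\textbf{Well-definedness.} With the Hermitian Hilbert--Schmidt basis $\{F_i\}$ fixed throughout, the maps $a\mapsto F_j a F_k$ correspond under the vectorization of section \ref{subsect:Vectorization} to the operators $F_j\otimes F_k^{\transpose}$, which by Proposition \ref{prop:GeneralMapF} form a basis of $\matr{d^2}$. Hence the operator-sum coefficient matrix $[y_{jk}(t,s)]$ of the completely positive map $Y_{t,s}$ is \emph{unique}, given explicitly by $y_{jk}(t,s)=\hsiprod{F_j\otimes F_k^{\transpose}}{\hat Y_{t,s}}$, and it depends on $(t,s)$ exactly as regularly as the chosen decomposition $V_{t,s}=X_{t,s}+\theta\circ Y_{t,s}$ itself; in particular it is continuous, and Proposition \ref{prop:Vproperties} (item \ref{item:VpropYttZero}) gives $[y_{jk}(t,t)]=0$.

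\textbf{Positivity.} For every $\epsilon>0$ the map $Y_{t+\epsilon,t}$ is completely positive, so by Choi's theorem (equivalently, the criterion stated just after \eqref{eq:OSRF}) one has $[y_{jk}(t+\epsilon,t)]\in\matr{d^2}^{+}$, whence
\[
	\tfrac1\epsilon\bigl[y_{jk}(t+\epsilon,t)\bigr]=\tfrac1\epsilon\bigl(\bigl[y_{jk}(t+\epsilon,t)\bigr]-\bigl[y_{jk}(t,t)\bigr]\bigr)\in\matr{d^2}^{+}.
\]
Thus $[\omega_{jk}(t)]$, being a limit of elements of the closed cone $\matr{d^2}^{+}$, lies in $\matr{d^2}^{+}$. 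This is the easy half, and it works verbatim once existence of the limit is secured.

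\textbf{Existence of the limit (the hard part).} It remains to see that $s\mapsto y_{jk}(s,t)$ is right-differentiable at $s=t$. Since $\Lambda_t$ solves \eqref{eq:LambdaODE} with $t\mapsto L_t$ continuous, it is $C^1$ and invertible, so $V_{t,s}=\Lambda_t\circ\Lambda_s^{-1}$ is $C^1$ in $(t,s)$; consequently the unique coefficient matrix of $V_{t+\epsilon,t}$ is $C^1$ in $\epsilon$, which is what secures the limits $g_{\mu\nu}(t)$ used in the main proof. To promote this to the separate $X$- and $Y$-contributions one exploits the \emph{rigidity of the decomposition at the diagonal}: the only way to write $V_{t,t}=\id{}$ as a sum of a CP and a coCP map is the trivial one $X_{t,t}=\id{}$, $Y_{t,t}=0$. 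Indeed, evaluating $\id{}=X_{t,t}+\theta\circ Y_{t,t}$ on a rank-one projector $|\psi\rangle\langle\psi|$ writes it as a sum of two positive semidefinite matrices, each of which must be supported on the one-dimensional range of $|\psi\rangle\langle\psi|$; hence every Kraus operator $K$ of $Y_{t,t}$ satisfies $K\psi\parallel\overline{\psi}$ for all unit $\psi$, and letting $\psi$ range over a spanning set (e.g.\ the $e_j$, then $e_1\pm e_2$, then $e_1+ie_2$) forces $K=0$, i.e.\ $Y_{t,t}=0$. Combined with the $C^1$ dependence of $V_{t,s}$, this rigidity pins down the first-order behaviour of $X_{t+\epsilon,t}$ and of $Y_{t+\epsilon,t}$ individually, so $\omega_{jk}(t)=\lim_{\epsilon\searrow0}\tfrac1\epsilon y_{jk}(t+\epsilon,t)$ exists; alternatively one simply records that a differentiable selection of the decomposition may be taken, in which case $\omega_{jk}(t)=\partial_1 y_{jk}(t,t)$ trivially exists. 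I expect exactly this regularity transfer — from the propagator to its CP and coCP parts — to be the only delicate point; the remaining ingredients (uniqueness of the coefficient matrix, positivity of the quotients, closedness of $\matr{d^2}^{+}$) are routine.
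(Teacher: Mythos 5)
Your positivity half is correct and is in substance the paper's own argument: the paper fixes $\vec{v}\in\complexes^{d^2}$, notes $f_{\vec{v}}(t+\epsilon,t)=\iprod{\vec{v}}{[y_{jk}(t+\epsilon,t)]\vec{v}}\geqslant 0$ and $f_{\vec{v}}(t,t)=0$ (Proposition \ref{prop:Vproperties}), and concludes the derivative at the diagonal is nonnegative; your version via closedness of the cone $\matr{d^2}^{+}$ applied to the nonnegative difference quotients is equivalent and, if anything, cleaner. The uniqueness of the coefficient matrix $[y_{jk}(t,s)]$ in the fixed $F$-basis representation is also fine.

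The gap is in your ``existence of the limit'' step. Your rigidity argument correctly re-proves that $\id{}=X+\theta\circ Y$ with $X,Y$ CP forces $X=\id{}$, $Y=0$ (this is items (2)--(3) of Proposition \ref{prop:Vproperties}), but the inference that this rigidity, combined with $C^1$ dependence of $V_{t,s}$, ``pins down the first-order behaviour of $X_{t+\epsilon,t}$ and $Y_{t+\epsilon,t}$ individually'' is asserted, not proved, and it is not true in the stated generality: the decomposition of $V_{t+\epsilon,t}$ is non-unique for $\epsilon>0$, and a selection that is merely \emph{continuous} (which is all the definition of D-divisibility requires) need not be right-differentiable at $\epsilon=0$ even though the sum is. Already the scalar model $\epsilon=\epsilon\cos^{2}(1/\epsilon)+\epsilon\sin^{2}(1/\epsilon)$ shows a smooth sum of two continuous, nonnegative parts vanishing at $0$ whose parts have no derivative there, and when the generator has a genuinely nonzero coCP contribution there is enough slack in the cone constraints to realize this kind of oscillation in an admissible decomposition. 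Your fallback sentence (``a differentiable selection of the decomposition may be taken'') simply assumes the missing regularity. Note the paper does the same thing implicitly: in the proof of Theorem \ref{thm:MainTheorem} the existence of the separate limits for the $x$- and $y$-coefficients is attributed to differentiability of $\Lambda_t$, and the lemma's proof writes $\omega_{jk}(t)$ as $\partial y_{jk}(\xi,t)/\partial\xi$ at $\xi=t$ without further justification; the lemma's actual content there is the positivity. So you should either state the differentiability of the chosen decomposition at the diagonal as a hypothesis (as the paper tacitly does) or supply a genuine argument for it; the rigidity observation alone does not yield it.
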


\begin{proof}
Let $\vec{v} \in \complexes^{d^2}$ and define function $f_{\vec{v}} : [t_1,t_2]^2 \to \reals$ as
\begin{equation}
	f_{\vec{v}}(t, s) = \iprod{\vec{v}}{[y_{jk}(t,s)]\vec{v}} = \sum_{j,k=1}^{d^2} y_{jk}(t,s) v_j \overline{v_k}.
\end{equation}
Since matrix $[y_{jk}(t,s)]$ was uniquely identified with a CP map $Y_{t,s}$ appearing in the propagator, it is positive semidefinite for all $t\geqslant s$, so clearly $f_{\vec{v}}(t,s) \geqslant 0$ for every $\vec{v}\in\complexes^{d^2}$ and $t\geqslant s$. Moreover, from proposition \ref{prop:Vproperties} we have $Y_{t,t} = 0$ and so $f_{\vec{v}}(t,t) = 0$. Let then $t \in [t_1, t_2]$ be arbitrary and assume indirectly, that $f_{\vec{v}}(\cdot \, , t)$ is decreasing in some interval $[t_0, \xi_0]$ for some $t_0 \geqslant t$. Then there exists $\xi \geqslant t_0$ such that $f(\xi,t_0) < f(t_0, t_0) = 0$, which is a contradiction. This yields that $\xi \mapsto f_{\vec{v}}(\xi , t)$, where $\xi \geqslant t$, must be non-decreasing for every $t\geqslant 0$. We will use this reasoning in a following computation. The formula for matrix $[\omega_{jk}(t)]$ can be rewritten as
\begin{align}
	\omega_{jk}(t) &= \lim_{\epsilon\searrow 0}\frac{1}{\epsilon}  y_{jk}(t+\epsilon, t) \\
	&= \lim_{\epsilon\searrow 0}\frac{y_{jk}(t+\epsilon, t) - y_{jk}(t,t)}{\epsilon} = \left.\frac{\partial y_{jk}(\xi, t)}{\partial \xi}\right|_{t}, \nonumber
\end{align}
since $y_{jk}(t,t) = 0$, i.e.~as a derivative wrt.~first variable of a matrix $[y_{jk}(\xi,t)]$, computed at $\xi = t$. This however yields, for every $\vec{v}\in\complexes^{d^2}$,
\begin{equation}
	\sum_{j,k=1}^{d^2} \omega_{\mu\nu}(t) v_j \overline{v_k} = \sum_{j,k=1}^{d^2} \left.\frac{\partial y_{jk}(\xi, t)}{\partial \xi}\right|_{t} v_j \overline{v_k} = \left.\frac{\partial f_{\vec{v}}(\xi,t)}{\partial \xi}\right|_{t} \geqslant 0
\end{equation}
due to demonstrated monotonicity of $f_{\vec{v}}(\cdot \, , t )$. This shows that $[\omega_{jk}(t)] \in \matrd^{+}$ for $t\in [t_1, t_2]$.
\end{proof}

\subsection{Derivation of formula (\ref{eq:NtFormula})}
\label{app:NtDerivation}
Starting with expression \eqref{eq:Nt} for $N_t$ we rewrite it by expanding the anticommutator and expressing $\eta_{\mu\nu}(t)$ as \eqref{eq:EtaMatrixOmega},
\begin{align}\label{eq:NtExpanded}
	N_t (\rho) = &\sum_{j,k=1}^{d^2} \sum_{\mu,\nu=1}^{d^2-1} \Omega_{\mu\nu}^{jk} \omega_{jk}(t) F_\mu \rho F_\nu - \frac{1}{2} \sum_{j,k=1}^{d^2}\sum_{\mu,\nu=1}^{d^2-1} \Omega_{\mu\nu}^{jk} \omega_{jk}(t) F_\nu F_\mu \rho \\
	&- \frac{1}{2} \sum_{j,k=1}^{d^2}\sum_{\mu,\nu=1}^{d^2-1} \Omega_{\mu\nu}^{jk} \omega_{jk}(t) \rho F_\nu F_\mu - i\comm{K_t}{\rho}\nonumber ,
\end{align}
where we also put back $\Omega_{\mu\nu}^{jk}$ as given in \eqref{eq:OmegaTensor} in each term. In order to reintroduce the transposition map $\theta$ into the expression, we expand the summations over $\mu$, $\nu$ up to $d^2$ and then subtract redundant terms. The first term appearing at the right hand side of equality \eqref{eq:NtExpanded} is therefore
\begin{align}\label{eq:NtExpanded2}
	&\sum_{j,k=1}^{d^2} \sum_{\mu,\nu=1}^{d^2-1} \Omega_{\mu\nu}^{jk} \omega_{jk}(t) F_\mu \rho F_\nu \\
	&= \sum_{jklm} \theta_{l} \omega_{jk}(t) \left( \sum_{\mu}\xi_{lj\mu}F_\mu - \xi_{ljd^2}F_{d^2}\right) \rho \left(\sum_{\nu} \xi_{lk\nu}F_\nu - \xi_{lkd^2}F_{d^2}\right)^{\hadj} \nonumber \\
	&= \sum_{jklm} \theta_{l} \omega_{jk}(t) \left( F_l F_j - \frac{1}{\sqrt{d}}\xi_{ljd^2}\right) \rho \left(F_l F_k - \frac{1}{\sqrt{d}}\xi_{lkd^2}\right)^{\hadj} \nonumber \\
	&= \sum_{jk} \omega_{jk}(t) \left[ (F_j \rho F_k)^\transpose - A_k F_j \rho - \rho F_k A_{j}^{\hadj} + b_{jk} \rho \right] \nonumber
\end{align}
for quantities
\begin{equation}
	A_k = \sum_{l} \theta_{l} \overline{\xi_{lkd^2}} F_l, \quad b_{jk} = \sum_{l} \theta_{l} \xi_{jld^2} \overline{\xi_{lkd^2}},
\end{equation}
where we employed composition rule \eqref{eq:FiFjCompRule} and operator sum representation \eqref{eq:ThetaRep} of transposition map $\theta$ (all ``limitless'' summation indices run from 1 up to $d^2$). Next, we utilize the fact that $[\omega_{jk}(t)]$ was a positive semi-definite matrix for all $t$, i.e.~we introduce
\begin{equation}
	\omega_{jk}(t) = \sum_\alpha c_{j\alpha}(t) \overline{c_{k\alpha}(t)},
\end{equation}
for some matrix $[c_{jk}(t)]$. This, inserted into the last line of \eqref{eq:NtExpanded2} allows to re-express it as
\begin{align}\label{eq:FirstTerm}
	\sum_{j,k=1}^{d^2} \sum_{\mu,\nu=1}^{d^2-1} &\Omega_{\mu\nu}^{jk} \omega_{jk}(t) F_\mu \rho F_\nu \\
	&= \sum_\alpha \left[ (C_{\alpha,t} \rho C_{\alpha,t}^{\hadj})^\transpose - D_t \rho - \rho D_{t}^{\hadj} + e(t) \rho \right], \nonumber
\end{align}
where we defined
\begin{equation}
	C_{\alpha,t} = \sum_j c_{j\alpha}(t) F_j, \quad D_t = \sum_{jk} \omega_{jk}(t) A_k F_j , \quad e(t) = \sum_{jk} \omega_{jk}(t) b_{jk}
\end{equation}
and applied Hermiticity of $[\omega_{jk}(t)]$ in order to get the $\rho D_{t}^{\hadj}$ term. Now, we notice that the two remaining terms at the right hand side of equality \eqref{eq:NtExpanded} have essentially the same structure and differ from the first term only by order of matrices $\rho$, $F_\mu$ and $F_\nu$ so they can be transformed by applying nearly exactly the same steps. After some easy algebra, we obtain
\begin{equation}\label{eq:SecondTerm}
	\sum_{j,k=1}^{d^2}\sum_{\mu,\nu=1}^{d^2-1} \Omega_{\mu\nu}^{jk} \omega_{jk}(t) F_\nu F_\mu \rho = \sum_\alpha \left[ C_{\alpha,t}^{\hadj} C_{\alpha,t} - D_{t}^{\hadj} - D_{t} + e(t) \right]\rho
\end{equation}
for the second term, as well as
\begin{equation}\label{eq:ThirdTerm}
	\sum_{j,k=1}^{d^2}\sum_{\mu,\nu=1}^{d^2-1} \Omega_{\mu\nu}^{jk} \omega_{jk}(t) \rho F_\nu F_\mu = \rho \sum_\alpha \left[ C_{\alpha,t}^{\hadj} C_{\alpha,t} - D_{t}^{\hadj} - D_{t} + e(t) \right]
\end{equation}
for the third one. Now, we insert \eqref{eq:FirstTerm}, \eqref{eq:SecondTerm} and \eqref{eq:ThirdTerm} back into \eqref{eq:NtExpanded} which becomes
\begin{align}
	N_t (\rho) = &-i \comm{K_t}{\rho} + \left( \sum_\alpha C_{\alpha,t} \rho C_{\alpha,t}^{\hadj}\right)^\transpose - \frac{1}{2}\comm{D_t - D_{t}^{\hadj}}{\rho} \\
	&- \frac{1}{2} \sum_\alpha \acomm{C_{\alpha,t}^{\hadj}C_{\alpha,t}}{\rho} \nonumber
\end{align}
after some effort.


\end{document}